\documentclass[12pt]{article}
\usepackage{amsmath}
\usepackage{amsfonts}
\usepackage{graphicx,psfrag,epsf}
\usepackage{enumerate}
\usepackage[usenames, dvipsnames]{color}
\usepackage{url} 
\usepackage[noblocks]{authblk}

\usepackage{adjustbox}
\usepackage{mdframed}
\usepackage{enumitem}
\usepackage{caption}
\usepackage{subcaption}
\usepackage{siunitx}
\usepackage{float}
\usepackage{booktabs}
\usepackage[normalem]{ulem}
\newcommand{\stkout}[1]{\ifmmode\text{\sout{\ensuremath{#1}}}\else\sout{#1}\fi}
\usepackage{mathtools}
\usepackage{mdwlist}
\usepackage{multirow} 
\usepackage{comment}

\usepackage{inputenc}
\usepackage{csquotes}
\usepackage[round]{natbib}
{
\begin{basedescript}{
\desclabelwidth{2em}}}
{
\end{basedescript}
}

\numberwithin{table}{section}

\usepackage{verbatim}
\usepackage{diagbox}
\usepackage{bbm}
\usepackage{algorithm}
\usepackage[noend]{algpseudocode}
\algrenewcommand\algorithmicrequire{\textbf{Input:}}

\algrenewcommand\algorithmicensure{\textbf{Output:}}

\usepackage[english]{babel}
\usepackage{amsthm}
\usepackage{mathtools}
\newtheorem{theorem}{Theorem}[section]

\usepackage{xr-hyper}
\usepackage{hyperref}
\hypersetup{
    colorlinks=true,
    linkcolor=blue,
    citecolor=blue,
    filecolor=magenta,      
    urlcolor=blue,
}

\newcommand\bzero{\mbox{\boldmath${0}$}}

\newcommand\btheta{\mbox{\boldmath${\theta}$}}

\newcommand\bC{{\bf C}}

\newcommand\bI{{\bf I}}

\newcommand\bR{{\bf R}}

\newcommand\bs{{\boldsymbol s}}

\newcommand\bT{{\bf T}}
\newcommand\bU{{\bf U}}
\newcommand\bu{{\bf u}}

\newcommand\bZ{{\bf Z}}
\newcommand\bz{{\bf z}}

\newcommand\mbR{{\mathbb R}}

\usepackage{subcaption}

\newcommand{\lowcase}[4]{
\begin{subfigure}{0.48\textwidth}
\centering
\includegraphics[width=\linewidth]{Figures/boxplots/low_dim/plots/case_#1.pdf}
\caption{$\alpha=#2$, $\nu=#3$, $\rho=#4$}
\end{subfigure}
}

\newcommand{\highcase}[4]{
\begin{subfigure}{0.48\textwidth}
\centering
\includegraphics[width=\linewidth]{Figures/boxplots/high_dim/plots/case_#1.pdf}
\caption{$\alpha=#2$, $n=#3$, $T=#4$}
\end{subfigure}
}

\usepackage{setspace}
\usepackage{tocloft}

\cftsetpnumwidth{4em}
\cftsetrmarg{5em}
\cftsetindents{subsection}{3em}{3.5em}

\AtBeginDocument{\setstretch{1.68}}

\makeatletter
\renewcommand*\env@matrix[1][\arraystretch]{%
  \edef\arraystretch{#1}%
  \hskip -\arraycolsep
  \let\@ifnextchar\new@ifnextchar
  \array{*\c@MaxMatrixCols c}}
\makeatother

\usepackage{etoolbox}
\makeatletter
\g@addto@macro\normalsize{%
  \setlength\abovedisplayskip{3.5pt}
  \setlength\belowdisplayskip{3pt}
  \setlength\abovedisplayshortskip{0pt}
  \setlength\belowdisplayshortskip{0pt}
}
\makeatother

\begin{document}

\def\spacingset#1{\renewcommand{\baselinestretch}%
{#1}\small\normalsize} \spacingset{1}

\newcommand{\blind}{0}  

\title{\bf Spatial Extremes at Scale: A Case Study of Surface Skin Temperature and Heat Risk in the United States}

\if\blind1
\author{}
\else
\author[1]{Ben Seiyon Lee}
\author[2]{Reetam Majumder}
\author[3]{Jordan Richards}
\author[4]{Emma S.\ Simpson}
\author[5]{Likun Zhang}
\affil[1]{Department of Statistics, George Mason University}
\affil[2]{Department of Mathematical Sciences, University of Arkansas}
\affil[3]{School of Mathematics and Maxwell Institute for Mathematical Sciences, University of Edinburgh}
\affil[4]{Department of Statistical Science, University College London}
\affil[5]{Department of Statistics, University of Missouri}
\fi

\date{}

\spacingset{1}

\maketitle
\begin{abstract}
Understanding and mapping extreme heat is critical for risk management and public health planning, particularly in regions with complex terrain and heterogeneous climate. We present a case study of extreme heat in the Four Corners region of the United States, using high-resolution surface skin temperature data from the North American Land Data Assimilation System to characterize spatially heterogeneous and seasonally varying extremes across complex terrain, and to assess their implications for heat-related public health risks. Spatial extremes exhibit complex dependencies across geographic regions, which require sophisticated statistical models to capture. While recent advances in spatial extreme value modeling provide flexible representations of joint tail dependencies, statistical inference remains computationally demanding, especially for datasets with a large number of locations. To address this, we propose a random scale mixture process that facilitates Bayesian inference of spatial extremes, and develop scalable inference strategies that leverage advances in spatial modeling and amortized learning. We evaluate the proposed inference methods through large-scale simulation studies, representing the first such extensive study in spatial extremes, and a high-resolution surface skin temperature application in the Four Corners region. Surface skin temperature is particularly useful as a predictor for air temperature, for studying heatwaves and related environmental phenomena, and to calculate heat indices reflecting downstream health risks at any location. Our findings provide insights into efficient, data-driven approaches for modeling spatial extremes, and serve as guidelines for practitioners in the fields of climate science, environmental risk assessment, and beyond. 
\end{abstract}

\noindent%
{\it Keywords:}  
Asymptotic dependence,
Environmental risk,
Heat index,
Random scale mixture,
Scalable inference

\newpage
\doublespacing
\setcounter{page}{1}
\section{Introduction}\label{s:intro}

Extreme temperatures play a central role in environmental risk assessment and serve as a key input to widely used risk indices for droughts \citep[][]{keetch1968drought}, fire weather \citep[][]{Wagner1987}, and health risks \citep{steadman1979assessment}. Modeling extreme temperatures is also critical for assessing climate change impacts \citep{dore2005climate}, as record-breaking heat and freezing events can trigger floods, droughts, wildfires, agricultural frost damage, and other hazards that affect property, infrastructure, and ecosystems \citep{change2013physical,easterling2016detection}.  Studying spatial extreme temperature events can potentially strengthen early warning systems \citep{stott2004human}, improve the allocation of critical resources \citep{kundzewicz2008implications}, and guide policy development aimed at safeguarding both communities and ecosystems. However, extreme temperature events often exhibit strong and complex spatial dependence, where high or low temperatures occur simultaneously over large geographic regions \citep{seneviratne2012changes,easterling2016detection}. This is difficult to model with standard statistical models. 

Our case study focuses on the accurate spatial interpolation of temperature at \textit{any} spatial location within the study domain, while \textit{faithfully characterizing} extremal dependence. A key downstream application is heat stress assessment, the leading cause of weather-related fatalities in the United States \citep{NOAA2023HazStat}. Heat indices \citep{steadman1979assessment} are a key component for this purpose, combining temperature and humidity into a single measure of perceived heat stress that can guide public health advisories and risk communication \citep{anderson2013methods}. 
Since the heat index is highly sensitive to extreme temperatures, their accurate modeling is essential for generating reliable heat index estimates \citep{steadman1979assessment}. Gridded products, such as the North American Land Data Assimilation System \citep[NLDAS;][]{xia2012continental}, provide spatially indexed surface skin temperature fields suitable for extreme value analysis, which can be combined with humidity measures to project heat indices at unobserved locations. However, their high spatial resolution and long temporal record make analysis computationally demanding, necessitating scalable statistical inference and prediction methods.

Gaussian processes (GPs) are the cornerstone of geostatistical spatial modeling \citep{cressie2011statistics}, but are not amenable to heavy-tailed marginal behavior or strong extremal dependence \citep{davison2013geostatistics}. Crucially, dependence in the extremes and the bulk may exhibit markedly different behavior \citep{huser2017bridging}. That is, extremal spatial dependence may weaken as events become more severe and eventually vanish (commonly termed \emph{asymptotic independence}) or, conversely, persist even in the far tail (\emph{asymptotic dependence}), so that extremes occur concurrently across locations with non-negligible probability. 

For spatial extremes, max-stable processes are a standard class of models for site-wise block maxima. Under max-stability, the extremal dependence structure remains effectively unchanged with event severity. Therefore, they are restricted to asymptotic dependence and cannot capture weakening dependence with increasing severity. As such, recent work advocates the use of more flexible alternatives for modeling spatial extremes \citep{huser2025modeling}. In addition, {it is well-documented that} inference for max-stable processes can be computationally prohibitive for even a small number of locations ($\sim 20$) \citep[see, e.g.,][]{ castruccio2016high, huser2019full}. Random scale mixture processes \citep[e.g.,][]{opitz2016modeling,huser2017bridging,huser2019modeling} provide a unified framework for spatial extremes, accommodating both asymptotic dependence and asymptotic independence. In this case study, we propose a novel random scale mixture process that enables a smooth transition between extremal dependence classes similarly to the \citet{huser2019modeling} model, and is amenable to high-dimensional Bayesian inference.

Many inference approaches for spatial extremes have been developed \citep[see the review by][]{huser2022}, but systematic comparisons of competing scalable methods remain limited. This is particularly notable given that computational cost is a primary bottleneck. Many approaches lack tractable Bayesian formulations \citep{zhang-2022a}, and performance varies across extremal dependence regimes. Although a wide range of spatial extremes models is available, the practical uptake of spatial extremal inference methods remains slow and inefficient. A few important exceptions exist. For example, \citet{heaton2019case} provide a landmark large-scale comparison study, though within the Gaussian process setting, while \citet{shi2026log} show that a log-Laplace nugget can reduce the computational burden of threshold-based spatial extremes inference to essentially standard covariance-matrix operations.

Through a detailed real-world case study, we provide the first systematic comparison of scalable Bayesian inference methods for spatial extremes, evaluating computational efficiency and the recovery of extremal dependence across asymptotic dependence regimes. We exploit recent advances in scalable Gaussian process modeling, adapting these approaches for our {newly-proposed} random scale mixture model. This results in four previously unexplored inference approaches for studying spatial extremes, exploiting methods that approximate or replace the full likelihood function, including Vecchia approximations \citep{vecchia1988estimation, katzfuss2020vecchia}, covariance tapering \citep{furrer2006covariance}, and low-rank basis representations \citep{cressie2022basis, wikle2010low}, as well as simulation-based inference via neural Bayes estimators \citep[NBEs;][]{sainsbury2022fast}. We evaluate the performance of these approaches through two extensive simulation studies, assessing how each method recovers spatial extremal dependence structures and risk-relevant model parameters, across a range of (extremal) dependence settings. This comprehensive simulation study provides a convincing justification for the use of our model to perform inference in our case study, which examines a 46-year record of surface skin temperatures over the climatologically diverse Four Corners region of the Southwest United States (US). This case study demonstrates how accurate interpolation of surface skin temperature improves the estimation of near-surface air temperature and, consequently, improves heat index calculations for public health warnings, and helps identify vulnerable communities with elevated heat risk.

The remainder of the paper is organized as follows. {Section~\ref{s:motivation} introduces our dataset of surface skin temperatures and highlights current scientific challenges within the context of climate risk management and decision making. In Section~\ref{s:modeling_framework}, we introduce key concepts from spatial extreme value analysis, and 
propose our new modeling framework.} Section~\ref{s:comp_strategies} describes the computational inference strategies that we compare and validate in this study. Section~\ref{s:simulation_study} presents results from our comparative simulation studies, examining performance in both low-dimensional (fewer locations and time points) and high-dimensional settings (large spatial grids with long temporal records). Section~\ref{s:application} provides the results of our application to the NLDAS skin temperature extremes across the Four Corners region. Section~\ref{s:discussion} concludes with a discussion and recommendations for practitioners.

\section{Scientific Motivation and Case Study Data}\label{s:motivation}
This section outlines the scientific motivation for our study and introduces the case study dataset. We discuss the spatial heterogeneity and seasonal patterns of heat extremes, with a focus on surface skin temperatures. We then introduce the NLDAS dataset and present an exploratory data analysis, including the preprocessing steps used to prepare the data for spatial extremal dependence modeling.

\subsection{Background}\label{s:motivation:overview}
The frequency and intensity of heat extremes have increased globally since 1950, with changes in extreme temperatures often exceeding those in mean surface temperature \citep{seneviratne2014no}. In the western United States, heat increasingly drives drought and wildfire activity \citep{Westerling2006,Westerling2011} and poses risks to critical infrastructure \citep{NCA5Ch5,NCA5Ch13}. These trends highlight the need to quantify extreme heat risk across space and seasons. 

Despite this need, spatial coverage of data from weather stations remains sparse, particularly across complex terrain \citep{minder2010surface} and in urban settings \citep{lee2016study}. Reanalysis products, such as MERRA-2 \citep{MERRA-2} and ERA5 \citep{ERA5}, provide spatially complete fields, but these are at relatively coarse scales ($\sim$ 30--50 km). In complex terrain or urban environments, $T_{2m}$ exhibits fine-scale variation that coarse grids cannot resolve. As a result, direct interpolation of $T_{2m}$ reanalysis fields can miss key physical factors that affect temperature variability \citep{oyler2016remotely}.

Our study focuses on daily surface skin temperature \citep[$T_{skin}$;][]{norman1995terminology}, which provides a physically consistent representation of near‐surface thermal conditions \citep{jin2010land}. Following common practice in remote sensing, we use $T_{skin}$ as an intermediate variable and infer near-surface air temperature at 2-m ($T_{2m}$) via supervised learning \citep{oyler2016remotely}. This approach incorporates physically meaningful predictors like wind, elevation, and the normalized difference vegetation index (NDVI) that mediate the $T_{skin}$--$T_{2m}$ gradient.

Several gridded products, like NLDAS and MODIS \citep{Wan01092008}, provide high-resolution $T_{skin}$ data. NLDAS captures thermal signatures (asphalt vs.\ vegetation, irrigated vs.\ dry surfaces) that drive local air temperature variations. However, the NLDAS resolves meteorological fields on the order of $\sim$12 km, which is still not ideal for examining sub-grid variability relevant for localized heat extremes (i.e., $<12$ km). On the other hand, thermal-infrared satellite $T_{skin}$ products such as MODIS offer much finer spatial detail (e.g., $\sim$1 km), but are temporally sparse because they only observe a given location at specific overpass times \citep{wan2008new}. These limitations motivate the development of statistical models that can deliver physically sensible and high-resolution $T_{skin}$ predictions at unobserved locations, which can then inform point-referenced $T_{2m}$ estimation (instead of areal averages) and localized extreme-event analysis. 

\subsection{Study objectives and scientific questions}\label{s:motivation:impact}
Our primary downstream use of spatial predictions of $T_{skin}$ is improved estimation of $T_{2m}$ in topographically complex, semi-arid/arid, or poorly instrumented regions. The two temperature variables are complementary in their contribution of valuable information to the study of climate change \citep{jin2010land}. Converting $T_{skin}$ to $T_{2m}$ using a locally calibrated model that incorporates auxiliary covariates (e.g., NDVI, wind, soil moisture, and time of day/season) can outperform direct interpolation of $T_{2m}$ because $T_{skin}$ better captures fine-scale thermal heterogeneity and microclimatic gradients \citep{oyler2016remotely,stoyanova2019spatial}. We also validate competing $T_{2m}$ interpolation approaches against independent observations from the Global Historical Climatology Network \citep[GHCN;][]{menne2012overview}, which serve as sub-grid evaluation sites (i.e., away from reanalysis grid points).

$T_{skin}$ can help quantify heat stress, a major public health concern. Using $T_{2m}$ estimated from $T_{skin}$ and coincident 2-m dew point temperature, we compute the heat index \citep{rothfusz1990heat} using the \texttt{weathermetrics} \texttt{R} package. The heat index reflects the body’s ability to cool via sweat evaporation and is used to issue operational heat advisories \citep{noaa_heat_safety}. It also provides a consistent measure of heat extremes across space, since identical temperatures can correspond to different levels of heat stress depending on humidity \citep{steadman1979assessment,rothfusz1990heat}.

\textbf{Scientific questions.} (1) What are the spatial and seasonal patterns of extreme surface skin temperature in the Four Corners region, and what do they imply about extremal dependence? (2) Can we accurately predict extreme temperatures and downstream heat risk at sub-grid locations, and where do elevated heat risk and socioeconomic vulnerability overlap? (3) Which scalable inference approach best answers these questions while remaining computationally feasible?

\subsection{Data acquisition and data structure}\label{s:motivation:data}
The NLDAS Phase~2 dataset \citep{xia2012continental} provides hourly, gridded $T_{skin}$ values on a $0.125^{\circ}$ resolution grid ($\sim$ 12--14 km). We analyze a 46-year record (1979--2024) of $T_{skin}$ over the Four Corners region ($106.4^\circ$--$111.2^\circ\mathrm{W}$, $34.44^\circ$--$38.19^\circ\mathrm{N}$;  Figure~\ref{fig:seasonal_avg_max}) at the intersection of the US states of Colorado, New Mexico, Arizona, and Utah. This study region comprises diverse terrain ranging from desert plains to mountain ranges that produce hot, dry summers and cold, snowy winters. The study region spans approximately $430.6$ km and $417.5$ km in the longitudinal and latitudinal directions respectively, with a maximum diagonal distance of about $600$ km across $n=1{,}209$ grid cells. For our analyses, we hold out 130 grid cells for validation; Figure~\ref{fig:gev_params} shows the spatial distribution of these holdout grid points.

We partition the records by season as temperature extremes differ substantially across the year, both in their statistical properties (e.g., marginal distributions and spatial extent) and their downstream environmental impacts. For instance, winter heat directly threatens water security and drives snow drought as it causes precipitation to fall as rain rather than snow and accelerate the melt of mountain snowpack \citep{siirila2021low}. Summer extremes in arid regions exhibit heavier upper-tails, associated with extended periods of high temperatures that drive heat-related mortality \citep{NOAA2023HazStat}, increase energy demand \citep{NCA5Ch5}, and reduce agricultural yields \citep{zhao2017temperature}. To illustrate seasonal differences, Figure~\ref{fig:seasonal_avg_max} displays the spatially-averaged seasonal maxima of $T_{skin}$ over the time record, and Figure~\ref{fig:raw_vs_standardized_maps} presents maps of the seasonal block maxima of $T_{skin}$ on the original scale and on a marginally transformed (uniform) scale. The latter is used as a preprocessing step to remove seasonality and to help to visualize spatial extremal dependence (see Section~\ref{s:exploratory} for details). We observe marked seasonal differences in the marginal extremal behavior and spatial extremal dependence. Thus, we opt to separately study the four seasonal records: DJF (December--February), MAM (March--May), JJA (June--August), and SON (September--November). 

\begin{figure}[!t]
    \centering
    \includegraphics[height=0.295\textwidth]{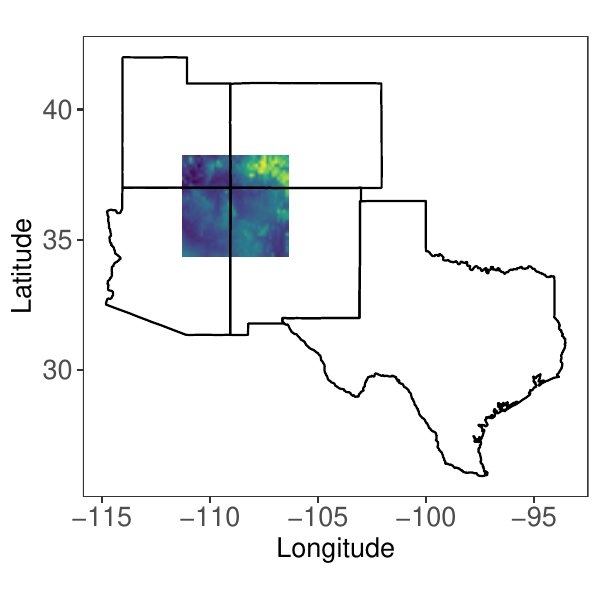}
    \hspace*{-0.1cm}\includegraphics[height=0.295\textwidth]{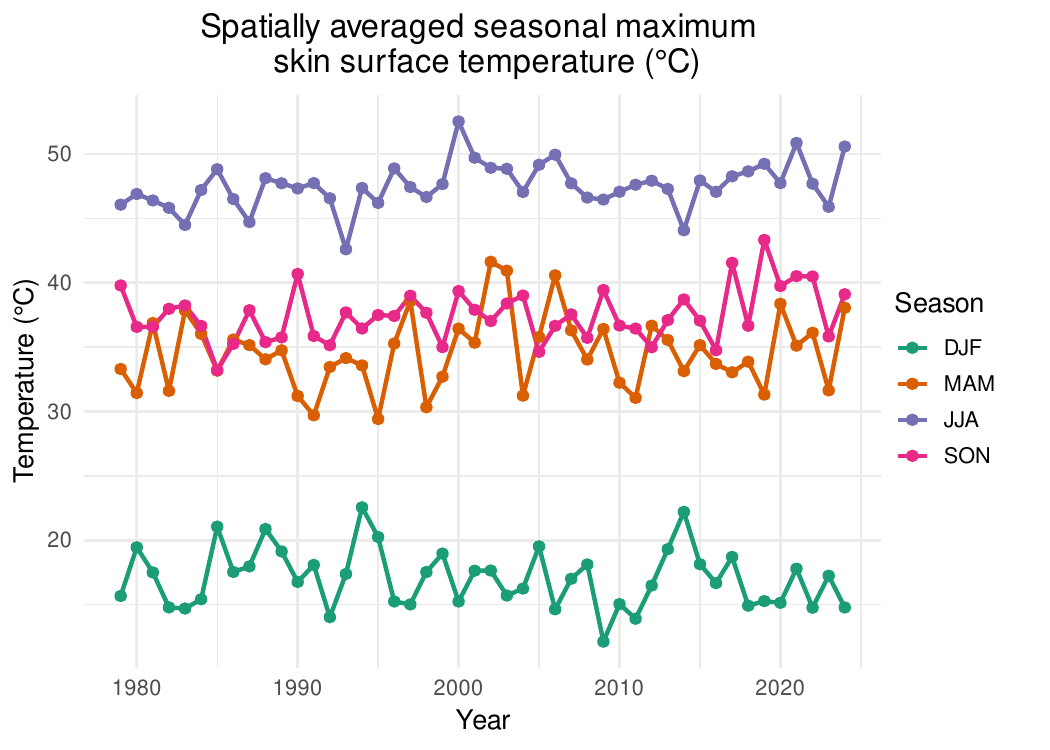}
    \hspace*{-0.3cm}\includegraphics[height=0.295\textwidth]{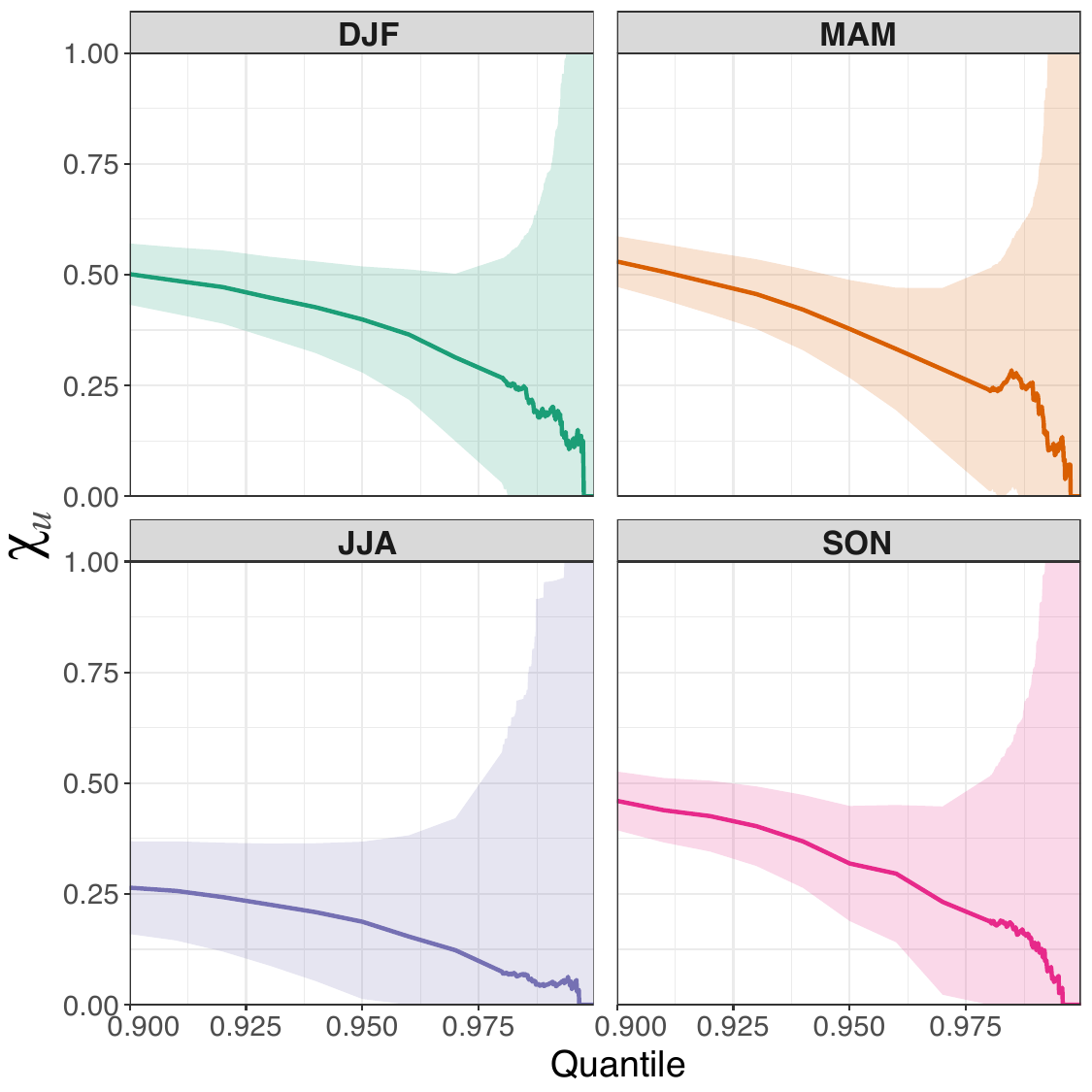}
    \caption{{\bf Left:} Location of the study region within the broader south-central and southwestern United States. The colored raster cells mark the NLDAS grid points over the Four Corners region and their elevations. {\bf Middle:} Spatially averaged seasonal maximum surface skin temperature ($T_{skin}$) over 1979--2024. For each year and season (DJF, MAM, JJA, SON), 
    seasonal block maxima are computed at each grid location and then averaged across all spatial locations. {\bf Right:} Season-wise empirical estimates of the extremal dependence function $\chi_u(h)$ (see Equation~\eqref{eqn:chi_definition}), at spatial lag $h=0.177$ ($\sim 75$ km). Point-wise $95\%$ confidence bands are provided as shaded regions.}
    \label{fig:seasonal_avg_max}
\end{figure}

\begin{figure}[!t]
    \centering
    \includegraphics[width=\linewidth]{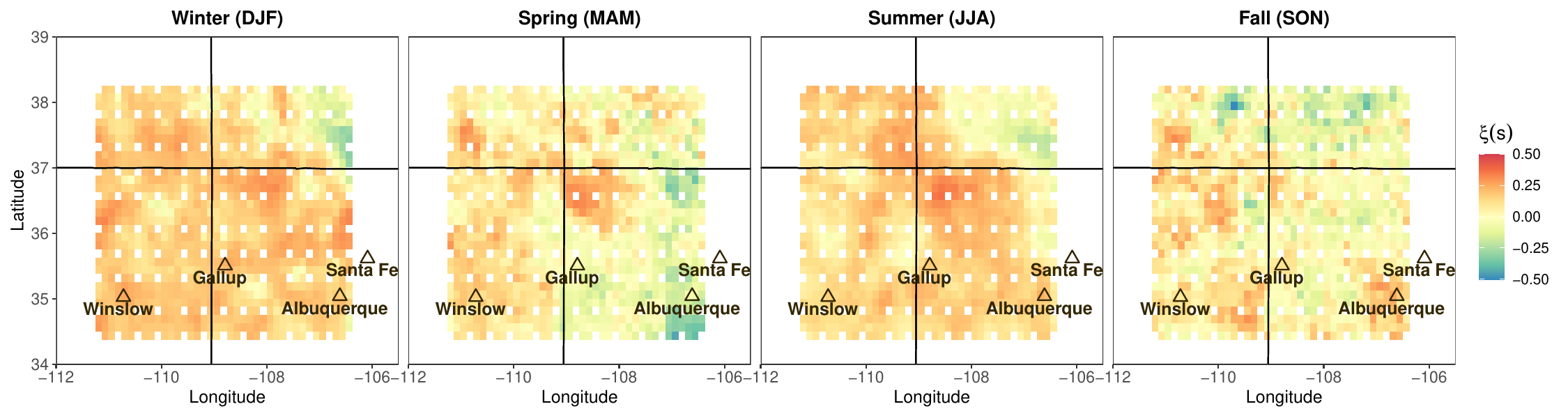}
    \caption{Seasonal maps of the maximum \textit{a posteriori} (MAP) estimate of the marginal GEV shape field $\{\xi(\bs)\}$ over the study region for each season. The white points denote the held-out NLDAS grid cells excluded from model fitting. The four triangle markers identify the GHCN stations---Winslow, Gallup, Albuquerque, and Santa Fe---used for subsequent station-based validation. State boundaries are overlaid for geographic reference.}
    \label{fig:gev_params}
\end{figure}
 
\subsection{Exploratory data analysis}\label{s:exploratory}
\vspace*{-0cm}\paragraph{Marginal modeling:} For each spatial location,  seasonal maxima are extracted to form approximately stationary series of $T_{skin}$ data suitable for extreme value analysis. The natural candidate for modeling block (seasonal) maxima time series is the generalized extreme value (GEV) distribution, having distribution function
\begin{align*}
    G(z;\mu,\sigma,\xi) = \begin{cases}
        \exp\left[-\left\{1+\xi\left(\frac{z-\mu}{\sigma}\right)\right\}_+^{-1/\xi}\right], &\xi\neq 0,\\
        \exp\left[-\exp\left\{-\left(\frac{z-\mu}{\sigma}\right)\right\}\right], &\xi=0,
    \end{cases}
\end{align*}
for $x_+=\max(0,x)$, $\mu,\, \xi\in\mathbb{R},$ and $\sigma>0$. To account for measurement error in the observations, we model season-wise maxima via a latent spatial GEV model \citep[see, e.g.,][]{cooley2007bayesian, risser2019probabilistic}, which is augmented with a Gaussian nugget term to mitigate small-sample noise. A complete description of the spatial GEV model and priors is provided in Section~\ref{sm:NLDAS_process} of the supplementary material. 

In short, for each spatial location $\bs$, the GEV parameter fields $\mu(\bs)$, $\sigma(\bs)$, and $\xi(\bs)$ are each modeled as a Mat\'ern Gaussian process (see Section~\ref{s:preliminaries} of the supplementary material) with elevation included as a covariate in the mean function. For inference, the maximum \textit{a posteriori} (MAP) estimate is obtained (see Figures~\ref{fig:gev_params} and~\ref{fig:gev_params2}), bypassing costly MCMC-based posterior sampling, and then used to transform the observed seasonal monthly maxima to a uniform scale via the probability integral transform. Figure~\ref{fig:gev_params} displays estimates of spatially varying $\xi$ values, indicating both heavy right and left tails depending on season and location. The corresponding estimates of the location and scale fields, $\{\mu(\bs)\}$ and $\{\sigma(\bs)\}$, are shown in Figure~\ref{fig:gev_params2} of the supplementary material, along with Anderson--Darling goodness-of-fit $p$-values for the fitted GEV distribution at each grid cell, providing evidence of widespread good fit across the domain.

The data on uniform scale serve as pseudo-observations for a subsequent analysis of the spatial extremal dependence. Two-step inference remains standard practice for spatial extremes \citep[see][for an overview]{huser2022}; adopting this framework provides a practical and methodologically clean way to isolate and evaluate dependence modeling performance, while leaving full joint Bayesian modeling to future work. We also note that, within the two-step paradigm, alternative approaches to marginal GEV modeling are available, including generalized additive models \citep{youngman2022evgam} and machine learning \citep{Richards2024}.

\begin{figure}[!t]
    \centering
    \includegraphics[width=\textwidth]{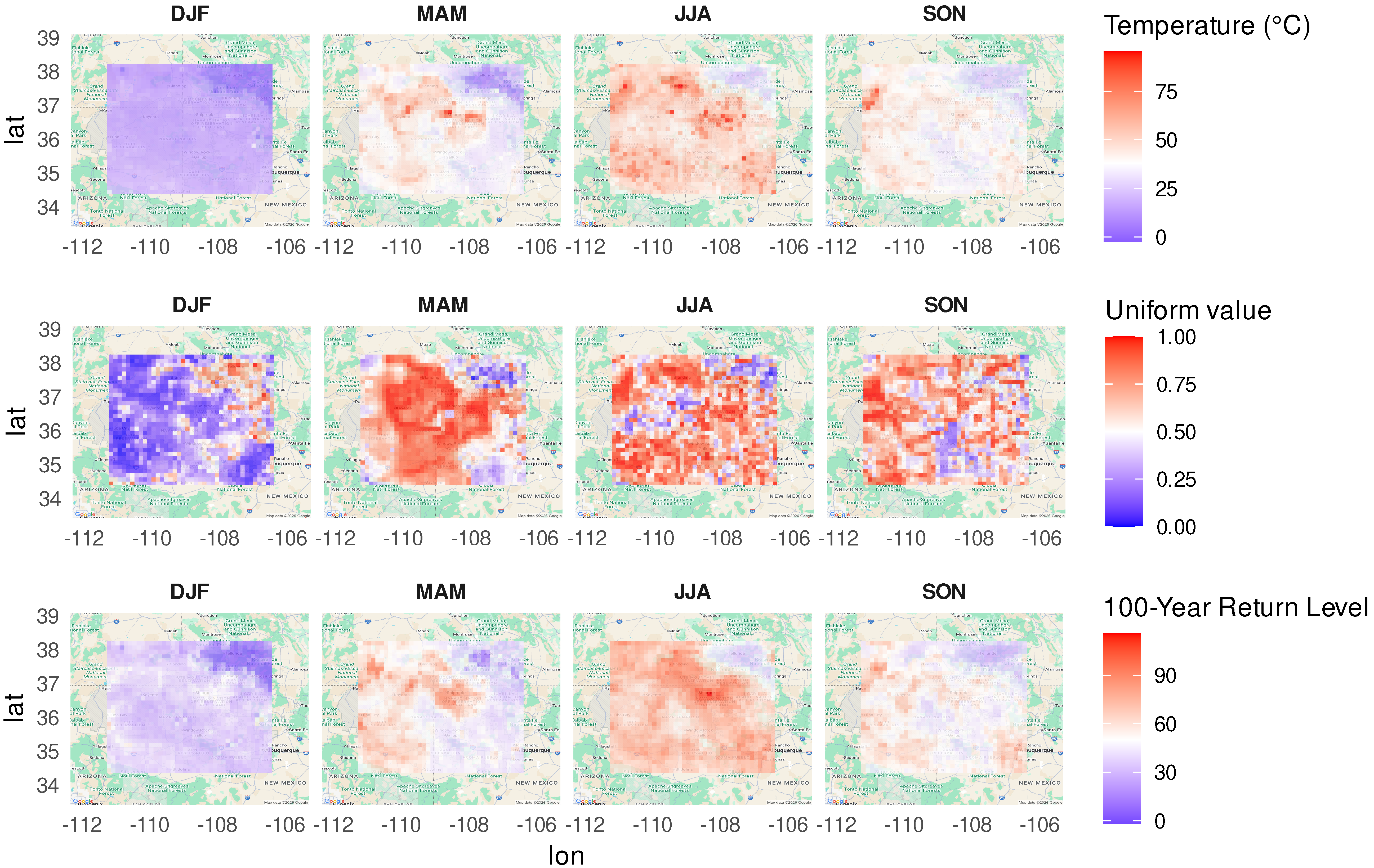}
    \caption{Spatial maps of seasonal maxima for the Four Corners region in 2024.
    \textbf{Top row:} Raw seasonal $T_{skin}$ from the NLDAS dataset, shown for DJF, MAM, JJA, and SON.
        \textbf{Middle row:} Uniform-scale seasonal maxima obtained by transforming the raw maxima using the fitted spatial GEV marginal distributions.
    \textbf{Bottom row:} 100-year return level ($^\circ C$) for $T_{skin}$ based on the fitted spatial GEV marginal distributions. }   \label{fig:raw_vs_standardized_maps}
\end{figure}

\vspace*{-0.3cm}\paragraph{Classifying extremal dependence:} When studying spatial extremes, various metrics can be used to formally classify the extremal dependence type, that is, the aforementioned notions of \emph{asymptotic dependence} and \emph{asymptotic independence}. This distinction is critical, as it governs whether extremes co-occur across locations or remain spatially localized, with direct implications for regional risk assessment and resource allocation. Letting $\{U(\bs):\bs\in\mathcal{S}\}$ denote a spatial process on standard uniform margins, for a continuous spatial domain $\mathcal{S}\subset\mathbb R^2$, the pairwise upper-tail index \citep{joe1997multivariate} for the pair of locations $(\bs,\bs')$ is defined as $\chi\left(\bs,\bs'\right)=\lim_{u\rightarrow 1}\chi_u\left(\bs,\bs'\right)\in[0,1]$, with
\begin{align}
    \chi_u\left(\bs,\bs'\right) = \frac{\Pr\left\{U(\bs)>u,U(\bs')>u\right\}}{1-u}, \qquad u\in(0,1).
\label{eqn:chi_definition}
\end{align}
The complementary coefficient of tail dependence $\eta\left(\bs,\bs'\right)\in(0,1]$ \citep{Ledford1996} is defined via the equation
\begin{align}
\Pr\left\{U(\bs)>u,U(\bs')>u\right\} = \mathcal{L}(1-u)(1-u)^{1/\eta\left(\bs,\bs'\right)},\qquad \text{for $u\rightarrow 1$},   
\label{eqn:eta_definition}    
\end{align}
where $\mathcal{L}$ is a slowly varying function at zero. When the process $\{U(\bs):\bs\in\mathcal{S}\}$ is stationary and isotropic, the coefficients $\chi(\bs,\bs')$ and $\eta(\bs,\bs')$ are functions of pairwise distance $h=\|\bs-\bs'\|$ only, for which we write $\chi(h)$ and $\eta(h)$.

When $\chi\left(\bs,\bs'\right)>0$ and $\eta\left(\bs,\bs'\right)=1$, the spatial process $U(\cdot)$ exhibits asymptotic dependence, which in practice means that the most extreme events can occur simultaneously at locations $\bs$ and $\bs'$. When $\chi\left(\bs,\bs'\right)=0$, we are in the setting of asymptotic independence, and the value of $\eta\left(\bs,\bs'\right)$ quantifies the corresponding strength of ``residual extremal dependence''. In general, $\eta\left(\bs,\bs'\right)<1$ always signifies asymptotic independence, while the boundary case of $\eta\left(\bs,\bs'\right)=1$ is more subtle; see, e.g.,  \citet{huser2022} for further detail. Sub-asymptotic behavior describes the dependence structure observed at high but non-asymptotic quantiles, that is, for $u$ close (but not equal) to 1.

\vspace*{-0.3cm}\paragraph{Diagnostics for max-stability:} Spatial models that assume asymptotic dependence (e.g. max-stable processes) require the tail dependence coefficient \(\chi_u(h)\) to converge linearly to a positive limit as \(u \nearrow 1\) . This assumption is restrictive, as it precludes asymptotic independence and forces a fixed extremal dependence class regardless of the data \citep[see, e.g.,][]{de1984spectral,ribatet2013spatial}. To evaluate whether max-stability is appropriate for the NLDAS data, we examine empirical estimates of \(\chi_u(h)\) and apply the bootstrap max-stability test of \citet{koh2024space}.

The right panel of Figure~\ref{fig:seasonal_avg_max} shows that, at a fixed distance \(h=0.177\), the seasonal \(\chi_u(h)\) curves depart from the linear stabilizing pattern expected under max-stability as \(u \nearrow 1\). On the unit-square scale, this lag corresponds to approximately $75\,\text{km}$ in the original domain. This suggests that max-stable models are too rigid for our data and motivates the use of more flexible random scale mixture models that can capture both extremal dependence classes. Results from the bootstrap max-stability test are provided in Section~\ref{sec:maxstab-tests} of the supplementary material, showing that max-stability is strongly rejected over the full Four Corners domain and remains widely rejected in smaller subregions for MAM, JJA, and SON. In contrast, DJF shows more frequent local non-rejection, suggesting that max-stable behavior may be plausible only in some subregions during winter, a finding consistent with the strong asymptotic dependence recovered by our proposed model in Section~\ref{s:application}.

\section{Modeling Framework for Spatial Extremes}\label{s:modeling_framework}

In this section, we review random scale mixture processes and introduce our novel random scale mixture model, which provides a flexible representation of extremal dependence and permits computationally efficient Bayesian inference.

\subsection{Random scale mixture processes}\label{s:rsm} 
Max-stable processes can be seen as a product mixture of a stochastic process $\{W(\cdot)\}$, with light upper-tailed marginals, and a random scaling $R$ with a heavy upper-tail (see Equation~\eqref{eq_MSP} of Section~\ref{s:preliminaries} of the supplementary material). The random scale having heavier upper-tails than the margins of $\{W(\cdot)\}$ induces asymptotic dependence in the resulting mixture $\{X(\cdot)\}$. Similar constructions, with modifications of $R$ and/or $\{W(\cdot)\}$, can result in a process that exhibits asymptotic independence \citep{engelke2019extremal}. This motivates the use of parametric models that allow for interpolation between the two extremal dependence classes, circumventing the need to specify an extremal dependence class prior to modeling. Examples include Gaussian scale \citep{huser2017bridging} and location \citep{krupskii2018factor} mixtures, but here we focus on the class of random scale mixture processes of \cite{huser2019modeling}, which allow for a transition between asymptotic dependence and asymptotic independence within the interior of the parameter space.

Specifically, \citet{huser2019modeling} define a random scale mixture process 
\begin{equation}
\label{eq:RSM}
\{X(\bs):\bs\in\mathcal{S}\}=R^\delta \{W(\bs)^{1-\delta}:\bs\in\mathcal{S}\},
\end{equation}
for $\delta \in [0,1]$, and where $R \geq 1$ is a unit Pareto random variable and $\{W(\bs):\bs \in \mathcal{S}\}$ is an asymptotically independent process with unit Pareto margins, i.e., for all $\bs$, $\Pr\{W(\bs) > w\} = w^{-1}$ for $w\geq 1$. The extremal dependence class of $\{X(\bs)\}$ is determined by the parameter $\delta$: if $\delta < 1/2$, then $\{X(\bs)\}$ exhibits asymptotic independence; and if $\delta \geq 1/2$, $R^\delta$ is heavier-tailed than the margins of $\{W(\bs)^{1-\delta}\}$, and so $\{X(\bs)\}$ is asymptotically dependent. In applications of model \eqref{eq:RSM}, $\{W(\bs)\}$ is typically taken to be a Gaussian process (GP), and we also adopt this approach.

\subsection{L\'evy random scale mixture model}\label{subsec:lrsm}
We tailor the process in \eqref{eq:RSM} to enable scalable Bayesian inference, motivated by the inferential and computational challenges in the heat extremes case study. Conditional on $R$, the process $\{X(\bs)\}$ in~\eqref{eq:RSM} is a GP (up to marginal standardization). This suggests
that one could exploit techniques from the classical geostatistics literature (based on GPs) to perform scalable Bayesian inference on spatial extremal dependence. However, both components of the random scale mixture process of \citet{huser2019modeling} have marginal support on $(1,\infty)$. As highlighted by \citet{zhang-2022a}, this support constraint leads to computational instability when updating $R$ via Markov chain Monte Carlo (MCMC), because large proposals for $R$ may produce $X(\bs)/R^{\alpha} < 1$ for some $\bs \in \mathcal{S}$, thereby violating the model support and yielding near-zero acceptance probabilities. Allowing component-wise support on $(0,\infty)$ resolves this support mismatch and substantially improves mixing. 

Our proposed model, which we term the L\'evy Random Scale Mixture (LRSM), specifies the data-level spatial process as
\begin{align}\label{eqn:lrsm}
    \{X (\bs):\bs\in\mathcal{S}\}&:= R^{\alpha}\left\{g(Z(\bs)):\bs\in\mathcal{S}\right\},
\end{align}
where $\alpha \geq 0$ and $g(\cdot):= \{1-\Phi(\cdot)\}^{-1}-1$. The L\'evy random scale $R\sim \text{L\'{e}vy}(m,c)$ has location and scale parameters $m\in \mbR$ and $c>0$, respectively, and the latent process $\{Z(\bs)\}$ is a Gaussian process with standard Gaussian marginals, i.e., $Z(\bs)\sim N(0,1)$ for all $\bs \in \mathcal{S}$.

Note that $R\in (0, \infty)$, and the function $g(\cdot) $ ensures that $\{g(Z(\bs))\}$ has shifted unit Pareto margins, i.e., for all $\bs$, $\Pr\{g(Z(\bs)) > z\}=(1+z)^{-1}$ for $z \geq 0$, and thus also has support on $(0,\infty)$. Despite this adaptation, the LRSM model \eqref{eqn:lrsm} has similar desirable extremal dependence properties to the model proposed by \citet{huser2019modeling}. The L\'evy distribution is regularly-varying with tail index $c$ \citep[Chapter 1.5,][]{nolan2020univariate}, yielding Pareto-type upper-tails. The component $g(Z(\bs))$ is also regularly-varying but with unit tail index. The parameter $\alpha >0$ controls the extremal dependence class of $\{X (\bs)\}$, inducing asymptotic independence when $\alpha\leq c$ and asymptotic dependence when $\alpha>c$. In Section~\ref{sm:chi_eta} of the supplementary material, we present the theoretical properties of pairwise tail dependence for process~\eqref{eqn:lrsm}.

\subsection{Modeling choices}\label{subsec:modelingChoices}
To use the LRSM framework~\eqref{eqn:lrsm} in practice, one must specify the distribution of $R$ (through the parameters $m$ and $c$) and the form of the Gaussian process $\{Z(\bs)\}$ (e.g., via its correlation function). The choices described below will be used in the remainder of the paper.

The L\'evy scale parameters, $m$ and $c$, do not directly impact the extremal dependence structure of the LRSM; instead it is the ratio $\alpha/c \in(0,\infty)$ that determines the strength and class of extremal dependence. Moreover, while $m$ and $c$ do appear in the marginal distribution of the LRSM (see Section~\ref{sm:chi_eta} of the supplementary material), their values are of no practical consequence. Similarly to \citet{huser2019modeling}, we use the LRSM  to define an extremal dependence copula (see Section~\ref{sec:bayes_inference}) and perform a separate marginal transformation of the data. In our application, without loss of generality, we fix the location and scale parameters of the L\'evy random variable to $m=0$ and $c=1/2$. The interpretation of $\alpha$ in the LRSM thus shares the same appealing interpretation as $\delta$ in \eqref{eq:RSM}, with $\alpha < 1/2$ implying asymptotic independence and $\alpha \geq 1/2$ inducing asymptotic dependence.

For the Gaussian process $\{Z(\bs)\}$, we use the popular Mat\'ern class with correlation function $C_{\boldsymbol{\theta}}:\mathcal{S}\times\mathcal{S}\rightarrow\mathbb{R}$ defined in Equation~\eqref{eq:MATERN} of the supplementary material. Here, the correlation function $C_{\boldsymbol{\theta}}$ is parameterized by a spatial range $\rho>0$ and smoothness $\nu>0$, i.e., with parameters $\boldsymbol{\theta}:=(\rho,\nu)$. The marginal variance of the Gaussian process is equal to one, so inference focuses solely on $\rho$ and $\nu$. 

We choose not to vary our model parameters across space, and the Mat\'ern correlation function depends only on the Euclidean distance between locations. These choices impose spatial stationarity and isotropy in our (extremal) dependence structure. However, the presented framework could be extended to account for nonstationarity and anisotropy, if required. For examples of handling such nonstationarity in spatial extremes see, e.g., \cite{shao2025flexible} and \cite{shi2024spatial}.

 \section{Scalable Inference for Spatial Extremal Dependence}\label{s:comp_strategies}
Likelihood-based inference for the LRSM model~\eqref{eqn:lrsm} is computationally demanding, involving repeated evaluation of high-dimensional Gaussian densities and nonlinear transformations requiring numerical integration. This section outlines several strategies we propose to make inference feasible at the scale of our application. These include Bayesian hierarchical models using scalable Gaussian likelihood approximations (Section~\ref{sec:bayes_inference}) and likelihood-free neural Bayes estimators that map data to parameter point estimates (Section~\ref{sec:nbe}).

\subsection{Bayesian inference}
\label{sec:bayes_inference}

The marginal distribution and density functions of $X(\cdot)$, denoted by $F_X(\cdot; \alpha)$ and $f_X(\cdot; \alpha)$, respectively, depend on the parameter $\alpha$ (see Section~\ref{sm:chi_eta} of the supplementary material), which is an unknown quantity to be estimated. Inference in our LRSM framework proceeds by considering the process model as a copula and working with the marginally standardized uniform random variables $\left\{U(\bs)=F_{X}(X(\bs);\alpha): \bs \in \mathcal{S}\right\}.$

For inferential purposes, we assume that we have $T$ temporal replicates observed at $n$ spatial locations, updating the notation in~\eqref{eqn:lrsm} to 
\[
X_t(\bs_i) = R_t^{\alpha} g(Z_t(\bs_i)), \qquad t = 1,\ldots,T,\qquad i = 1,\ldots,n,
\]
where $Z_t(\bs_i)$ are realizations from a Mat\'ern Gaussian process with correlation parameters $\boldsymbol{\theta} = (\rho,\nu)$, and the replicates are assumed independent across $t$, conditional on $(\alpha, R_t, \boldsymbol{\theta})$. The corresponding variables on uniform scale are denoted $U_t(\bs_i)$, $t=1,\dots,T$, $i=1,\dots,n$. We define the random vectors $\bU_t := (U_t(\bs_1),\ldots,U_t(\bs_n))^\top$ and $\bZ_t := (Z_t(\bs_1),\ldots,Z_t(\bs_n))^\top$, with corresponding stacked vectors $\bU$ and $\bZ$, and let $\bR := (R_1,\ldots,R_T)^\top$. Then the Bayesian hierarchical construction for the LRSM model can be written as
\begin{align*}
\mbox{Data Model:}\qquad & \{U_t(\bs):\bs \in \mathcal{S}\}_{t=1}^T|(\mathbf{R}, \alpha, {\boldsymbol{\theta}}) \sim F_{\bU}(\cdot;\mathbf{R},\alpha, {\boldsymbol{\theta}}),\\[-0.3em]
\mbox{Process Model:}\qquad & R_t\sim \mbox{L\'evy} (0,1/2),\qquad t=1,\dots,T,\\[-0.3em]
\mbox{Parameter Model:}\qquad & {\boldsymbol{\theta}} \sim p({\boldsymbol{\theta}}), \alpha\sim p(\alpha),
\end{align*}
\noindent where $F_{\mathbf{U}}(\cdot ;\mathbf{R},\alpha, {\boldsymbol{\theta}})$, the conditional distribution function of $\bU$ given $(\bR,\alpha,\btheta)$, is provided in Section~\ref{sm:likelihood} of the supplementary material, along with the functions to transform $U_t(\bs)$ to $Z_t(\bs)$ (and vice versa) given $\mathbf{R}$. We specify weakly informative independent priors for the dependence parameters, 
with $\alpha \sim \text{Uniform}(0,1)$ and $\rho \sim \text{Uniform}(0,0.5)$. Also see Section~\ref{sm:likelihood} for additional computational details. Figure~\ref{fig:flowchart} illustrates the modeling framework, linking the latent L\'evy random scale mixture process \(X_t(\bs)\) to the observations \(Y_t(\bs)\). 

\begin{figure}[!t]
    \centering
    \includegraphics[width=0.81\linewidth]{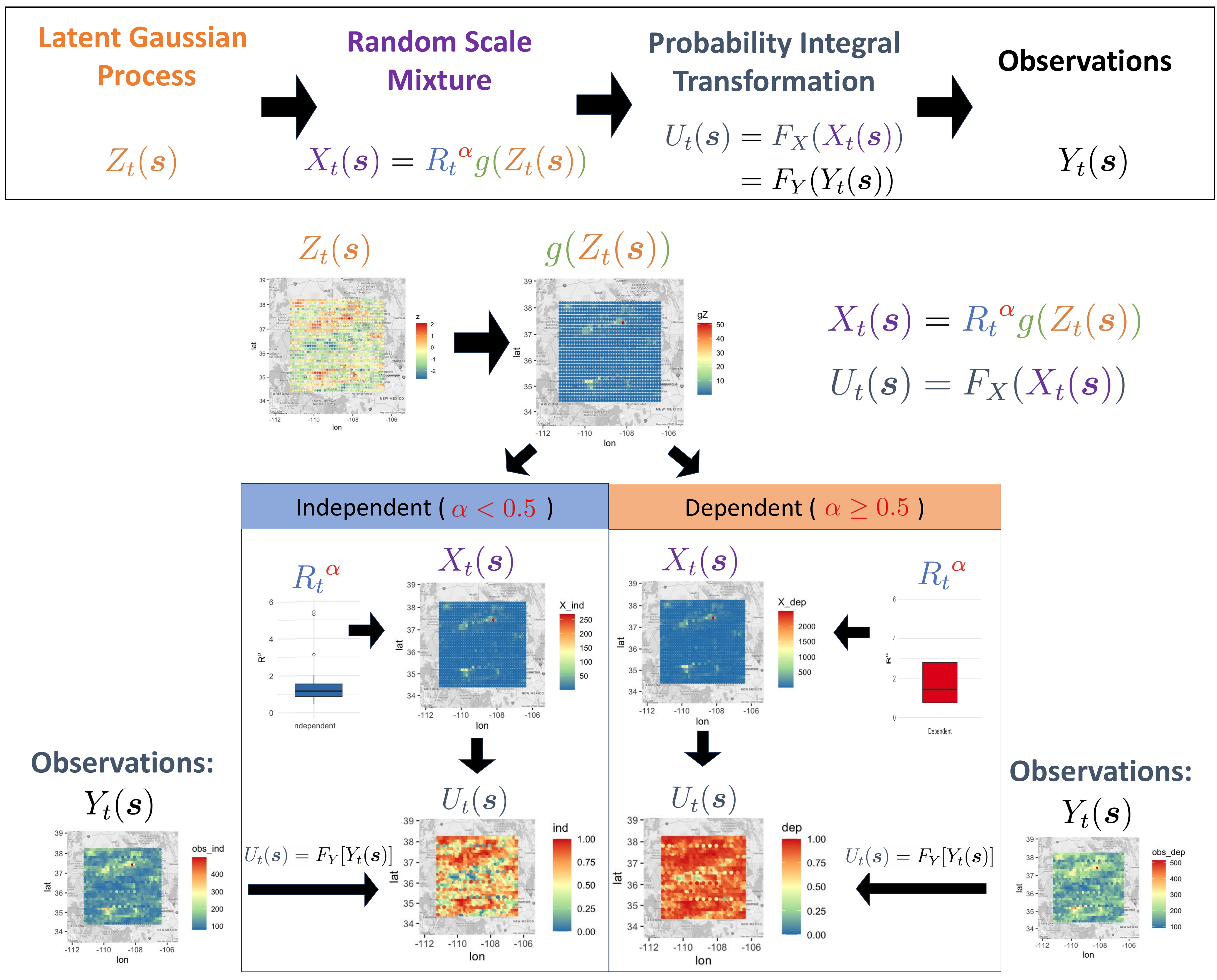}
    \vskip -0.1cm
    \caption{Schematic of the modeling framework linking the latent L\'evy random scale mixture process $X_t(\bs)$ to the observed data $Y_t(\bs)$ through a copula representation. A marginal spatially-varying GEV model is used to transform observations to a common scale, while the random scale mixture captures spatial extremal dependence. }
    \label{fig:flowchart}
\end{figure}

Repeated evaluation of the corresponding conditional density $f_{\bU}(\cdot; \bR, \alpha,\btheta)$ is computationally intensive, as it requires large matrix operations involving $\bC_{\boldsymbol{\theta}}$, which is the matrix containing evaluations of the correlation function $C_{\boldsymbol{\theta}}(\cdot,\cdot)$ for all pairs of sampling locations $(\bs_i,\bs_j),i,j=1,\dots,n$. Each Cholesky decomposition of $\bC_{\boldsymbol{\theta}}$ costs $\mathcal{O}(n^3)$ operations. The scalable approaches we consider target computational bottlenecks associated with $\bC_{\boldsymbol{\theta}}$. Vecchia approximations \citep{vecchia1988estimation, stein2004approximating} exploit sparse representations of the Cholesky factor of $\bC_{\boldsymbol{\theta}}^{-1}$ through conditional independence assumptions. Covariance tapering \citep{furrer2006covariance} induces sparsity directly in $\bC_{\boldsymbol{\theta}}$ by shrinking long-range covariances to zero while preserving positive definiteness. Low-rank methods approximate $\bC_{\boldsymbol{\theta}}$ using the outer product of $k \ll n$ basis functions. Additional details are provided in Section~\ref{sec:scalable} of the supplementary material.

\subsection{Neural Bayes estimators}
\label{sec:nbe}

 Neural Bayes estimators (NBEs) have shown promise as an efficient alternative to standard likelihood-based techniques for inference with models for marginal \citep{rai2024fast,richards2023modern} and spatial \citep{lenzi2023neural} extremes. They use pre-trained neural networks to efficiently map data to point estimates of posterior summary statistics. We follow \citet{sainsbury2022fast,sainsbury2023neural} and construct neural Bayes estimators for posterior means and posterior credible intervals, using graph neural networks \citep[GNNs;][]{Zhou_2020_GNN_review} to perform efficient inference with irregular spatial data. Once trained, the estimator can be queried for negligible computational cost and is amortized with respect to the number of samples, $T$, and the number $n$ and configuration of the sample locations. That is, we train the estimator only once and then reuse it for point estimation throughout the application and simulation study. Note that the NBEs cannot be used for out-of-sample prediction as they do not provide access to the full joint posterior and there is no closed form expression for the conditional distribution of the LRSM.

  Details on the implementation and training of the NBEs are deferred to Section~\ref{sec:NBE_discuss} of the supplementary material. For prior usage of NBEs for random scale mixture models, see 
\citet{richards2024neural}, \citet{dell2025flexible}, and \cite{andre2025neural}.

\section{Simulation Study and Performance Comparison}
\label{s:simulation_study}

We conduct three simulation studies to evaluate the inferential, predictive, and computational performance of the proposed LRSM inference methods. The first considers moderately sized datasets, focusing on parameter estimation and out-of-sample prediction.  The second examines scalability for larger datasets. Section~\ref{sec:ms_sim_study} of the supplementary material presents results under model misspecification using max-stable and asymptotically independent inverted max-stable processes \citep{wadsworth2012dependence}.

\subsection{Simulation design}
For all simulation studies, we generate $n$ irregularly spaced locations, $\bs_1,\dots,\bs_n,$  uniformly at random within the unit square, i.e., $\mathcal{S}= [0,1]^2$. At these sites, we generate $T$ independent replicates of the LRSM process \eqref{eqn:lrsm}. The true values of $\rho$, $\nu$, and $\alpha$, and the dimensions of the datasets ($n$ and $T$) vary across the simulation studies. The \emph{low-dimensional study} fixes $(n,T)=(500,100)$ and explores $\rho\in\{0.05,0.25\}$ for $\nu=0.5$, and $\rho\in\{0.0316,0.1581\}$ for $\nu=1.5$, chosen so that the corresponding effective ranges are 0.15 and 0.75. It also examines the extremal dependence class over a representative range of values $\alpha\in \{0.05, 0.3, 0.45, 0.55, 0.7\}$. By contrast, the \emph{high-dimensional study} fixes $(\rho,\nu)=(0.05,0.5)$ and varies $\alpha \in \{0.3, 0.7\}$,  $n \in \{1000, 5000\}$ and  $T \in \{10, 50\}$. Full details of the simulation design are provided in Section~\ref{sec:simulation_design_supplement} of the supplementary material.

For each parameter combination, we report results from various inference methods corresponding to repetitions over 100 experiments. Concretely, for each of the 100 simulated datasets, we randomly hold out 25\% locations for validation in the low-dimensional studies (15\% for high-dimensional studies), and compare five inference procedures: (i) MCMC using the full GP likelihood (i.e., the ``gold standard''); (ii) MCMC using the Vecchia approximation with conditioning sets of sizes $m=3,\,5,\,10,\,15$; (iii) MCMC using the low-rank basis approach with 100 and 200 spatial basis functions; (iv) MCMC using covariance tapering with a spherical taper yielding 90\% and 20\% covariance sparsity; and (v) the Neural Bayes estimators. All tuning was performed on the training data, and predictive metrics were computed on a held-out validation set. We report the empirical 95\% credible intervals for $\alpha$ and $\rho$, as well as corresponding interval scores \citep{gneiting2007strictly}. Predictive performance was assessed using three tail-weighted versions of the continuous ranked probability score (CRPS) with results reported in Sections~\ref{tab:lowdim_results} and~\ref{tab:highdim_results} of the supplement. Additional details on the MCMC-based inference procedures, the NBE approach, and the validation metrics are provided in Sections~\ref{sec:implementation} and~\ref{sec:validation_metrics} of the supplement.

\subsection{Results}\label{s:sim_results}
Across both simulation designs, methods that preserve local spatial dependence provide the most reliable inference and predictive performance.  In particular, full GP and higher-order Vecchia approximations are the most accurate, while the NBE approach offers substantial computational gains with modest losses in statistical efficiency but does not support out-of-sample prediction. Performance for most scalable methods deteriorates under smoother spatial fields and stronger dependence.

\vspace*{-0.3cm}\paragraph{Low-dimensional setting.}
Representative results are shown in Figure~\ref{fig:simulation-panels}, with full results reported in Tables~\ref{tab:sim-results-1}--\ref{tab:sim-results-20} and Figures~\ref{fig:lowdim-1}--\ref{fig:lowdim-5} of the supplementary material. The full GP achieved near-nominal coverage and the lowest interval scores, serving as a benchmark. Vecchia approximations performed similarly to the full GP, with accuracy improving as the conditioning size $m$ increases. Results were nearly indistinguishable from the full GP for moderate $m$, while smaller $m$ led to under-coverage, particularly under stronger long-range spatial dependence. Coverage was highest for rougher, short-range fields, and declined under smoother, longer-range spatial dependence, as well as larger $\alpha$. Predictive accuracy was comparable between both methods across all settings.

\begin{figure}[!t]
\centering
\begin{subfigure}{0.49\textwidth}
    \centering
    \includegraphics[width=\linewidth]{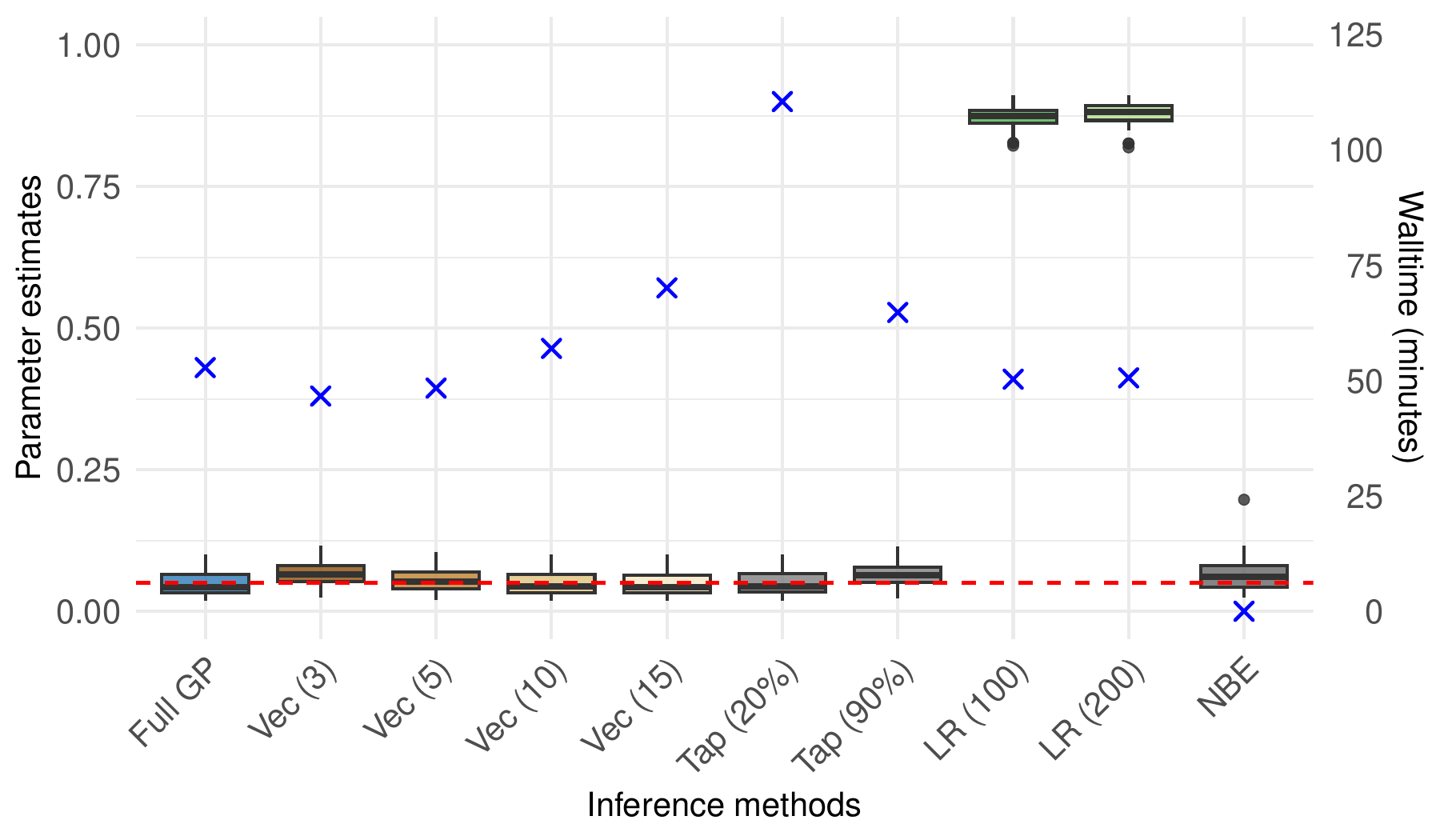}
    \caption{$\alpha=0.05$, $\nu=0.5$, $\rho=0.05$}
\end{subfigure}
\hfill
\begin{subfigure}{0.49\textwidth}
    \centering
    \includegraphics[width=\linewidth]{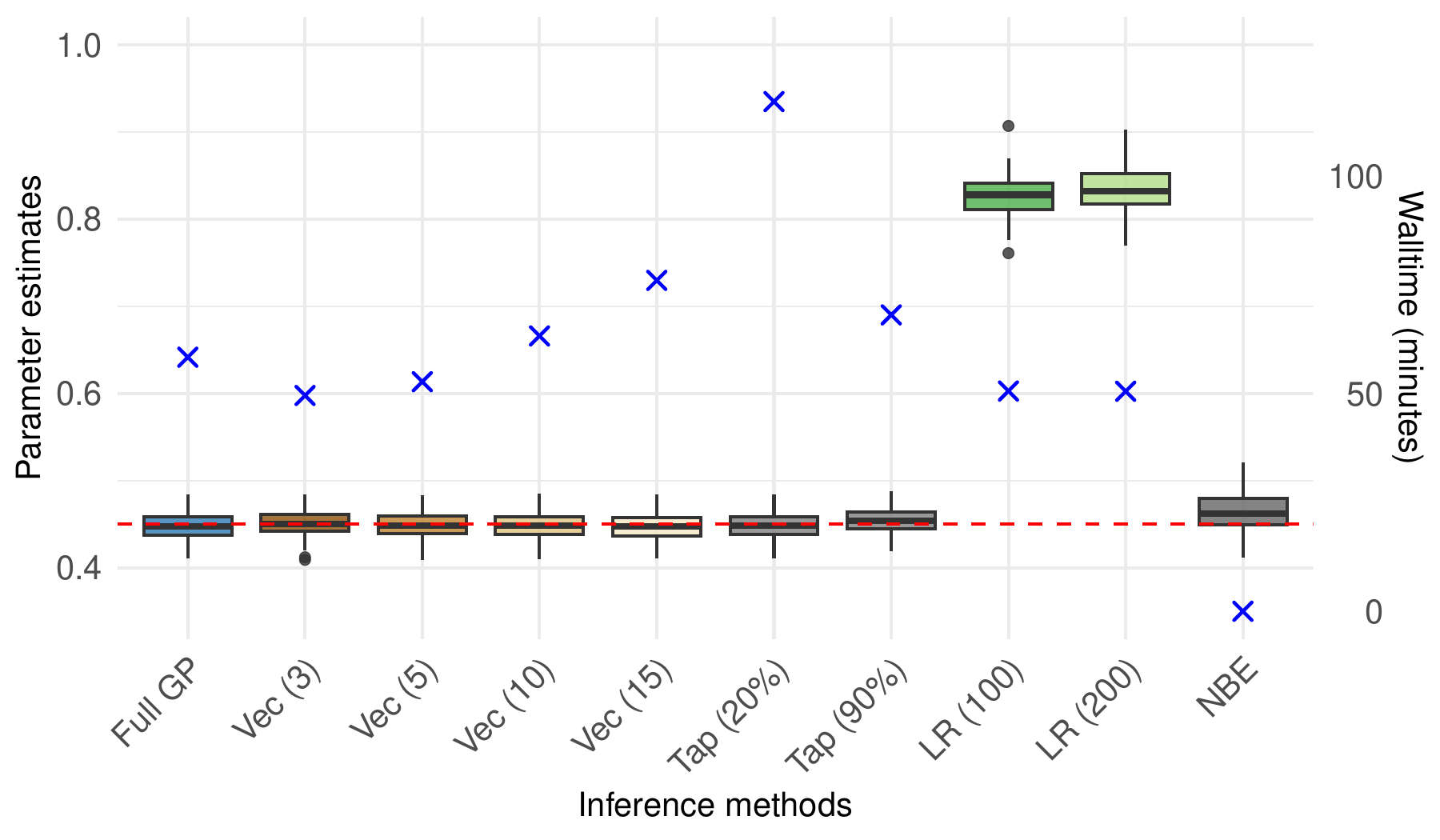}
    \caption{$\alpha=0.45$, $\nu=0.5$, $\rho=0.05$}
\end{subfigure}

\vspace{0.4cm}

\begin{subfigure}{0.49\textwidth}
    \centering
    \includegraphics[width=\linewidth]{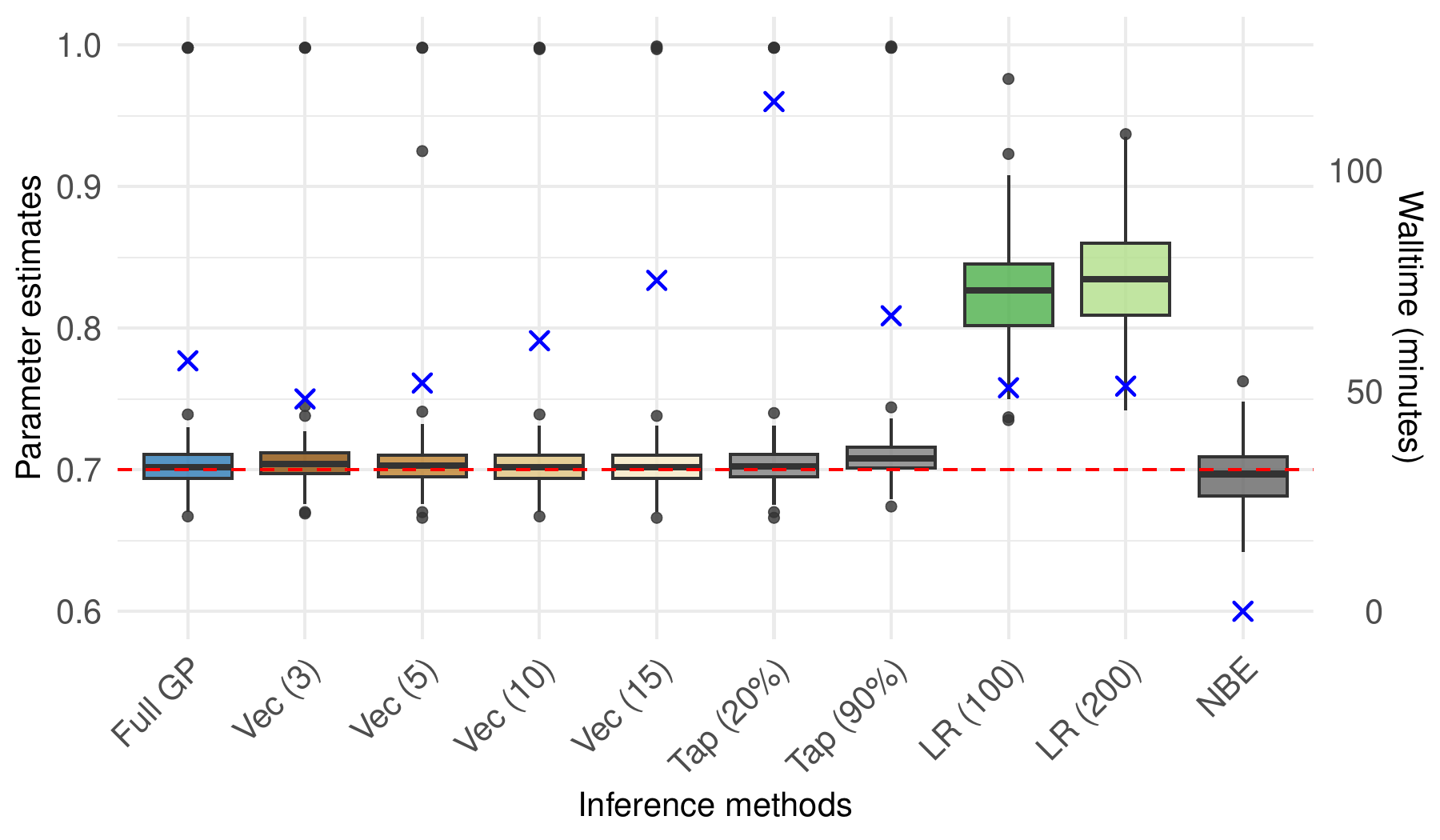}
    \caption{$\alpha=0.70$, $\nu=0.5$, $\rho=0.05$}
\end{subfigure}
\hfill
\begin{subfigure}{0.49\textwidth}
    \centering
    \includegraphics[width=\linewidth]{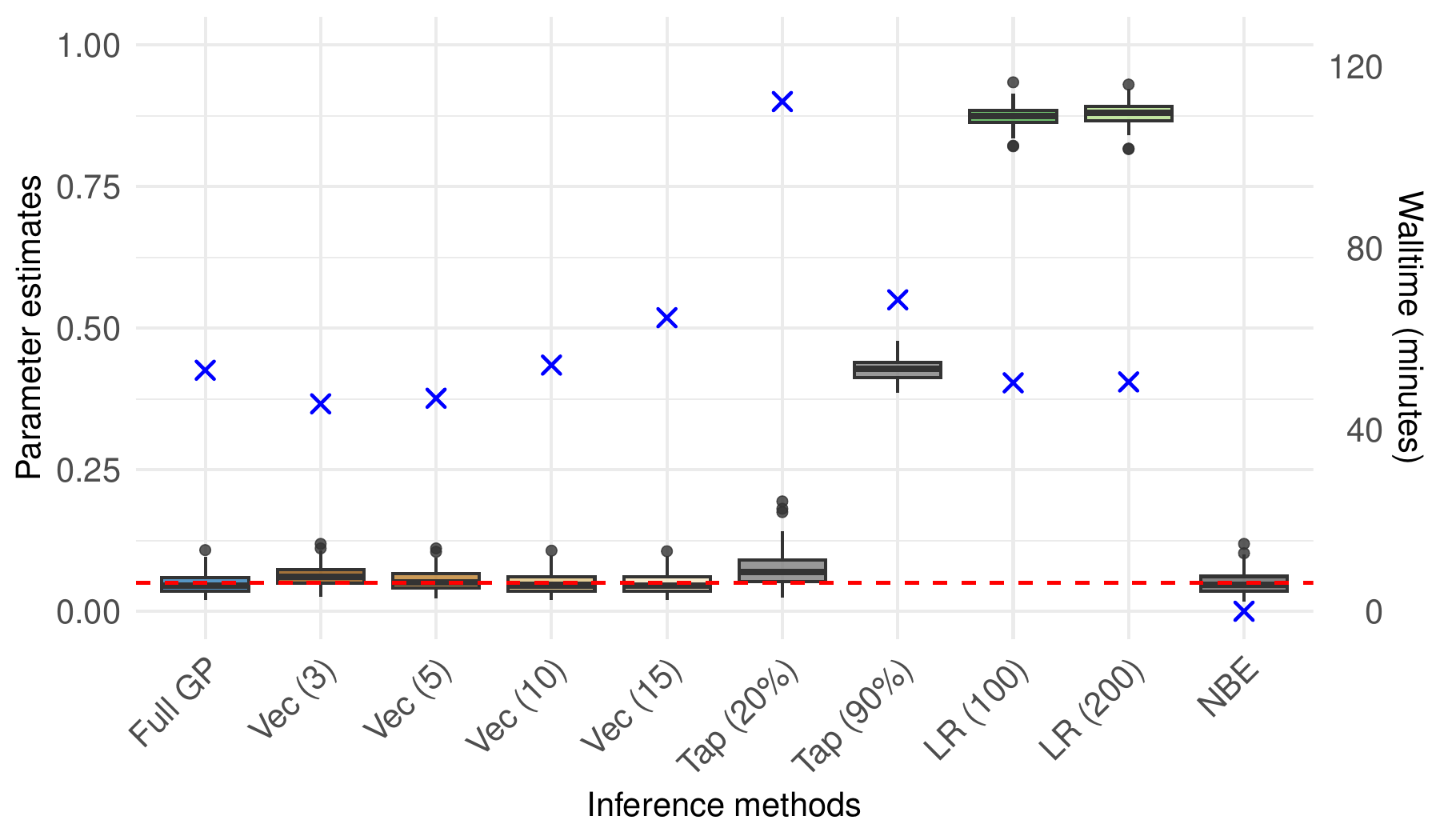}
    \caption{$\alpha=0.05$, $\nu=1.5$, $\rho=0.032$}
\end{subfigure}

\caption{
{\bf Low-dimensional study.} Comparison of parameter estimation accuracy and computational cost across inference methods under four representative simulation settings. Here, Vec$(m)$ denotes the Vecchia approximation with conditioning set size $m$, Tap$(p\%)$ denotes covariance tapering with sparsity level $p\%$, and LR denotes the low-rank method. Boxplots show posterior median estimates of the dependence parameter $\alpha$ obtained from 100 simulated datasets, with the true parameter value (dashed red line). Blue crosses denote the walltime (in minutes) required for each method. Panels represent: 
(a) weak dependence ($\alpha=0.05$, $\nu=0.5$, $\rho=0.05$); (b) moderate dependence ($\alpha=0.45$, $\nu=0.5$, $\rho=0.05$); 
(c) strong dependence ($\alpha=0.70$, $\nu=0.5$, $\rho=0.05$); and (d) a smooth latent field ($\alpha=0.05$, $\nu=1.5$, $\rho=0.032$).}

\label{fig:simulation-panels}
\end{figure}

Covariance tapering performs well at low sparsity levels but produces inaccurate inference as sparsity increases. This pattern likely reflects bias induced by truncating long-range extremal dependence, leading to misrepresentation of the underlying spatial correlation structure. Low-rank methods exhibit poor coverage and interval scores, indicating that global basis representations fail to capture local spatial structure. 

The NBE achieves near-nominal coverage with substantially reduced computational costs compared to the likelihood-based approaches, though at the cost of modestly higher interval scores. Practical considerations for NBE are a lengthy training stage and inability to make predictions at unobserved locations. Overall, methods that preserve local spatial extremal dependence provide the most reliable inference, with Vecchia approximations offering the strongest balance between accuracy and computational efficiency.

\vspace*{-0.3cm}\paragraph{High-dimensional setting.}
Figure~\ref{fig:highdim-panels} highlights representative high-dimensional scenarios, while complete results are reported in Tables~\ref{tab:sim-results-high1}--\ref{tab:sim-results-high8} and Figures~\ref{fig:highdim-1}--\ref{fig:highdim-2} of the supplementary material. Performance is broadly consistent with the low-dimensional setting. When computationally feasible ($n=1000$), the full GP provides strong inferential performance and serves as a benchmark. Vecchia approximations remain the most reliable scalable approach, achieving near-nominal coverage and interval scores comparable to the full GP while remaining computationally feasible at larger scales.

\begin{figure}[!t]
\centering
\begin{subfigure}{0.49\textwidth}
    \centering
    \includegraphics[width=\linewidth]{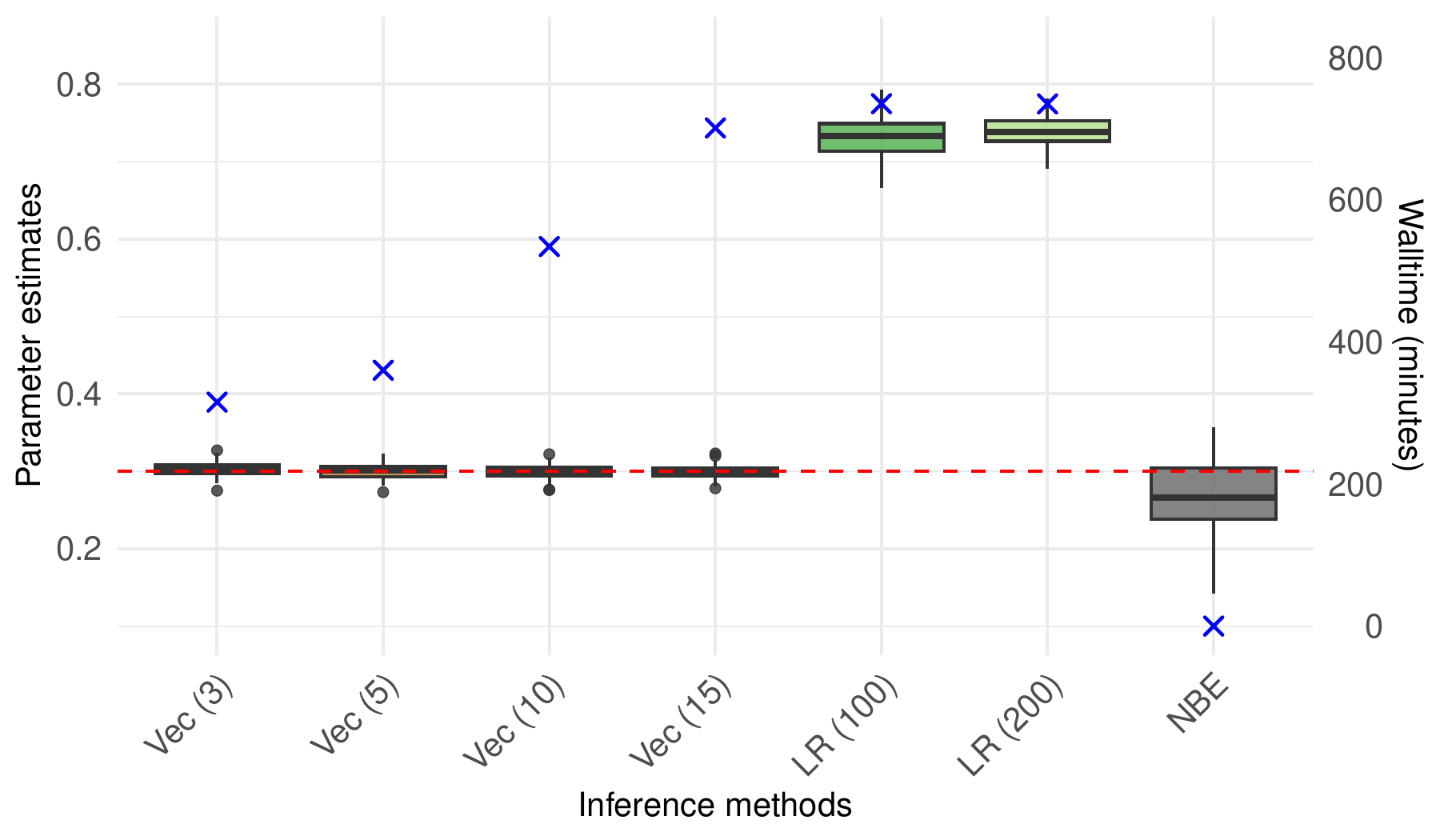}
    \caption{Scenario 1: $\alpha=0.3$, $n=5000$, $T=50$}
\end{subfigure}
\hfill
\begin{subfigure}{0.49\textwidth}
    \centering
    \includegraphics[width=\linewidth]{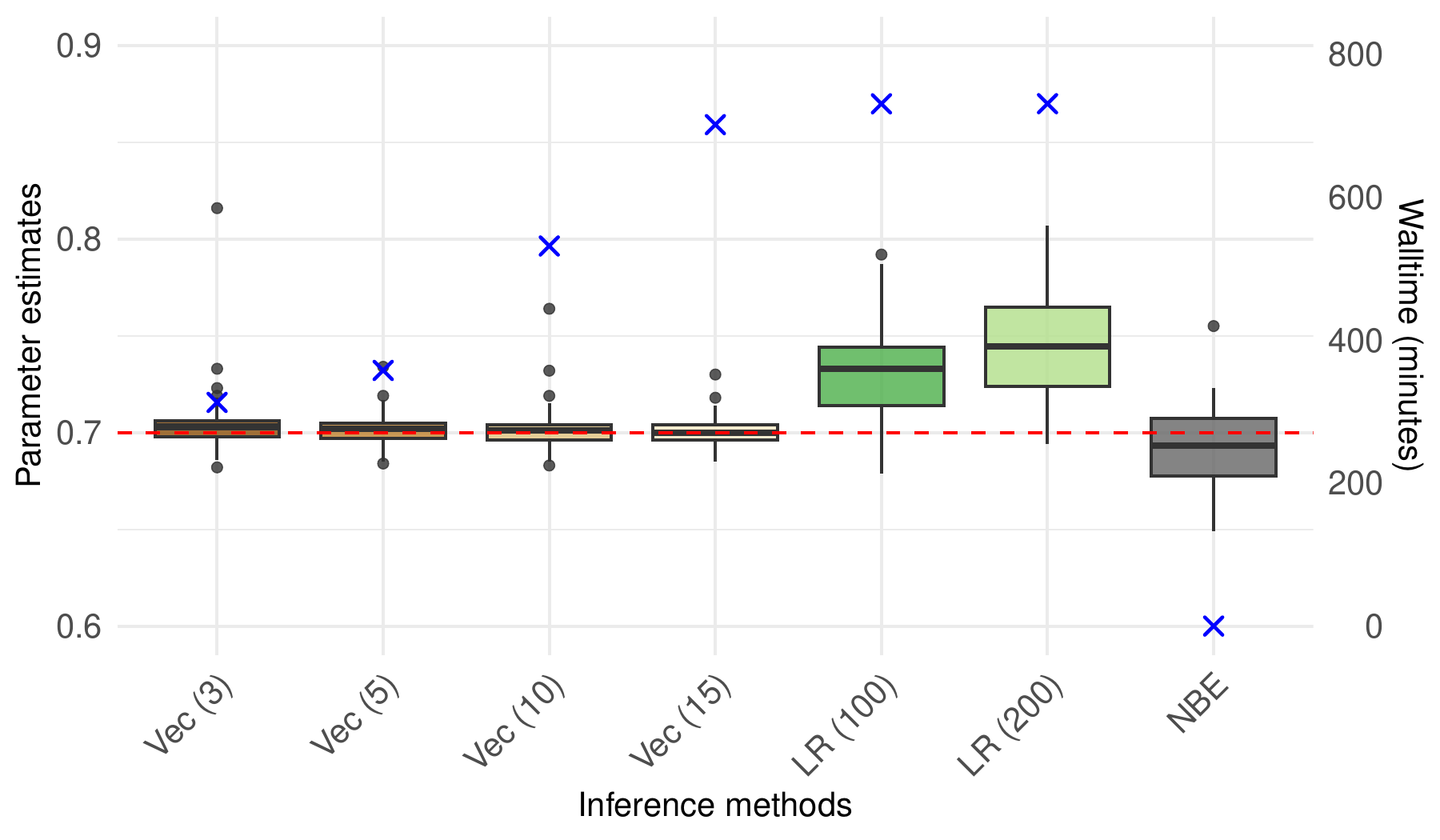}
    \caption{Scenario 2: $\alpha=0.7$, $n=5000$, $T=50$}
\end{subfigure}

\vspace{0.35cm}

\begin{subfigure}{0.49\textwidth}
    \centering
    \includegraphics[width=\linewidth]{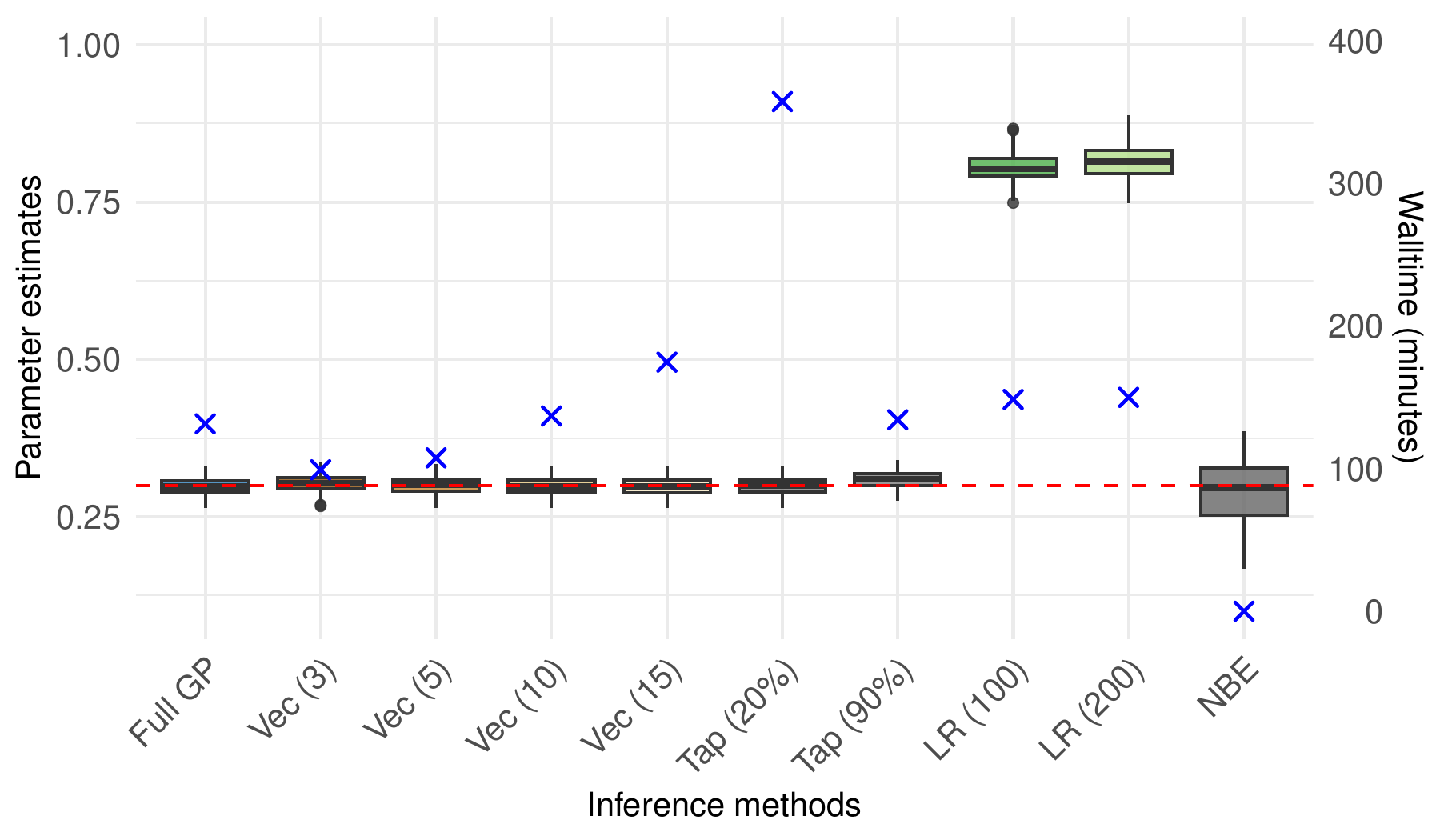}
    \caption{Scenario 3: $\alpha=0.3$, $n=1000$, $T=50$}
\end{subfigure}
\hfill
\begin{subfigure}{0.49\textwidth}
    \centering
    \includegraphics[width=\linewidth]{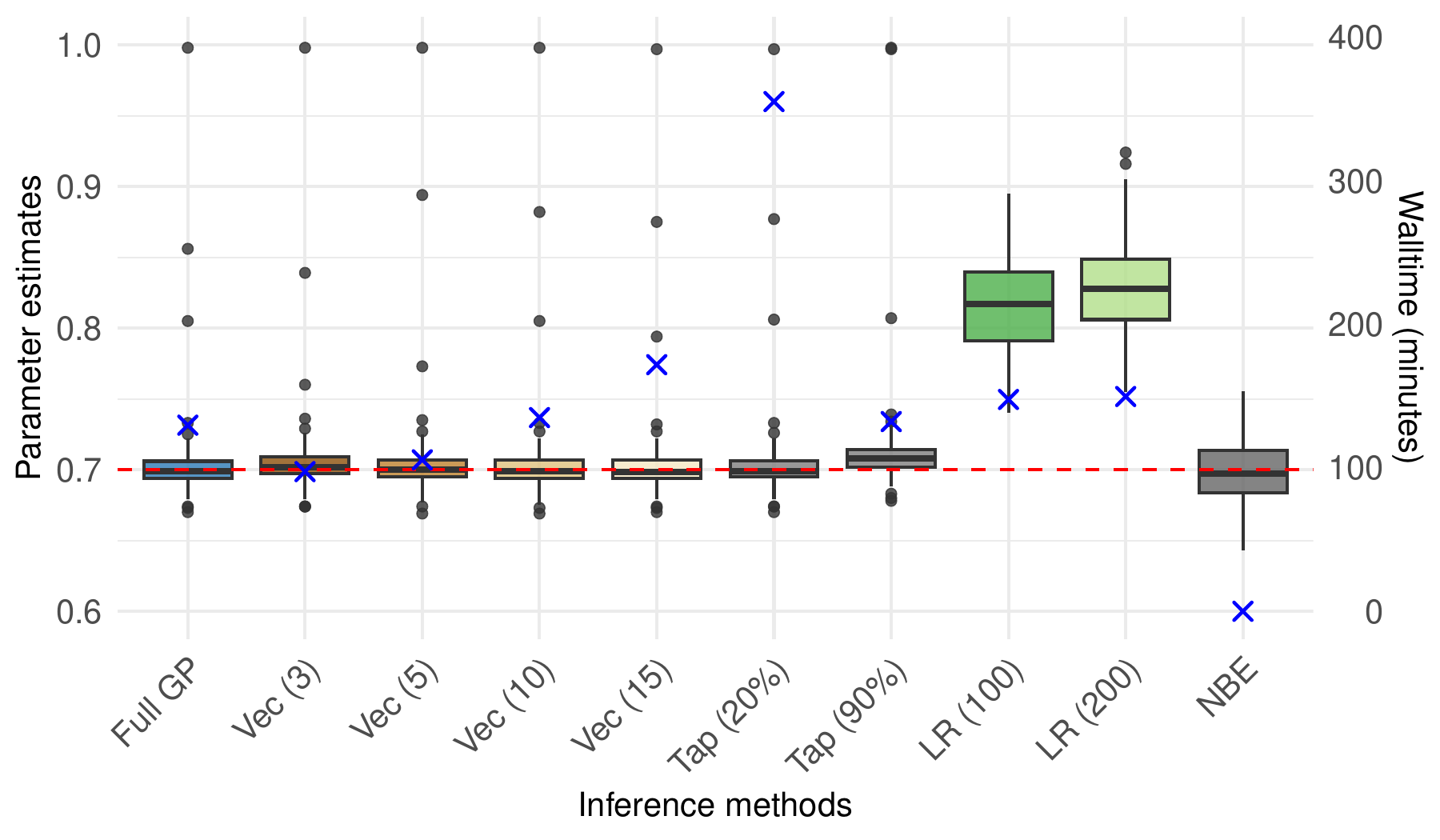}
    \caption{Scenario 4: $\alpha=0.7$, $n=1000$, $T=50$}
\end{subfigure}

\caption{
{\bf High-dimensional study.} Comparison of parameter estimation accuracy and computational cost across inference methods under four representative simulation settings. Boxplots show posterior median estimates of the dependence parameter $\alpha$ obtained from 100 simulated datasets, with the true parameter value (dashed red line). Blue crosses denote the walltime (in minutes) required for each method. Panels represent: (a) moderate dependence, large sample ($\alpha=0.3, n=5000, T=50$); (b) strong dependence, large sample ($\alpha=0.7, n=5000, T=50$); (c) moderate dependence, smaller sample ($\alpha=0.3, n=1000, T=50$); (d) strong dependence, smaller sample ($\alpha=0.7, n=1000, T=50$).
}
\label{fig:highdim-panels}
\end{figure}

Low-rank methods continue to exhibit poor coverage and high interval scores, despite moderate computational speedups. 
Covariance tapering yields competitive predictive performance but unreliable inference, particularly for the range parameter. Increasing sparsity (from $20\%$ to $90\%$) improves computational efficiency but comes at the cost of reduced inferential and predictive performance.

The NBE approach delivered the lowest walltimes by several orders of magnitude, making it the only method capable of near-instantaneous estimation at larger spatial and temporal scales. Coverage and interval scores were adequate, but exhibited more variability in inference than the likelihood-based approaches. Limitations regarding training and prediction mirror those observed in the low-dimensional setting.

Across scalable likelihood-based methods, predictive performance is broadly comparable once local spatial extremal dependence is adequately represented. Increasing temporal replication improves parameter recovery more than increasing spatial resolution. Overall, Vecchia provides the closest approximation to the benchmark, while NBEs offer a computationally efficient alternative for large-scale applications.

\section{Extreme Heat in the Four Corners Region}\label{s:application}
We center our analysis on heat index estimation in the Four Corners region (see Section~\ref{s:motivation}), using this case study to compare LRSM-based inference approaches. 
This analysis examines how modeling spatial extremal dependence influences interpolation of $T_{skin}$, propagation to sub-grid predictions of $T_{2m}$, and downstream heat index estimation.

\subsection{Fitting and interpolation of the surface skin temperature}
We model surface skin temperature at $n=1,209$ grid points over $T=46$ years (1979–2024) across four seasons, holding out 130 grid points for validation. The LRSM model is fitted on each seasonal dataset using smoothness parameters $\nu = 0.5$ and $\nu = 1.5$. Apart from the full Gaussian process likelihood, we consider scalable approximations: low-rank models (ranks 100 and 500), covariance tapering with a spherical taper (20\% and 90\% sparsity), Vecchia approximations with conditioning set sizes 3, 5, 10, and 20, and the NBEs.

Seasonal posterior credible intervals for $\alpha$ in Tables~\ref{tab:data_results_alpha1} ($\nu=0.5$), and~\ref{tab:data_results_alpha2} ($\nu=1.5$) of the supplementary material, lie below or near the critical value of $0.5$ in summer (JJA) and fall (SON), indicating asymptotic independence. In spring (MAM) and winter (DJF), $\alpha$ is substantially larger; however, the max-stability assumption is not supported uniformly over space (see Section~\ref{s:exploratory}). For the winter season, where max-stability tests largely fail to reject, the posterior mean of $\alpha$ exceeds $0.9$ in both Table~\ref{tab:data_results_alpha1} and Table~\ref{tab:data_results_alpha2}. This indicates strong asymptotic dependence and confirms that the model captures this behavior.

\begin{table}[!t]
\small
\centering
\resizebox{\textwidth}{!}{\begin{tabular}{lllll}
  \hline
Method & Spring & Summer & Fall & Winter \\ 
  \hline
Full GP & 0.72 (0.70, 0.74) & 0.25 (0.23, 0.27) & 0.51 (0.49, 0.53) & 0.91 (0.89, 0.93) \\ 
  Low Rank - 100 & 0.79 (0.79, 0.79) & 0.85 (0.85, 0.85) & 0.81 (0.81, 0.81) & 0.76 (0.76, 0.76) \\ 
  Low Rank - 500 & 0.81 (0.81, 0.81) & 0.84 (0.84, 0.84) & 0.81 (0.81, 0.81) & 0.80 (0.80, 0.80) \\ 
  Taper - 90\% & 0.66 (0.64, 0.68) & 0.26 (0.24, 0.28) & 0.46 (0.44, 0.48) & 0.87 (0.85, 0.89) \\ 
  Taper - 20\% & 0.71 (0.69, 0.73) & 0.25 (0.23, 0.27) & 0.50 (0.48, 0.52) & 0.90 (0.88, 0.93) \\ 
  Vecchia - m=3 & 0.68 (0.66, 0.70) & 0.24 (0.22, 0.26) & 0.47 (0.44, 0.49) & 0.87 (0.85, 0.89) \\ 
  Vecchia - m=5 & 0.74 (0.72, 0.76) & 0.25 (0.23, 0.27) & 0.52 (0.50, 0.55) & 0.94 (0.92, 0.97) \\ 
  Vecchia - m=10 & 0.72 (0.70, 0.74) & 0.25 (0.23, 0.27) & 0.51 (0.49, 0.53) & 0.91 (0.89, 0.93) \\ 
  Vecchia - m=20 & 0.72 (0.70, 0.74) & 0.25 (0.23, 0.27) & 0.51 (0.48, 0.53) & 0.91 (0.89, 0.93) \\ 
   Neural Bayes&  0.66 (0.20, 0.74)& 0.33 (0.05, 0.30) & 0.56 (0.11, 0.58) & 0.72 (0.32, 0.82)\\ 
   \hline
\end{tabular}}
\caption{Posterior mean estimates and 95\% credible intervals for the $\alpha$ parameter under the Mat\'ern covariance function with smoothness parameter $\nu = 0.5$, for the NLDAS surface skin temperature application. Results are shown for spring, summer, fall, and winter. Each method is fitted separately to deseasonalized data across 1,209 grid locations over 46 years.}
\label{tab:data_results_alpha1}
\normalsize
\end{table}

\begin{figure}[!t]
\begin{subfigure}{0.5\textwidth}
    \centering
    \includegraphics[width=\linewidth]{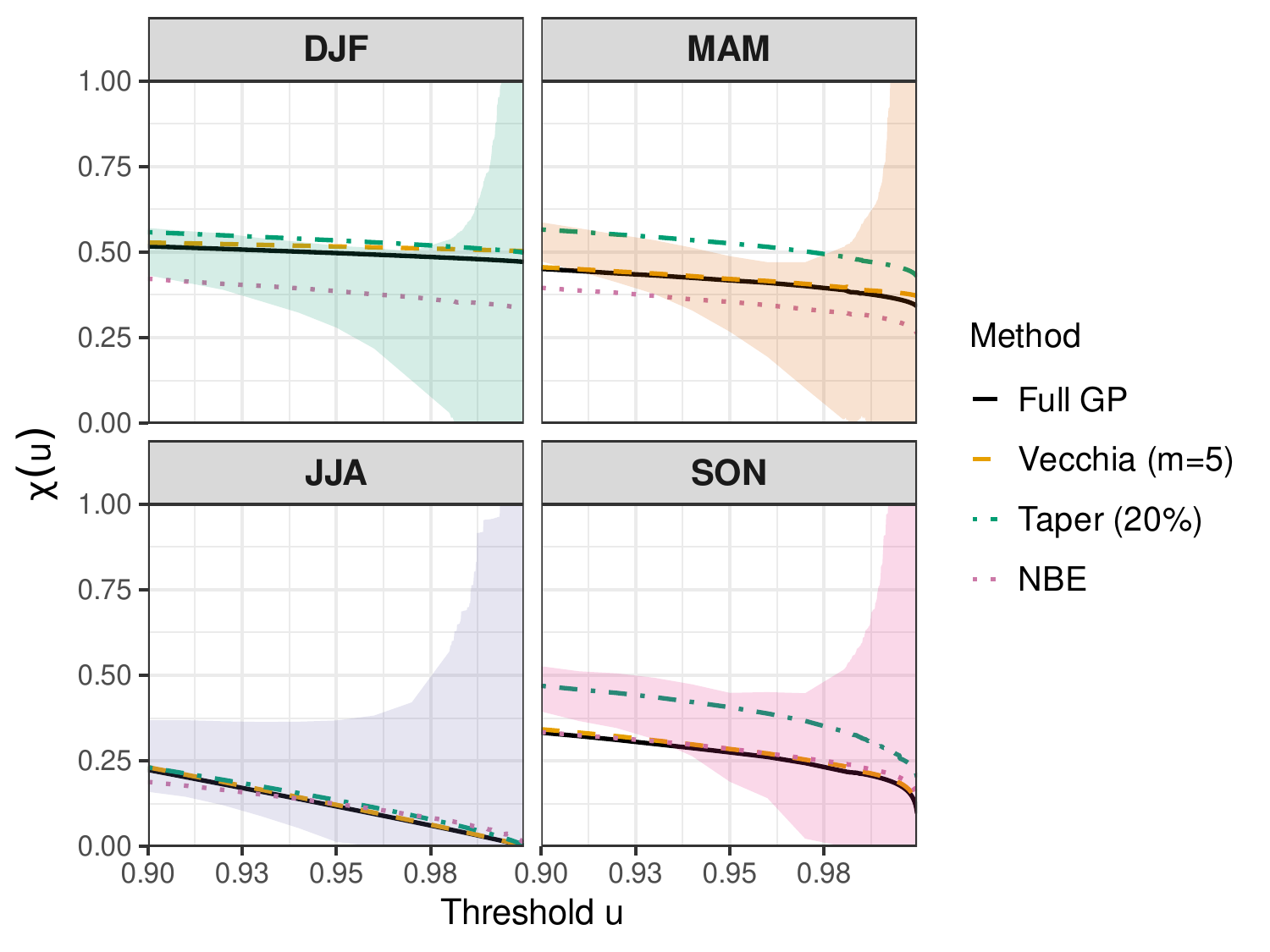}
    \caption{Smoothness $\nu = 0.5$}
\end{subfigure}
\hfill
\begin{subfigure}{0.5\textwidth}
    \centering
    \includegraphics[width=\linewidth]{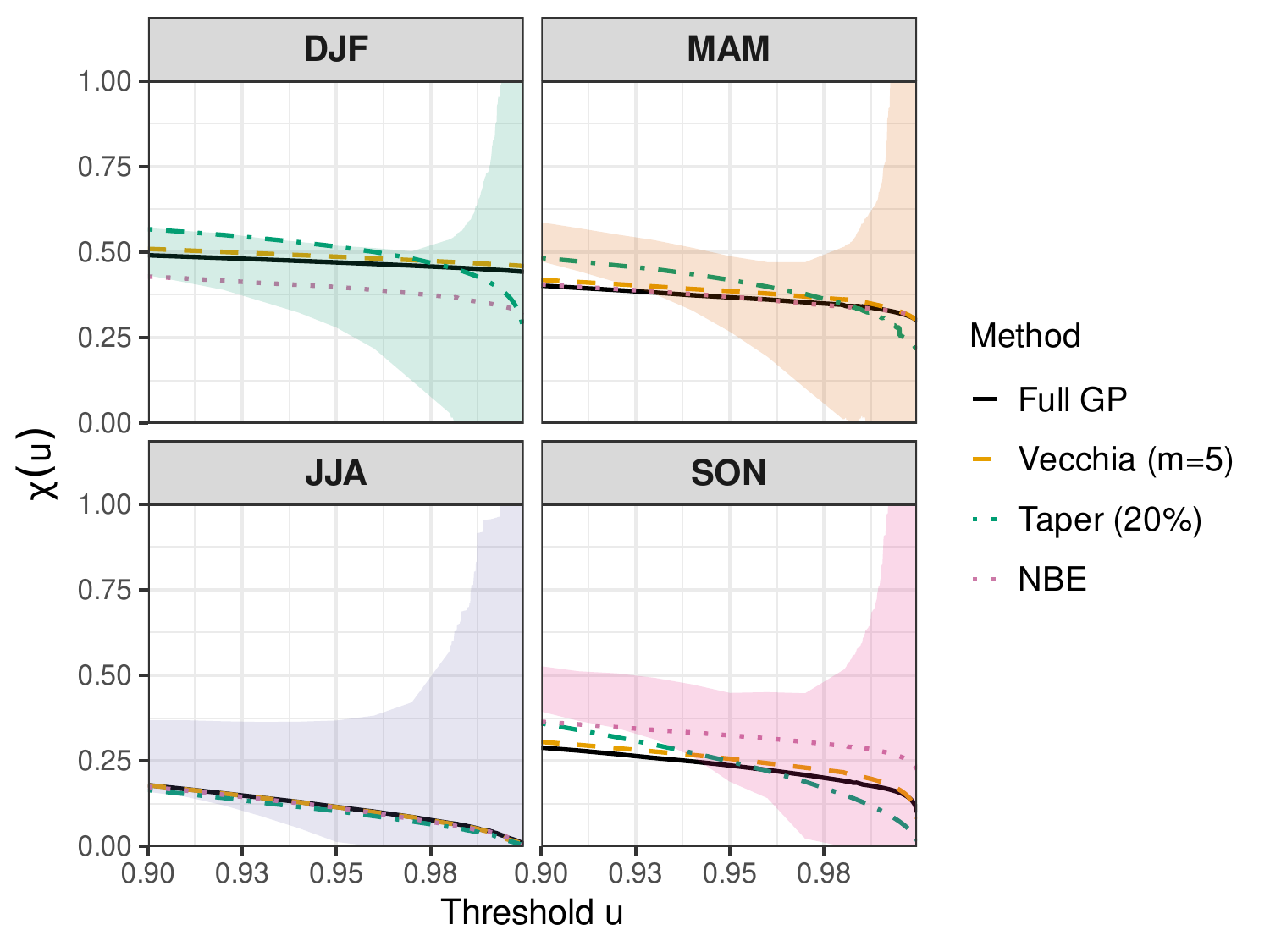}
    \caption{Smoothness $\nu = 1.5$}
\end{subfigure}

\caption{
Estimated $\chi_u(h)$ extremal dependence curves for the NLDAS $T_{skin}$ data, at spatial lag $h=0.177$ ($\sim75$ km). 
Results are shown for the four seasons (DJF, MAM, JJA, SON). The left and right panels correspond to Mat\'ern covariance functions with smoothness parameters $\nu=0.5$ and $\nu=1.5$, respectively. Modeling approaches include the full Gaussian process (Full GP), the Vecchia approximation with conditioning set size $m=5$, covariance tapering with approximately $20\%$ sparsity, and the Neural Bayes Estimator (NBE). Shaded regions represent empirical uncertainty envelopes of $\chi_u(h)$ estimated from the observed data. 
}

\label{fig:chi_seasonal_nldas}
\end{figure}

Together with the aforementioned max-stability diagnostics,  the estimated $\chi$-plots (Figure~\ref{fig:chi_seasonal_nldas}) and posterior summaries of $\alpha$ highlight the need for models that can accommodate \emph{both} asymptotic dependence and asymptotic independence, like the proposed LRSM. Max-stable models fail to capture weakening dependence at finite thresholds~\citep[see also][]{huser2025modeling}, while asymptotically independent models will miss the strong asymptotic dependence observed during winter and in localized regions.  

Across inference methods, the full GP, tapering, and Vecchia approximations yield similar estimates of $\alpha$, with clear seasonal variation. The $\chi_u(h)$ curves in Figure~\ref{fig:chi_seasonal_nldas} support these findings, showing rapid decay in summer and fall and sustained dependence in winter.

\subsection{Validation using sub-grid GHCN stations}
\begin{figure}[!t]
\centering
\includegraphics[width=\textwidth]{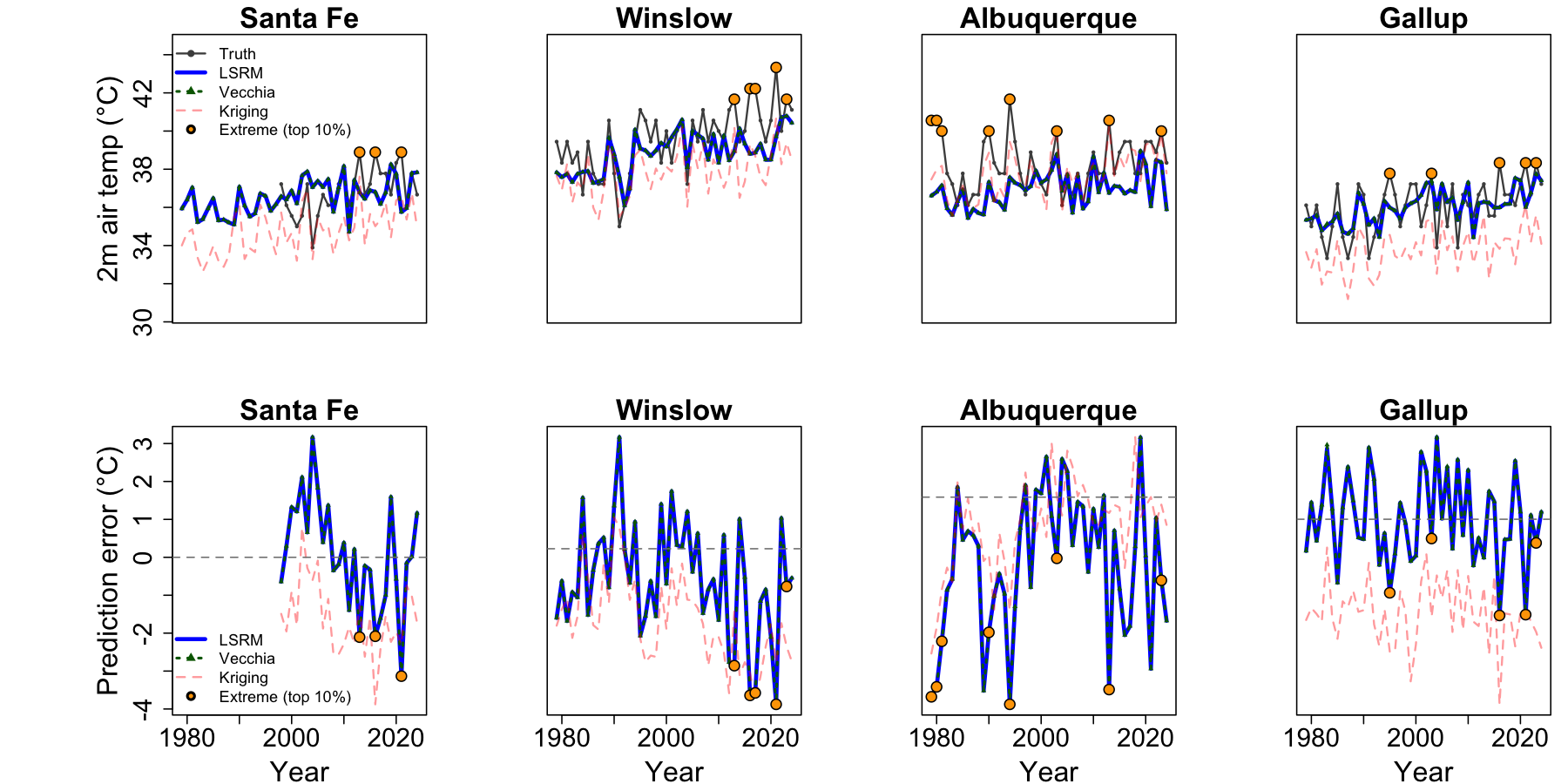}
\caption{
\textbf{Top row:} Observed (from GHCN) and predicted summer maximum 2-m air temperature at four validation sites. The LRSM-based predictions are obtained by first fitting the LRSM to $T_{skin}$, then propagating the resulting posterior predictions of $T_{skin}$ through the trained XGBoost calibration model to obtain $T_{2m}$.
\textbf{Bottom row:} Signed prediction errors (predicted minus observed), where negative values indicate underestimation. Extreme years are highlighted with circled points. Traditional kriging of $T_{2m}$ underestimates high temperatures, while LRSM predictions stay closer to zero and better capture extremes. Vecchia with five neighbors closely matches the LRSM based on the full GP.
}
\label{fig:ghcn_validation}
\end{figure}
An XGBoost model is trained by integrating complementary data sources to map \textit{interpolated} $T_{skin}$ to $T_{2m}$ using physically motivated covariates that modulate the {$T_{skin} - T_{2m}$} gradient. XGBoost is widely used in remote sensing and environmental prediction due to its ability to capture nonlinear and nonadditive relationships \citep{schutz2025evaluating}.  Boosting methods are well suited for tabular data due to their computational efficiency, robustness in moderate sample sizes, and strong predictive performance with minimal tuning \citep{chen2016xgboost}. Boosting has also been adopted for modeling of spatial-extremes \citep{koh2023gradient,velthoen2023gradient} and, in combination with Bayesian spatial models, XGBoost has been shown to reliably forecast extreme environmental events \citep{hu2026xgboost}.

We emphasize that our XGBoost model is not trained as a standalone predictor of $T_{2m}$ from raw inputs alone. Following the literature on remotely sensed estimation of near-surface air temperature, we combine LRSM-based posterior predictions of $T_{skin}$ with vegetation and meteorological variables (e.g., NDVI, wind speed, elevation, etc.) to capture evapotranspiration, aerodynamic resistance, and land--atmosphere coupling \citep{oyler2016remotely,cho2020improvement}. Thus, the quality of the downstream $T_{2m}$ estimates depends directly on the quality of the upstream LRSM fit to the spatial extremal dependence in $T_{skin}$. Advantages of XGBoost in this setting include its built-in handling of missing values, which is important for satellite products with gaps due to cloud cover or sensor limitations, and its ability to quantify variable importance, allowing us to assess which variables are most influential for predicting $T_{2m}$ across seasons and locations. The details of the XGBoost implementation are provided in Section~\ref{sec:XGBoost} of the supplementary material.

Gridded products provide the areal \(T_{2m}\) average on a lattice of grid cells, but many applications require predictions at arbitrary point locations within the domain (i.e., a sub-grid). To validate the modeling framework, we evaluate the trained XGBoost model at four sub-grid independent GHCN stations (Santa Fe, NM; Winslow, AZ; Albuquerque, NM; and Gallup, NM), using posterior predictions (from the fitted LRSM) of summer maximum $T_{skin}$ as inputs to estimate point-level $T_{2m}$. The relevant topographic and meteorological control is \emph{station-specific}, so we use the elevation, NDVI, and other covariates at the station coordinates as the predictor for sub-grid $T_{2m}$. As a competing approach, we consider standard GP kriging of the NLDAS $T_{2m}$ surface, which is a GP model fitted directly to $T_{2m}$ without a random scaling component and then kriged to the four validation locations.

Across the validation sites (Figure~\ref{fig:ghcn_validation}), the LRSM--XGBoost pipeline based on the full GP and its Vecchia approximation with conditioning set size $m=5$ outperform standard GP kriging of $T_{2m}$. The full GP and Vecchia approximations show strong agreement, producing near-identical results across all locations. In contrast, traditional GP kriging tends to underestimate summer maximum $T_{2m}$, particularly for the most extreme temperatures. Although NLDAS $T_{2m}$ is highly reliable, it remains a 1/8-degree grid box average and therefore does not necessarily interpolate well to a specific sub-grid point location (or a finer-resolution/downscaled grid box). This difference is most pronounced in Santa Fe, Winslow, and Gallup, where kriging substantially underestimates high temperatures. These results highlight that accurate downstream prediction of $T_{2m}$ depends on first capturing the spatial extremal dependence structure in $T_{skin}$ through the LRSM.

Here, direct kriging of \(T_{2m}\) serves as a pragmatic baseline accessible to practitioners. A more principled statistical formulation is a change-of-support prediction model \citep[e.g.,][]{wikle2005combining}, but existing methods are primarily designed under simpler GP settings. Extending this to spatial extremes is technically challenging and beyond the scope of this study, as it requires embedding change-of-support modeling within a spatial extremes framework. In contrast, our LRSM--XGBoost pipeline provides a practical workaround by learning the \(T_{\mathrm{skin}}\)–\(T_{2m}\) relationship and generates point-level predictions by evaluating the learned mapping at point-level covariates (e.g., elevation, NDVI) rather than areal averages.

\subsection{Downstream application: Heat index}
After validating the LRSM--XGBoost pipeline, we use it to propagate uncertainty from the fitted LRSM into downstream heat-risk summaries. Specifically, posterior predictions of $T_{skin}$ at holdout locations from the fitted LRSM are passed through the trained XGBoost calibration model to obtain posterior predictions of $T_{2m}$, which are then combined with relative humidity (i.e., 2-m dew point temperature) to compute the ``feels-like'' NWS heat index using the \texttt{weathermetrics} \texttt{R} package. In arid regions, the evaporative cooling of low humidity can suppress heat index values relative to $T_{2m}$ alone; nevertheless, high-risk thresholds (e.g., 103$^\circ$F $\approx$ 39.4$^\circ$C) are sensitive to errors in the upper-tail of $T_{2m}$. Here, we directly use the NLDAS 2-m dew point temperatures for the holdout locations; if 2-m dew point temperatures were truly missing at an unobserved location, the same $T_{skin}$--$T_{2m}$ modeling workflow could be applied to impute it. 

We focus on 2000, the year with the hottest summer in the Four Corners region in the NLDAS $T_{skin}$ record.
\begin{figure}[!t]
\centering
\includegraphics[width=0.9\textwidth]{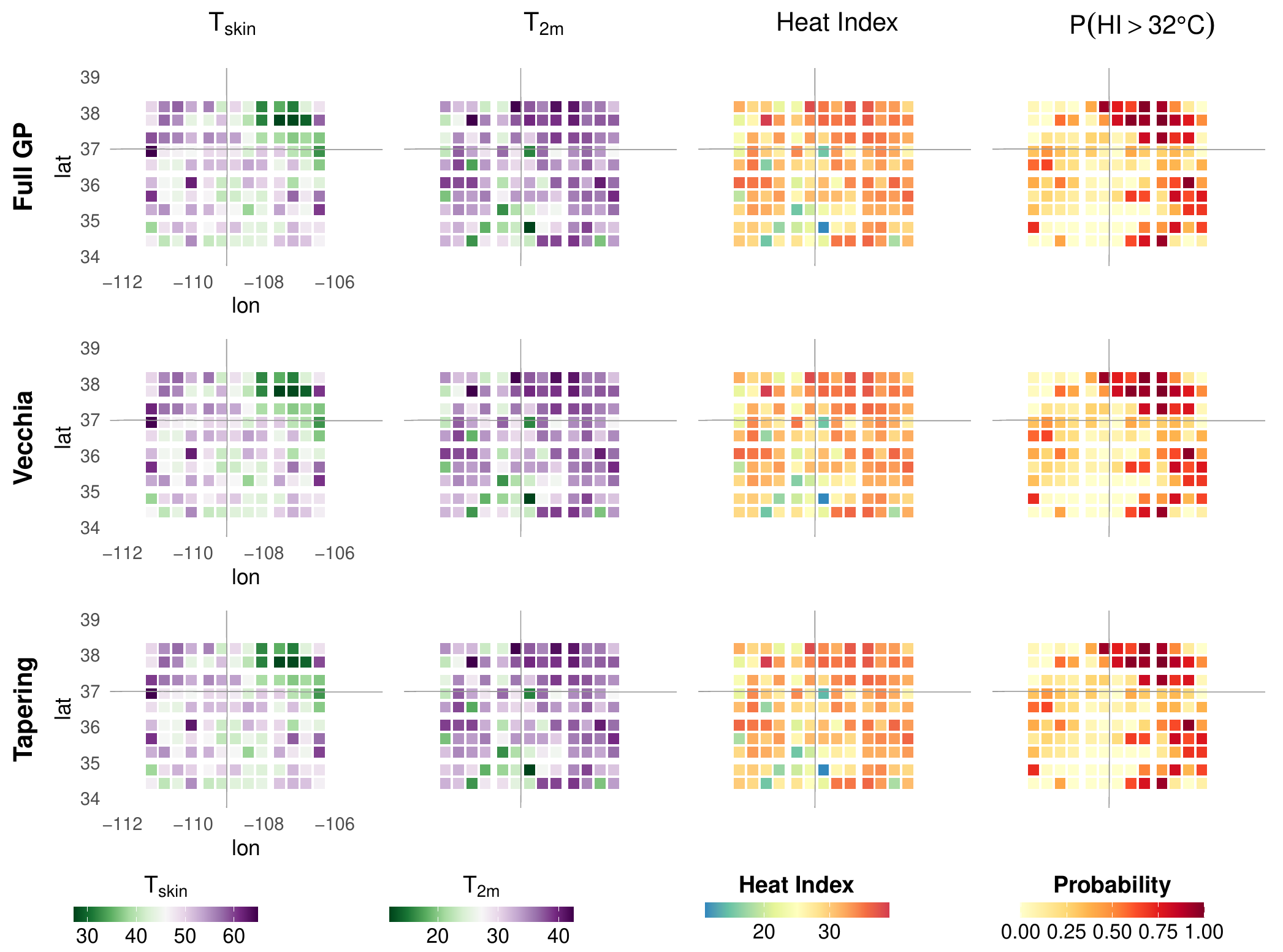}
\vskip -0.1cm
\caption{Posterior median fields of $T_{skin}$, $T_{2m}$, and heat index (HI), and exceedance probabilities $\Pr(\mathrm{HI} > 32^\circ\mathrm{C})$ for 2000. Rows show inference methods: Full GP {\bf (top)}, Vecchia ($m=5$; {\bf middle}), and covariance tapering (20\% sparsity; {\bf bottom}). The threshold $\Pr(\mathrm{HI} > 32^\circ\mathrm{C})$ corresponds to the National Weather Service ``extreme caution'' category. Supplementary Figure~\ref{fig:sd_posterior_summary_2000} displays pointwise 95\% posterior credible intervals for the $T_{2m}$ and heat index predictions and for the exceedance probabilities shown here.}
\label{fig:posterior_summary_2000}
\end{figure}
Our goal is to compare statistical models for surface temperature extremes by propagating their outputs through XGBoost and into downstream heat index. The full GP, Vecchia approximation with conditioning set size $m=5$, and tapering with $20\%$ sparsity perform similarly in posterior inference of $\alpha$ (Table~\ref{tab:data_results_alpha1}); therefore, we focus on these three methods in the downstream analysis. Figure~\ref{fig:posterior_summary_2000} summarizes the posterior median fields obtained by this workflow, showing how $T_{skin}$ propagates through $T_{2m}$ to the heat index. We also report the posterior exceedance probability $\Pr(\mathrm{HI}>32^\circ\mathrm{C})$, a useful metric linked to heat-related health risk. Here, $32$--$41^\circ\mathrm{C}$ ($90$--$105^\circ\mathrm{F}$) corresponds to the NWS ``Extreme Caution'' category \citep{noaa_heat_safety}.

The values of $\Pr(\mathrm{HI}>32^\circ\mathrm{C})$ are nearly indistinguishable across all methods. Similar agreement is observed for $T_{skin}$, $T_{2m}$, and $\mathrm{HI}$ (Figure~\ref{fig:posterior_summary_2000}). Given this, the Vecchia approximation with $m=5$ is a strong scalable alternative to the full GP LRSM, achieving near-indistinguishable results with substantially lower computational cost.
\begin{figure}[!t]
\centering
\includegraphics[width=0.56\linewidth]{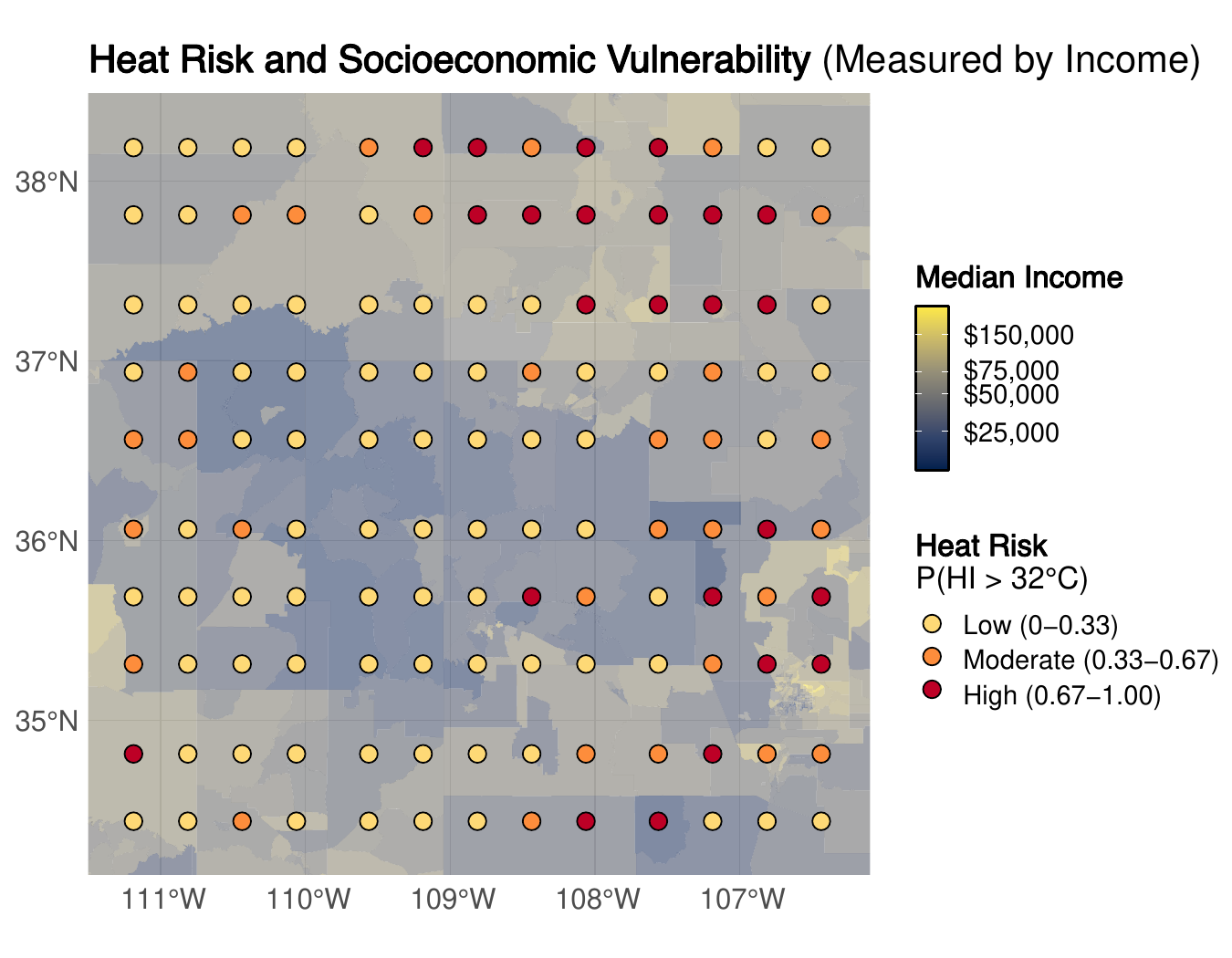}
\vskip -0.4cm
\caption{
Heat risk and socioeconomic vulnerability in the Four Corners region. Median household income is shown by census tract, with posterior median exceedance probabilities $\Pr(\mathrm{HI}>32^\circ\mathrm{C})$ from the LRSM--XGBoost workflow plotted at holdout locations. High-risk, low-income locations indicate greater vulnerability.}
\label{fig:vulnerability_map}
\end{figure}
\vspace*{-0.3cm}\paragraph{Practical implications}
This framework enables the use of high-resolution satellite $T_{skin}$ products to generate spatially detailed $T_{2m}$ and heat index estimates beyond what coarse reanalyses (e.g., ERA5 at $\sim$31 km) or NLDAS-like products can represent. By combining socioeconomic indicators with \emph{posterior exceedance probabilities} derived from the LRSM--XGBoost workflow, the proposed LRSM framework can be used to identify vulnerable communities where elevated heat risk and low median household income co-occur. Figure~\ref{fig:vulnerability_map} links the posterior heat-index exceedance surface $\Pr(\mathrm{HI}>32^\circ\mathrm{C})$ to the median household income at the census-tract level. Several locations exhibit both elevated heat risk, as indicated by $\Pr(\mathrm{HI}>32^\circ\mathrm{C})>0.67$, and low median household income ($<\$50{,}000$), suggesting that extreme heat risk is spatially heterogeneous and can disproportionately affect communities with fewer resources to mitigate exposure. High-risk locations are not spread uniformly across the region, but are clustered in parts of the central-to-eastern and northern regions. High-risk points appear in both lower- and higher-income tracts, indicating that heat risk and income do not align uniformly. This spatial heterogeneity underscores the need to jointly analyze exceedance surfaces and tract-level socioeconomic data, as true vulnerability reflects both heat risk and the local capacity to respond. More broadly, these results demonstrate that spatial extremes models can inform downstream societal impacts and targeted mitigation strategies.

\section{Discussion}\label{s:discussion}
This case study analyzes a 46-year record of extreme surface skin temperatures over the Four Corners region and evaluates their downstream impacts for climate risk management. We present an end-to-end analysis from the perspective of a statistical risk consultant, encompassing data acquisition, exploratory analysis, model fitting and validation, and downstream impact assessment. To address the computational challenges of spatial extremes, we develop scalable inference methods based on the L\'evy Random Scale Mixture (LRSM), including a likelihood-free, AI-based approach. We benchmark these methods on the surface skin temperature data, comparing inferential accuracy, predictive performance, and computational cost. To quantify downstream risk, we translate surface skin temperature into 2-m air temperature and combine it with humidity to estimate heat indices, demonstrating how modeling choices for extremes propagate into heat stress assessments.

Beyond the case study, simulation studies evaluate the proposed scalable approaches across varying spatial and extremal dependence settings. Methods that best preserve local spatial structure, especially full Gaussian processes and higher-order Vecchia approximations, consistently perform best. NBEs provide dramatic computational speedups through amortized learning, delivering near-instantaneous inference after training. However, the NBE approach still requires up-front costs for model training in the amortization step, and the current formulation does not offer spatial predictions. Vecchia approximation models with smaller conditioning sets offer a favorable balance between accuracy and runtimes. These findings illustrate the trade-offs among modern computational tools for spatial extremes with block maxima and inform their use in large-scale environmental applications.

The assumption of stationary extremal dependence is limiting, as localized climate regimes may induce varying spatial dependence features. Adapting the proposed framework to accommodate nonstationarity would improve model flexibility. Extending the framework to encompass fully unified Bayesian hierarchical models that jointly capture marginal and extremal dependence \citep[e.g.,][]{zhang2024leveraging, shi2024spatial} remains an important direction for future research, particularly in developing computationally scalable approaches for large spatial datasets. The costs associated with the marginal distributions are substantial since neither $F_X^{-1}(\cdot; \alpha)$ nor $f_X(\cdot; \alpha)$ has a closed form, requiring numerical approximation and evaluation at all $n \times T$ location–time pairs. Developing tractable approximations or surrogate formulations would substantially improve computational efficiency. 

Beyond the current LRSM setting under a block-maxima framework, the scalable inference strategies considered here are not inherently limited to this model class. Methods such as Vecchia approximations, covariance tapering, and amortized likelihood-free inference may also prove useful in peaks-over-threshold formulations and in other models for spatial extremal dependence, including Bayesian conditioning-based approaches \citep{Simpson2023}. Developing and assessing such extensions is a promising next step toward a broader computational toolkit for spatial extremes.
\vspace{0.2in}

\noindent \textbf{Data Availability Statement} The surface skin temperature data used in this study are derived from the North American Land Data Assimilation System (NLDAS) Phase 2. The processed data are available at \url{https://doi.org/10.5281/zenodo.19632717}.

\clearpage

{
\setstretch{0.75}

\setlength{\bibsep}{1pt}
\addcontentsline{toc}{section}{References}
\bibliography{main}
}
\newpage
\appendix
\section*{Supplementary Material for``Spatial Extremes at Scale: A Case Study of Surface Skin Temperature and Heat Risk in the United States''}

\vspace{1em}

\setcounter{page}{1}
\renewcommand{\thepage}{S.\arabic{page}}
\setcounter{section}{0}
\renewcommand{\thesection}{S.\arabic{section}}
\setcounter{equation}{0}
\renewcommand{\theequation}{S.\arabic{equation}}
\setcounter{table}{0}
\renewcommand{\thetable}{S.\arabic{table}}
\setcounter{figure}{0}
\renewcommand{\thefigure}{S.\arabic{figure}}
\setcounter{algorithm}{0}
\renewcommand{\thealgorithm}{S.\arabic{algorithm}}

\begin{abstract}
This supplement provides methodological, theoretical, and computational details supporting the main manuscript. We describe NLDAS surface skin temperature data preprocessing, seasonal GEV modeling with Gaussian process priors and uniform-scale transformations. Exploratory analyses and diagnostics highlight that max-stable models may not be appropriate for modeling seasonal extremes. We develop theoretical properties of the L\'evy random scale mixture model (LRSM), including extremal dependence and marginal behavior, and provide derivations for the LRSM likelihood functions. Details of our scalable inference strategies, including Vecchia approximations, covariance tapering, and low-rank models, are provided alongside discussion of the implementation procedures. Additional simulation results under model misspecification are provided, as well as supplemental figures and tables from the primary simulation study. The supplement is organized into sections on data processing, diagnostics, theory, computation, and simulations. \end{abstract}

\setcounter{tocdepth}{2}   
\tableofcontents

\newpage

\def\spacingset#1{\renewcommand{\baselinestretch}%
{#1}\small\normalsize} \spacingset{1}

\doublespacing
\numberwithin{equation}{section}
\numberwithin{figure}{section}

\section{NLDAS Data Processing}\label{sm:NLDAS_process}
In this section, we describe how we process surface skin temperature data from the NLDAS (see Sections~\ref{s:motivation:data} and \ref{s:exploratory} in the main text). In particular, we detail the construction of seasonal block maxima and their transformation to the uniform scale.

The Four Corners study region spans longitudes $[-111.2,-106.4]$ and latitudes $[34.44,38.19]$, corresponding to ranges of $4.8^\circ$ and $3.75^\circ$, respectively. Using a mean latitude of $36.31^\circ$, one degree of latitude is approximately $111.3$ km, while one degree of longitude is approximately $111.3\cos(36.31^\circ)\approx 89.7$ km. This yields an approximate domain size of $430.6$ km in the longitudinal direction and $417.5$ km in the latitudinal direction yielding a maximum (diagonal) distance of $600$ km. In this study, the longitude and latitude coordinates were scaled to the unit square.

For data processing, we first assign each month to a climatological season:
DJF$=\{12,1,2\}$, MAM$=\{3,4,5\}$, JJA$=\{6,7,8\}$, and SON$=\{9,10,11\}$. To ensure that DJF spans a single season year, December is advanced to the subsequent calendar year. Second, for each of the 1,209 grid locations, we extract seasonal block maxima from the daily series for each season $d\in\{\text{DJF, MAM, JJA, SON}\}$ and season year $t$, denoted by
\begin{equation*}
\{Y^d_{it} \equiv Y_t^d(\bs_i): i=1,\ldots,1209, \; t=1,\ldots,46\}.
\end{equation*}
Seasonality is handled by fitting separate GEV models for each season, effectively removing the intra-annual cycle.  
Within each season $d$ and location $\bs_i$, we explicitly account for measurement errors and relate the observed seasonal maxima $Y^d_{it}$ to a latent GEV field through a Gaussian measurement model with unknown error variance $\sigma_\varepsilon^2$:  
\begin{equation}\label{eqn:data_model}
    \begin{split}
        Y^d_{it}\mid (X^d_{it}, \sigma_\varepsilon^2 )&= X^d_{it} + \varepsilon_{it}, 
        \qquad \varepsilon_{it}\stackrel{\text{iid}}{\sim}\mathcal{N}(0,\sigma_\varepsilon^2),\\
        X^d_{it}\mid \mu^d(\bs_i),\,\sigma^d(\bs_i),\,\xi^d(\bs_i) &\stackrel{\text{iid}}{\sim} 
        \mathrm{GEV}\!\left(\mu^d(\bs_i),\,\sigma^d(\bs_i),\,\xi^d(\bs_i)\right),
        \quad t=1,\ldots,46,
    \end{split}
\end{equation}
where the conditional independence structure of the latent GEV field $\{X^d_{it}\}$ follows the standard spatial GEV model 
\citep[see, e.g.,][]{cooley2007bayesian, risser2019probabilistic}, augmented with a Gaussian nugget term to mitigate small-sample noise. 

To propagate spatial information, we impose a Gaussian process (GP) prior on each parameter field
$\kappa^d(\bs_i) \in \{\mu^d(\bs_i),\sigma^d(\bs_i),\xi^d(\bs_i)\}$,
using elevation as a fixed effect and a Matérn covariance:
\begin{equation}\label{eqn:para_model}
    \begin{split}
        \kappa^d(\bs_i) \;=\; \beta^\theta_0 &+ \beta^\theta_1\,\mathrm{elev}(\bs_i) + \eta(\bs_i),\qquad
        \eta(\bs) \sim \mathcal{GP}(0,C_{\boldsymbol{\theta}^d}(\bs,\bs')),
    \end{split}
\end{equation}
where $\mathrm{elev}(\bs_i)$ denotes elevation (km), and $C_{\boldsymbol{\theta}^d}(\cdot,\cdot)$ is the Mat\'ern correlation function with range $\rho^d$, smoothness $\nu^d$, and marginal variance $\sigma^{2,d}$; see Equation~\eqref{eq:MATERN} for its expression.

Applying a full Bayesian MCMC fit for the hierarchical model defined by~\eqref{eqn:data_model} and \eqref{eqn:para_model} can be computationally intensive for $n=1209$ locations.  
Instead, we obtain the joint posterior mode, or maximum \textit{a posteriori} (MAP) estimate, which can be very efficiently obtained using gradient-based optimization with automatic differentiation in \texttt{PyTorch}. Specifically, we define
\begin{equation*}
\boldsymbol{\phi}_\kappa \;=\; \big(\beta^\kappa_0,\;\beta^\kappa_1,\;\sigma_{\kappa}^2,\;\rho_\kappa,\;\nu_\kappa\big),
\qquad \kappa\in\{\mu,\sigma,\xi\},
\end{equation*}
and collect all hyperparameters as $\boldsymbol{\Phi} \equiv \{\boldsymbol{\phi}_\mu,\boldsymbol{\phi}_\sigma,\boldsymbol{\phi}_\xi,\sigma_\varepsilon^2\}$.
By Bayes’ theorem, the joint posterior is
\begin{equation*}
\Pr(\mathbf{X}^d, \boldsymbol{\Theta}^d,\boldsymbol{\Phi} \mid \mathbf{Y}^d)
~\propto~ \Pr(\mathbf{Y}^d\mid \mathbf{X}^d,\sigma_\varepsilon^2)\,
\Pr(\mathbf{X}^d\mid \boldsymbol{\Theta}^d)\,
\Pr(\boldsymbol{\Theta}^d\mid \boldsymbol{\phi}_\mu,\boldsymbol{\phi}_\sigma,\boldsymbol{\phi}_\xi)\,
\Pr(\boldsymbol{\Phi}),
\end{equation*}
where $\boldsymbol{\Theta}^d=\{\mu^d(\bs_i),\sigma^d(\bs_i),\xi^d(\bs_i)\}_{i=1}^n$ denotes the three parameter fields at all sites.  
The joint MAP estimate is then
\begin{equation*}
(\hat{\mathbf{X}}^d, \hat{\boldsymbol{\Theta}}^d, \hat{\boldsymbol{\Phi}})
= 
\arg\max_{\mathbf{X}^d,\,\boldsymbol{\Theta}^d,\,\boldsymbol{\Phi}}
\, \Pr(\mathbf{X}^d, \boldsymbol{\Theta}^d, \boldsymbol{\Phi}
\mid 
\mathbf{Y}^d).
\end{equation*}
Figures~\ref{fig:gev_params} (in the main text) and~\ref{fig:gev_params2} show the spatial maps for the MAP estimates of the three GEV parameters. The bottom panels of Figure~\ref{fig:gev_params2} also show the Anderson–Darling goodness-of-fit $p$-values for the GEV model fitted to the seasonal maxima over our study region. The seasonal GEV fit is broadly satisfactory across most of the domain, with a few localized areas where fit may be weaker and the $p$-value is less than 0.05 (shown in gray).

Using the MAP fields, we construct a nugget-free latent copula via the probability integral transform
\begin{equation*}
U^d_{it}
\;=\;
G\left(\hat{X}^d_{it}\,;\,\hat\mu(\bs_i),\hat\sigma(\bs_i),\hat\xi(\bs_i)\right),
\end{equation*}
where $G(\cdot;\cdot)$ is the GEV distribution function.  
The transformed variables $\{U^d_{it}\}$ define a spatially regularized, physically coherent surface of extremes, free from marginal (i.e., location-wise) variations and small-sample estimation noise in the parameters.
\begin{figure}[!t]
\raggedright
    \includegraphics[width=\linewidth]{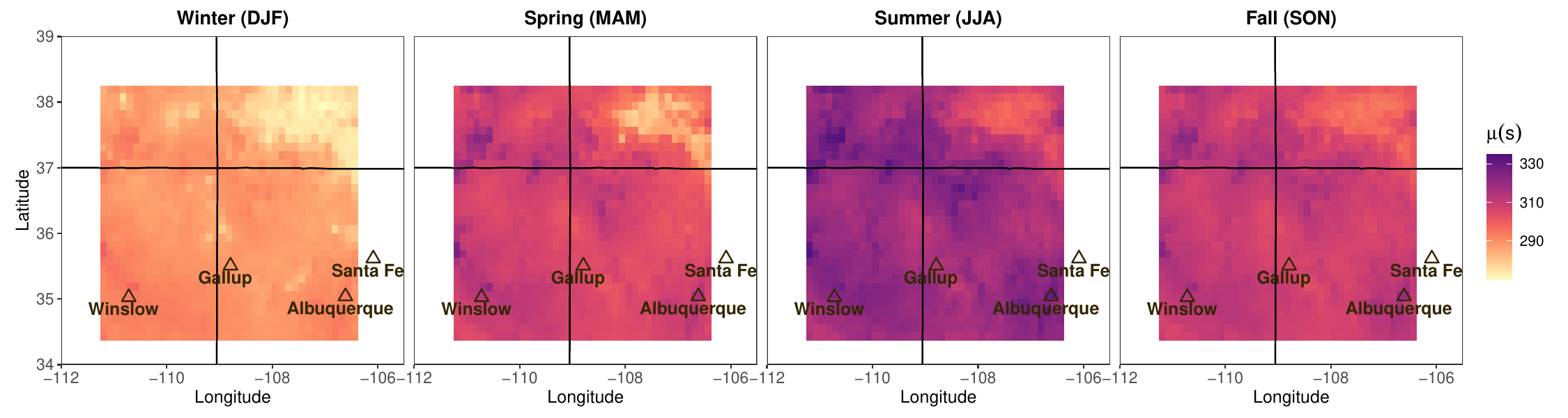}
    \includegraphics[width=\linewidth]{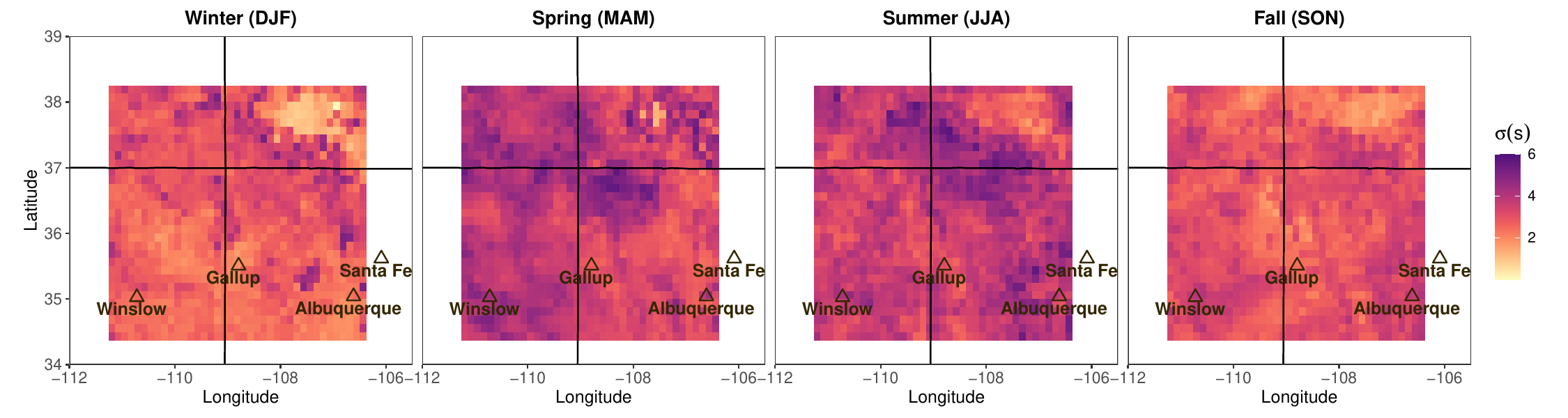}
    \hspace*{0.07cm}\includegraphics[width=1.03\linewidth]{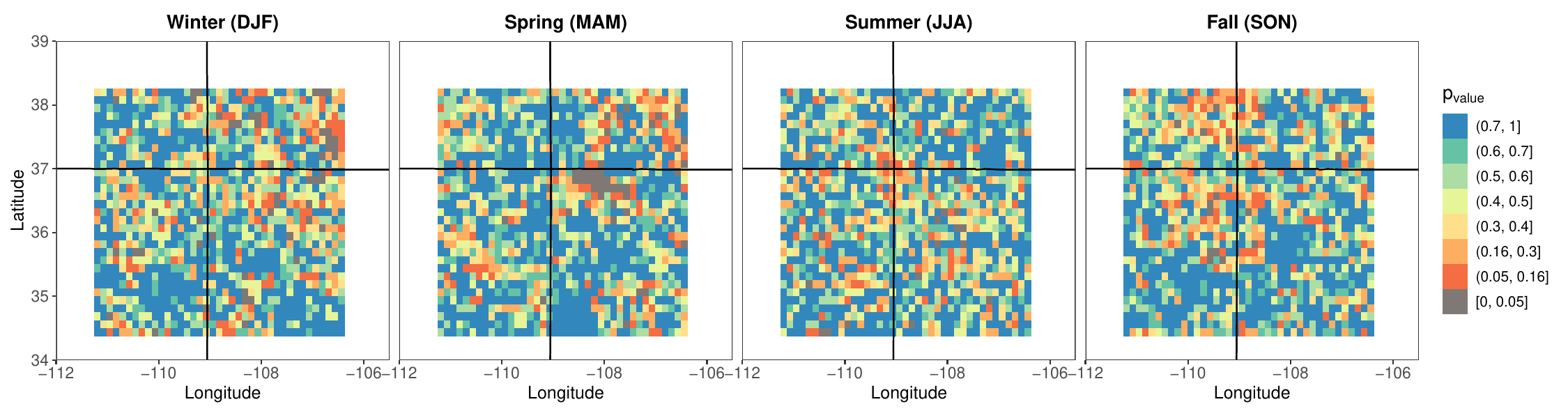}
    \caption{Seasonal maps of the MAP estimates of  $\{\mu(\bs)\}$ (top panels) and  $\{\sigma(\bs)\}$ (middle panels) over the study region for each season. The estimated fields of $\{\xi(\bs)\}$ are provided in Figure~\ref{fig:gev_params} of the main text; see its caption for further details. The bottom panels show Anderson--Darling goodness-of-fit $p$-values for the fitted GEV distribution at each grid cell. Smaller $p$-values, especially those below 0.05 (gray), provide stronger evidence against the adequacy of the local GEV fit.}
    \label{fig:gev_params2}
\end{figure}

The MAP-based approach is computationally efficient and provides reliable point estimates for the inputs $\{U^d_{it}\}$ used in our subsequent comparative study. However, it ignores posterior uncertainty in the marginal fits of our data, and this is not propagated through to the estimation of the spatial extremal dependence.  
Accurate uncertainty quantification, credible intervals, or posterior predictive inference would require full Bayesian sampling (e.g., MCMC). Recent studies \citep[e.g.,][]{zhang2022accounting, risser2025data, shi2024spatial} emphasize that modeling the marginal distributions and the copula dependence jointly within a unified Bayesian hierarchical framework can yield more accurate inferences than two-step estimation approaches where the marginal parameters and dependence parameters are estimated separately. \citet{shi2024spatial}, for instance, demonstrate through extensive simulation that unified models consistently outperform two-step estimation approaches in both parameter recovery and predictive performance (see their Figure~8). Moreover, \cite{kakampakou2024spatial} highlight that two-step estimation can sometimes lead to spurious conclusions about the nature of extremal dependence if the marginal step is not handled with appropriate care. 

Nevertheless, we adopt the two-step estimation framework here, as incorporating the marginal transformation layer into the hierarchical model introduces challenges for the covariance estimation methods compared in this study. Since the two-step approach remains standard practice in spatial extremes, we focus on this formulation and leave the unified model for future work.

\section{Background on Spatial (Extremes) Modeling}\label{s:preliminaries}
This section provides background on commonly-used spatial process models and their limitations for modeling spatial extremes. The discussion focuses on Gaussian and max-stable processes and their contrasting extremal dependence properties, supporting Sections \ref{s:exploratory} and \ref{s:rsm} of the main text, respectively.

\paragraph{Gaussian processes (GPs):} A Gaussian process is a collection of random variables $\{ Z(\bs) : \bs \in \mathcal{S} \}$, such that for any finite set of locations $\bs_1, \ldots, \bs_n \in \mathcal{S}$, the random vector 
$\bZ=\bigl(Z(\bs_1), \ldots, Z(\bs_n)\bigr)^{\top}$ follows a multivariate normal distribution. We denote a GP as $Z(\cdot) \sim \mathcal{GP}\bigl(\mu(\cdot), C_{\boldsymbol{\theta}}(\cdot, \cdot)\bigr)$, where the process is defined by its mean function $\mu(\bs) = \mathbb{E}\{Z(\bs)\}$ and positive-definite covariance function
$C_{\boldsymbol{\theta}}(\bs, \bs') = \mathrm{Cov}\{Z(\bs), Z(\bs')\}$ which is parameterized by $\boldsymbol{\theta}$. For simplicity, we hereafter assume that all spatial processes are standardized to have unit marginal variance at each site $\bs \in \mathcal{S}$, and so $C_{\boldsymbol{\theta}}(\bs, \bs')$ is the correlation function.

In spatial applications, it is common to assume that the correlation function $C_{\boldsymbol{\theta}}(\bs,\bs')$ is second-order stationary and isotropic, so that it depends only on the Euclidean distance between pairs of spatial locations, i.e., $h=\|\bs-\bs'\|$. A widely used choice  of correlation function is the Mat\'ern class, given by
\begin{equation}
\label{eq:MATERN}
C_{\boldsymbol{\theta}}(\bs,\bs')
=
\frac{1}{\Gamma(\nu)2^{\nu-1}}
\left(\sqrt{2\nu}\frac{h}{\rho}\right)^{\nu}
K_{\nu}\!\left(\sqrt{2\nu}\frac{h}{\rho}\right),
\end{equation}
which is parameterized by $\boldsymbol{\theta}=(\rho,\nu)$, and where $K_{\nu}(\cdot)$ denotes the modified Bessel function of the second kind. The parameter  $\rho>0$ is the spatial range parameter controlling the rate of correlation decay, and $\nu>0$ governs the smoothness of the process. Alternative parametric correlation models, including the spherical family, the generalized Cauchy class \citep{gneiting2004stochastic}, and the Dagum family \citep{berg2008dagum}, and compactly supported constructions such as the Askey \citep{askey1969integral} and Wendland  \citep{wendland1995piecewise} functions, offer flexibility in terms of the types of dependence that a GP can capture.

Despite their sub-asymptotic flexibility, GPs exhibit asymptotic independence between the process observed at any pair of distinct locations $\bs$ and $\bs'$ \citep{Sibuya1960,davison2013geostatistics}. Specifically, a GP has $\eta(\bs,\bs')=(1+C_{\boldsymbol{\theta}}(\bs, \bs'))/2$, implying that Gaussian processes cannot exhibit asymptotic dependence regardless of the covariance function, except in the case of \emph{perfect} dependence (with $C_{\boldsymbol{\theta}}(\bs, \bs')=1$). A class of stochastic process models that can capture asymptotic dependence is the max-stable process, and is widely applied in the spatial extremes literature \citep[see, e.g.,][]{blanchet2011spatial,reich2012hierarchical,oesting2017statistical,shao2024flexible}.

\paragraph{Max-stable processes (MSPs):} 

Suppose that $\{T_i(\bs)\}_{i=1}^n,$ are independent copies of a stochastic process $\{T(\bs):\bs\in\mathcal{S}\}$ with continuous sample paths on $\mathcal{S}\subset \mathbb{R}^2$, and that there exist normalizing functions $a_n(\bs)>0$ and $b_n(\bs)\in\mathbb{R}$ such that 
\[\frac{ \max_{1 \le i \le n}T_i(\bs)-b_n(\bs)}{a_n(\bs)} \]
converges weakly, as $n\rightarrow \infty$, to a stochastic process $\{X(\bs): \bs \in \mathcal{S}\subset \mathbb{R}^2\}$ with non-degenerate margins. Then $\{X(\cdot)\}$ is a max-stable process with GEV margins (see Section~\ref{s:exploratory} of the main text). Under mild regularity conditions and with an appropriate marginal standardization (without loss of generality), MSPs can be constructed using the random scale spectral representation of \citet{de1984spectral}, by setting
\begin{equation}\label{eq_MSP}
X(\bs){=} \sup_{i \geq 1} R_i W_i(\bs),\end{equation}
where $\{R_i\}_{i \in \mathbb{N}}$ are points of a Poisson process on $(0,\infty)$ with intensity $r^{-2}\,\mathrm{d}r$, and $\{W_i(\bs)\}_{i\in\mathbb{N}}$ are independent replicates of a non-negative stochastic process $W(\bs)$ with $\mathbb{E}\{W(\bs)\}=1$ for all $\bs\in\mathcal{S}$. 

Max-stable processes are the only possible limit for rescaled pointwise maxima of random fields with non-degenerate margins, making them an appealing model for use in the modeling of spatial extremes. However, max-stable processes suffer from two limitations. First, by their construction, max-stable processes exhibit asymptotic dependence at all spatial lags. That is, for any distinct pair of locations $\bs$ and $\bs'$, we have either 
$\chi(\bs,\bs')>0$ or perfect independence between $X(\bs)$ and $X(\bs')$. This is a result of the heavy-tailed behavior of the common Poisson point $R_i$, which influences multiple spatial locations through the shared spectral process $W_i(\cdot)$ \citep{huser2022}. While max-stable processes provide a natural theoretical framework for spatial block maxima, their inherent asymptotic dependence means they are restrictive for applications where extremal dependence weakens at high levels, (i.e., under asymptotic independence).

Secondly, as the likelihood function is generally intractable for max-stable processes observed at a moderate number of spatial locations \citep[$n\sim 10$;][]{padoan2010likelihood,castruccio2016high}, likelihood-based Bayesian inference for these models is only tractable for special cases or using pseudo likelihoods \citep[see, e.g.,][]{ribatet2012bayesian, reich2012hierarchical}. Due in part to these limitations, \citet{huser2025modeling} have advocated against the use of max-stable processes for modeling of spatial extremes. In the main text, we thus propose to use a more flexible random scale construction, which can capture both asymptotic dependence and asymptotic independence, and enjoys tractability in high dimensions.

\section{Max-Stability Tests}\label{sec:maxstab-tests}
This section provides additional details on the max-stability tests employed in Section~\ref{s:exploratory} of the main text. Max-stable processes impose a very rigid form of extremal dependence; see Section~\ref{s:preliminaries} for a brief overview. In particular, for any max-stable process with unit Fr\'echet margins, the tail dependence coefficient admits the first-order expansion
\begin{equation}\label{eqn:max-stab}
  \chi_u(h) - \chi(h) = O(1-u), \qquad u \nearrow 1,
\end{equation}
where $\chi(h) = \lim_{u\nearrow 1} \chi_u(h)>0$  always corresponds to asymptotic dependence  at any lag $h> 0$ \citep[see, e.g.,][]{de1984spectral,ribatet2013spatial,huser2019modeling}. Thus, for a max-stable process, the function $\chi_u(h)$ in \eqref{eqn:chi_definition} is forced to approach its limit \emph{linearly} in $(1-u)$, and this rate is entirely determined by the exponent measure. By contrast, for the LRSM model in~\eqref{eqn:lrsm}, a subasymptotic expansion (Theorem~\ref{thm:chi_eta}) shows that the approach of $\chi_u(h)$ to its limit can be substantially slower or faster than linear, depending on model parameters. Within our specification, the parameter $\alpha$ directly governs this rate of decay: for $\alpha<1/2$ the process is asymptotically independent, and $\chi_u(h)$ decays to zero more rapidly (yielding non-max-stability), whereas for $\alpha>1/2$ the process is asymptotically dependent but may still display weakened dependence at realistic, non-asymptotic levels. This additional flexibility is crucial for reproducing the empirical subasymptotic behavior observed in practice.

For the NLDAS surface temperature data over the Four Corners region, the spatial domain extends $600$ km (the diagonal distance from $(-111.2,34.44)$ to $(-106.4,38.19)$ is approximately $600$ km). Over such a large, topographically heterogeneous region with distinct climatological regimes, it is physically implausible that extremes share a single max-stable exponent measure. Our diagnostics confirm that max-stability is indeed too restrictive for these data. First, the posterior credible intervals for $\alpha$ in Tables~\ref{tab:data_results_alpha1} and~\ref{tab:data_results_alpha2} are well below, or close to, the critical value of $0.5$ in summer (JJA) and fall (SON), indicating asymptotic independence and, therefore, non-max-stability in those seasons. Even in spring (MAM) and winter (DJF), where the posterior for $\alpha$ suggests asymptotic dependence on average, the max-stability assumption is not supported uniformly over space.

\begin{figure}
    \centering
    \includegraphics[height=0.28\linewidth]{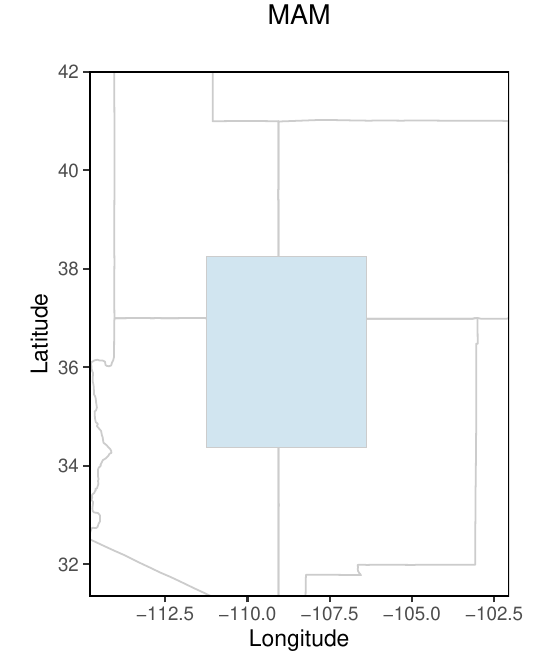}
    \hspace*{-0.2cm}\includegraphics[height=0.28\linewidth]{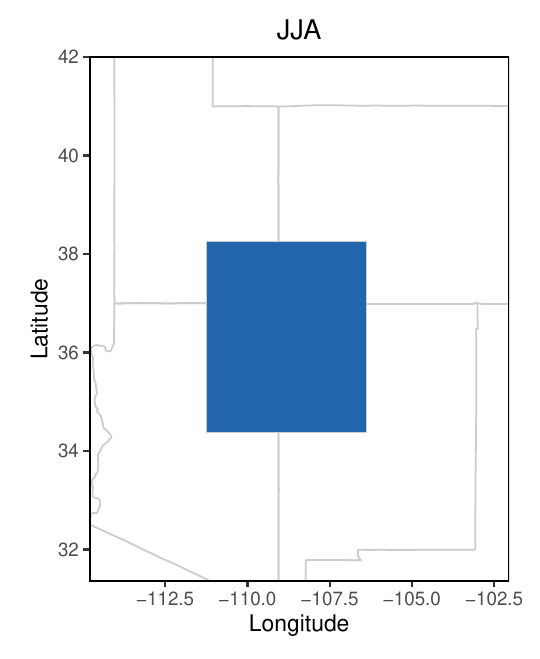}
    \hspace*{-0.15cm}\includegraphics[height=0.28\linewidth]{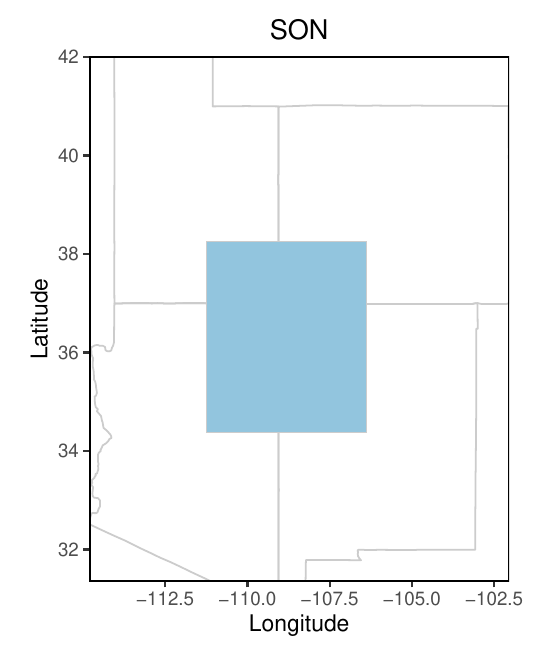}
    \includegraphics[height=0.28\linewidth]{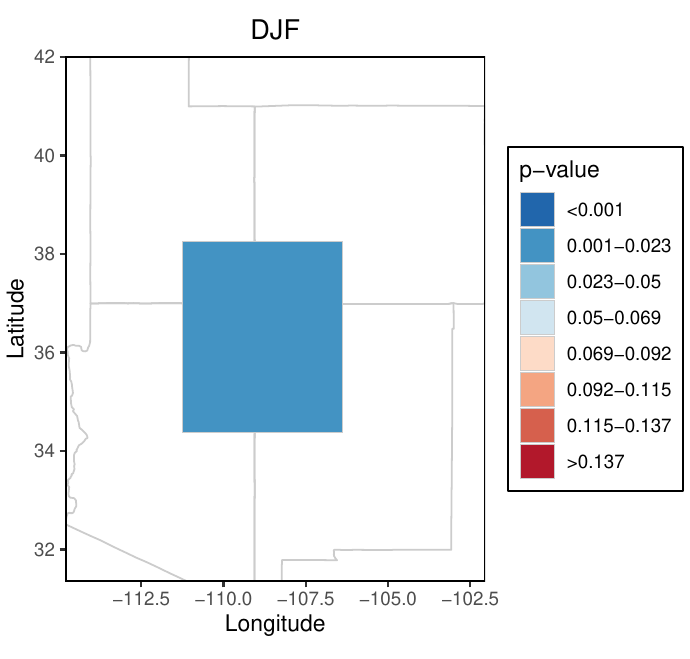}

    \includegraphics[height=0.28\linewidth]{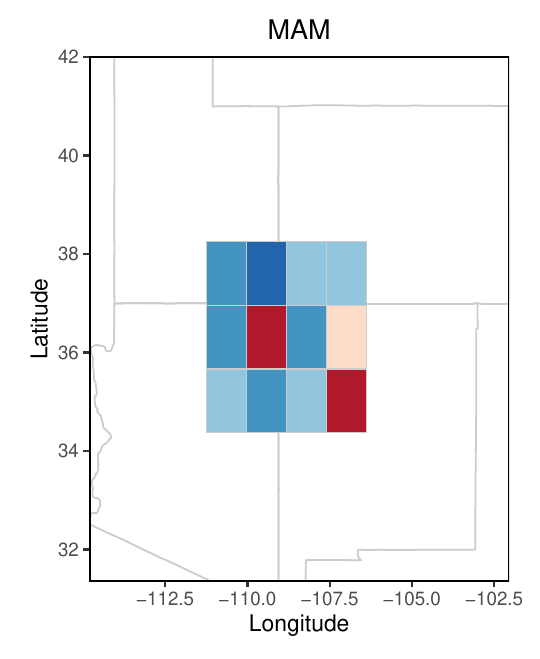}
    \hspace*{-0.2cm}\includegraphics[height=0.28\linewidth]{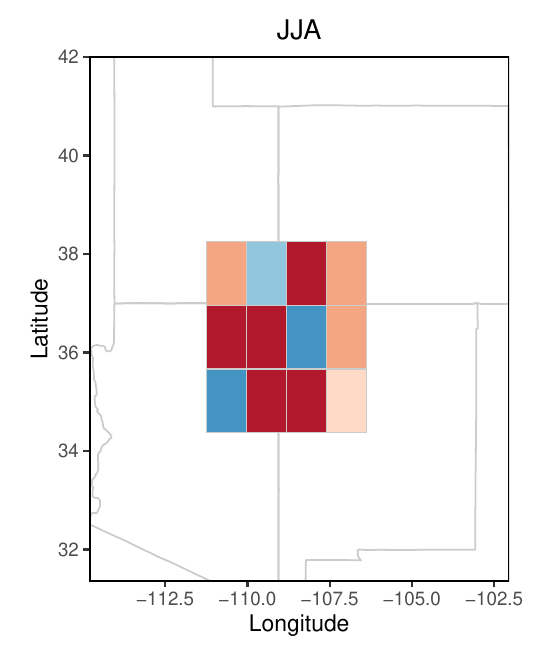}
    \hspace*{-0.15cm}\includegraphics[height=0.28\linewidth]{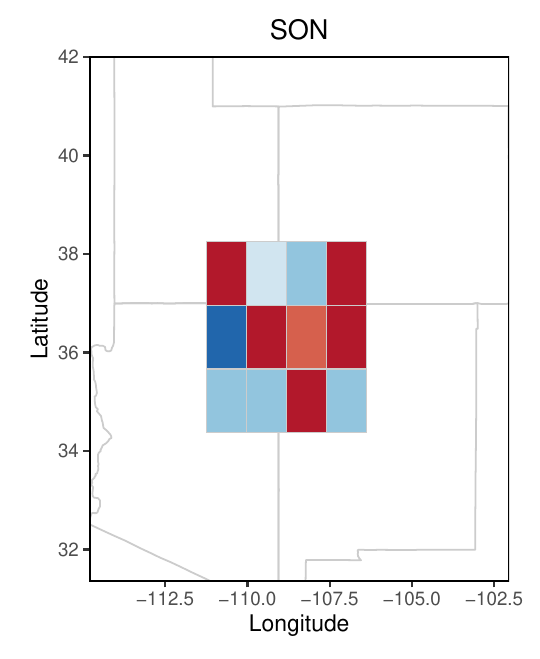}
    \includegraphics[height=0.28\linewidth]{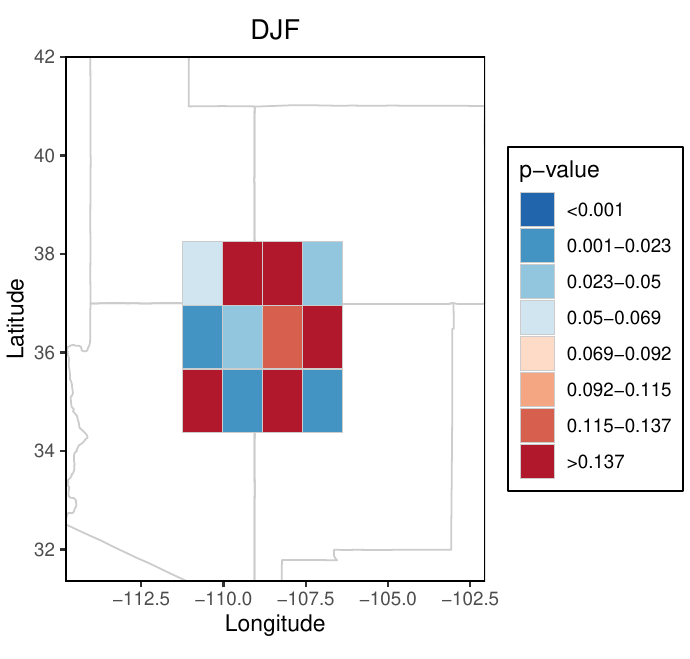}

    \includegraphics[height=0.28\linewidth]{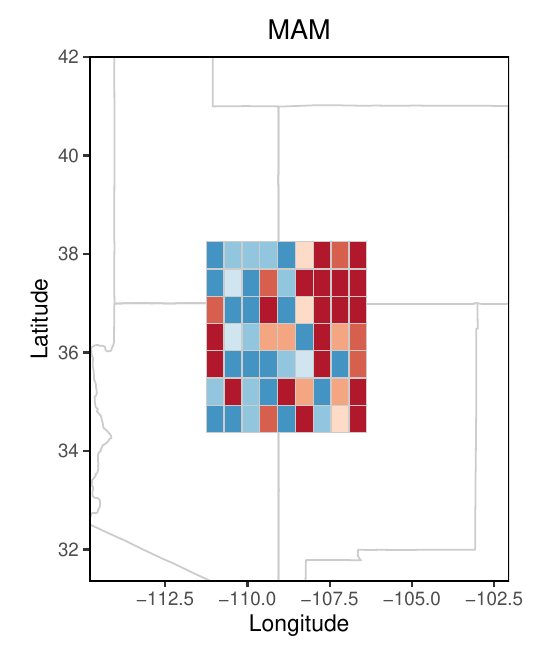}
    \hspace*{-0.2cm}\includegraphics[height=0.28\linewidth]{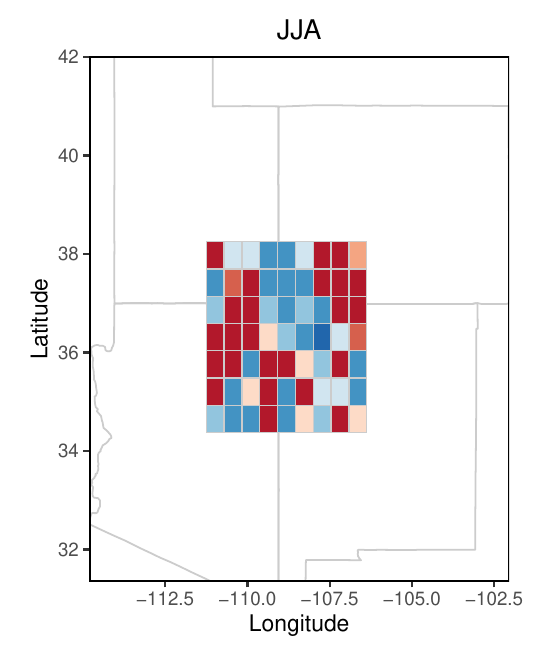}
    \hspace*{-0.15cm}\includegraphics[height=0.28\linewidth]{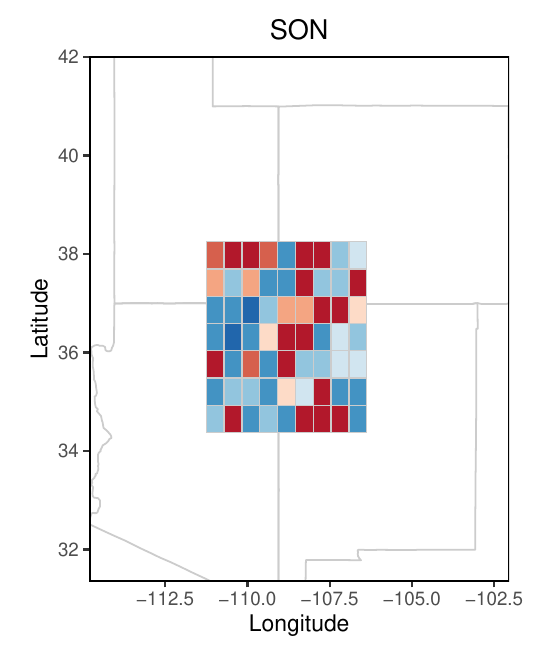}
    \includegraphics[height=0.28\linewidth]{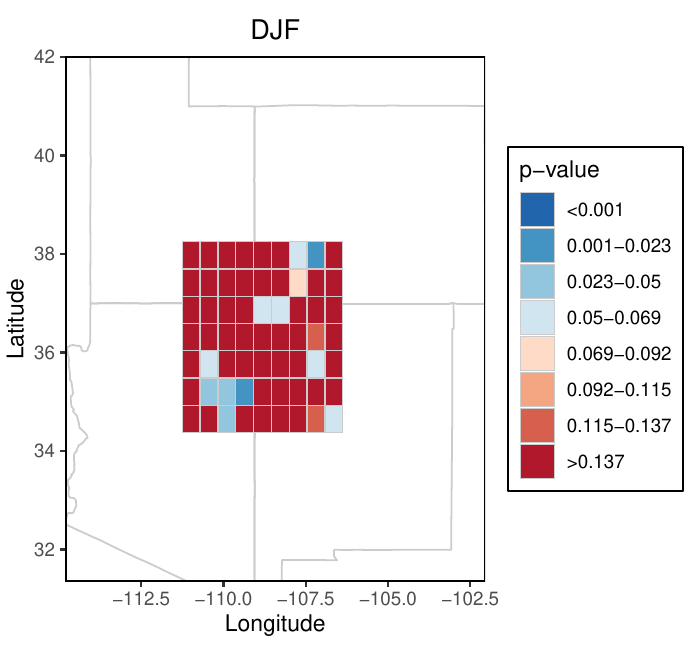}
    \caption{Max-stability diagnostics for NLDAS surface temperature seasonal maxima using the test of~\citet{koh2024space}. \emph{Top row:} $p$-values when the test is applied to the entire Four Corners domain. \emph{Middle and bottom rows:} Finer diagnostics based on 12 and 63 rectangular subregions, respectively. Darker blue shades correspond to smaller $p$-values, i.e., stronger evidence \emph{against} max-stability. The plots reveal widespread violations of max-stability for all seasons except winter (DJF), where the test more often fails to reject in small subregions, likely reflecting sparser station coverage.}
    \label{fig:max-stab}
\end{figure}

Figure~\ref{fig:max-stab} provides a more targeted assessment using the bootstrap max-stability test of~\citet{koh2024space}. The test exploits the homogeneity of the exponent measure: under max-stability, the statistic
\begin{equation*}
X_{\mathcal{S}} = \max_{d\in\mathcal{S}} \log X(s_d)
\end{equation*}
should follow a Gumbel distribution with location parameter $\mu_{\mathcal{S}} = \log V_{s_d,d\in\mathcal{S}}(1,\dots,1)$, where $V_{s_d,d\in\mathcal{S}}$ is the exponent measure over the testing set $\mathcal{S}$. Let $x_{d,m}^\star$ be the $m$th observed seasonal maximum at site $d$, and define $X_{\mathcal{S},m}^\star = \max_{d} \log x_{d,m}^\star$, $m=1,\dots,M$. The test statistic is then the Anderson--Darling distance between the empirical distribution of $\{X_{\mathcal{S},m}^\star\}_{m=1}^M$ and a Gumbel distribution with maximum-likelihood estimate $\hat{\mu}_{\mathcal{S}}$.

Applied to the entire Four Corners domain (top row of Figure~\ref{fig:max-stab}), the test rejects max-stability in all seasons at the coarsest spatial scale, showing strong evidence against the null hypothesis. When we refine the domain into 12 and then 63 rectangular subregions (middle and bottom rows), the picture becomes even more revealing: in JJA and SON, a large majority of subregions exhibit very small $p$-values, pointing to pervasive violations of max-stability. For MAM and especially DJF, the test more frequently fails to reject in smaller subregions, suggesting that a max-stable characterization may be tenable locally in winter but not across the full domain. Overall, these diagnostics highlight how quickly the max-stability assumption breaks down once spatial heterogeneity and subasymptotic behavior are taken seriously.

Figure~\ref{fig:max-stab}, together with the $\chi$-plots and posterior summaries of $\alpha$, underscores the need for models that can accommodate \emph{both} asymptotic dependence and asymptotic independence, and that allow the strength of extremal dependence to vary across space and seasons. Purely max-stable models are unable to capture the observed weakening of dependence at high but finite thresholds~\citep[see also][]{huser2025modeling}, while models that are everywhere asymptotically independent (e.g., Gaussian-based extremes) cannot reproduce the pockets of strong dependence that appear in winter and in some smaller regions. The LRSM construction in~\eqref{eqn:lrsm}, with the parameter $\alpha$ smoothly traversing the two regimes, provides the necessary flexibility, letting the data inform which extremal dependence class is appropriate in different parts of the domain. It is particularly reassuring that, for the winter season---where the max-stability tests largely fail to reject---the posterior mean of $\alpha$ exceeds $0.9$ in both Table~\ref{tab:data_results_alpha1} and Table~\ref{tab:data_results_alpha2}, confirming that our model successfully recovers the relevant strong asymptotic dependence.

\section{Theoretical Properties of the L\'evy Random Scale-Mixture Process}\label{sm:chi_eta}

\paragraph{Extremal dependence properties}
Here we derive the values of $\chi(\bs,\bs')$ and $\eta(\bs,\bs')$ for model~\eqref{eqn:lrsm}, as presented in Section~\ref{s:modeling_framework} of the main text.
\begin{theorem}[Extremal dependence of the L\'evy Random Scale Mixture]
\label{thm:chi_eta}
Let
\[
\{X(\bs):\bs \in\mathcal{S}\}=R^{\alpha} \{g(Z(\bs)):\bs \in \mathcal{S}\},
\]
where $\alpha \geq 0$, and where $R\sim \text{L\'{e}vy}(m,c),m\in \mbR, c>0$, is independent of the Gaussian process $Z(\cdot) \sim \mathcal{GP}\bigl(0, C_{\boldsymbol{\theta}}(\cdot, \cdot)\bigr)$ which has unit variance and correlation function $C_{\boldsymbol{\theta}}(\cdot, \cdot)$. Then, for any pair of distinct locations $(\bs,\bs')$:

\begin{enumerate}
\item If $\alpha \ge c$, the process $X$ exhibits asymptotic dependence with
\[
\chi(\bs,\bs')
=
=\mathbb{E}\left[\frac{\min\{g^{c/\alpha}(Z(\bs)), g^{c/\alpha}(Z(\bs'))\}}{\mathbb{E}[g^{c/\alpha}(Z(\bs))]}\right],
\qquad
\eta(\bs,\bs')=1.
\]

\item If $\alpha < c$, the process $X$ is asymptotically independent, with
\[
\chi(\bs,\bs')=0,
\]
and
\[
\eta(\bs,\bs')
=
\begin{cases}
\eta_Z(\bs,\bs'), & \text{if } \eta_Z(\bs,\bs')>\alpha/c,\\
\alpha/c, & \text{otherwise},
\end{cases}
\]
where $\eta_Z(\bs,\bs')=(1+C_{\boldsymbol{\theta}}(\bs,\bs'))/2$.
\end{enumerate}
\end{theorem}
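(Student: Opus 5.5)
The plan is to reduce everything to tail asymptotics of products of independent regularly varying variables, using Breiman's lemma. Following the paper's convention (Section~\ref{subsec:lrsm}), I take $\Pr(R>r)\sim K r^{-c}$ as $r\to\infty$ for some $K>0$. Then $R^{\alpha}$ is regularly varying with index $c/\alpha$, while $g(Z(\bs))$ has exact shifted Pareto tail $(1+x)^{-1}$, i.e.\ index $1$.

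Because $X(\bs)$ and $X(\bs')$ have the same marginal law, the threshold $u$ in \eqref{eqn:chi_definition} corresponds to a common level $x_u\to\infty$ on the $X$ scale. Hence
\[
\chi_u(\bs,\bs')=\frac{\Pr\{X(\bs)>x_u,\,X(\bs')>x_u\}}{\Pr\{X(\bs)>x_u\}},
\]
and the joint event can be rewritten as $\{R^{\alpha}M>x_u\}$ with $M:=\min\{g(Z(\bs)),g(Z(\bs'))\}$. So it suffices to find the tail of $R^{\alpha}g(Z(\bs))$, the tail of $R^{\alpha}M$, and the tail of $M$.

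\textbf{Case $\alpha>c$.} Here $R^\alpha$ is the heavier factor, with index $c/\alpha<1$. Since $\mathbb{E}[g(Z)^{c/\alpha+\epsilon}]<\infty$ for small $\epsilon>0$, Breiman's lemma gives
\[
\Pr\{X(\bs)>x\}\sim K x^{-c/\alpha}\,\mathbb{E}[g^{c/\alpha}(Z(\bs))].
\]
Likewise $M\le g(Z(\bs))$ has the needed finite moment, so Breiman gives
\[
\Pr\{R^\alpha M>x\}\sim K x^{-c/\alpha}\,\mathbb{E}[M^{c/\alpha}].
\]
Taking the ratio yields the stated $\chi(\bs,\bs')$, using $\min(a,b)^{c/\alpha}=\min(a^{c/\alpha},b^{c/\alpha})$. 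This limit is positive, so the process is asymptotically dependent and $\eta=1$ follows.

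\textbf{Case $\alpha<c$.} Now $g(Z)$ is the heavier factor. Since $\mathbb{E}[R^{\alpha(1+\epsilon)}]<\infty$ for small $\epsilon$, Breiman applied in reverse gives
\[
\Pr\{X(\bs)>x\}\sim\mathbb{E}[R^\alpha]\,x^{-1}.
\]
For the joint tail, I will use the classical Gaussian-copula result of \citet{Ledford1996} (also \citealp{Sibuya1960}): on the Pareto scale,
\[
\Pr(M>x)=\mathcal{L}(x)\,x^{-1/\eta_Z},\qquad \eta_Z=\frac{1+C_{\boldsymbol\theta}(\bs,\bs')}{2}.
\]
The product $R^\alpha M$ of independent regularly varying factors with indices $c/\alpha$ and $1/\eta_Z$ then has index $\min(c/\alpha,1/\eta_Z)$. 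If the indices differ, Breiman applies to whichever factor is heavier. If they are equal, the tail is still regularly varying with that index but carries an extra slowly varying factor, which I will get from the standard convolution-of-logarithms argument (the tail of $\log R^\alpha+\log M$). Converting $x$ to the uniform scale via $1-u\asymp x^{-1}$, the joint probability behaves like $(1-u)^{\min(c/\alpha,1/\eta_Z)}$ up to slowly varying terms. This gives $\eta=\max(\eta_Z,\alpha/c)$. Since this is strictly less than $1$, $\chi=0$.

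\textbf{Main obstacle: the boundary $\alpha=c$.} At $\alpha=c$ the two marginal factors have the same index $1$, so Breiman no longer applies and the marginal tail acquires a logarithmic factor, roughly $x^{-1}\log x$. Moreover $\mathbb{E}[g(Z)]=\infty$, so the expression for $\chi$ in part~1 is degenerate there. My first attempt would be to compute both tails directly by conditioning on $R$. I suspect the ratio then tends to $0$, because $M$ has the lighter index $1/\eta_Z>1$. If that is what the computation gives, I would state the boundary case as the limit $\alpha\downarrow c$ rather than via the formula in part~1, or otherwise flag it in the proof. The remaining steps are routine: verifying the moment conditions needed for Breiman's lemma, and transferring regular variation from the $X$ scale to the uniform scale.
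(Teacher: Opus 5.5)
Your route is the paper's route made explicit. The paper only asserts that $R^{\alpha}$ and $g(Z(\bs))$ are regularly varying with indices $c/\alpha$ and $1$, and defers to \citet{engelke2019extremal}, whose arguments rest on Breiman-type product-tail results together with the Ledford--Tawn joint tail of the Gaussian pair. Your treatment of $\alpha>c$ and $\alpha<c$, including the equal-index subcase, is correct.

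\textbf{The boundary $\alpha=c$.} Here the gap is in the statement, and you should commit to your suspicion rather than recover part~1 as a limit, because part~1 is false there. With $\Pr(R^{\alpha}>x)\sim Kx^{-1}$ and $\Pr\{g(Z)>x\}=(1+x)^{-1}$ you get $\Pr\{X(\bs)>x\}\sim Kx^{-1}\log x$. Breiman applies to the joint tail because $M$ has index $1/\eta_Z>1$, so $\mathbb{E}[M^{1+\epsilon}]<\infty$; it gives $\Pr(R^{\alpha}M>x)\sim K\,\mathbb{E}[M]\,x^{-1}$. Hence $\chi_u\sim\mathbb{E}[M]/\log x\to 0$. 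On the uniform scale the joint exceedance probability is asymptotically $\mathbb{E}[M](1-u)/\log\{1/(1-u)\}$, so $\eta=1$ with a slowly varying factor tending to zero.

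So $\alpha=c$ is the boundary case of asymptotic \emph{independence}, with $\chi=0$ and $\eta=1$. This matches the sentence in Section~\ref{subsec:lrsm} (``asymptotic independence when $\alpha\le c$'') but contradicts part~1 of the theorem. Taking $\alpha\downarrow c$ cannot rescue asymptotic dependence: the part-1 formula itself tends to $0$, since $\mathbb{E}[g^{\beta}(Z)]\to\infty$ as $\beta\uparrow 1$ while $\mathbb{E}[M^{\beta}]\le 1+\mathbb{E}[M]$. The correct statement restricts part~1 to $\alpha>c$; the paper's one-line proof passes over this.

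\textbf{The tail of $R$.} The premise you borrow from the paper, $\Pr(R>r)\sim Kr^{-c}$, is wrong unless $c=1/2$. A L\'evy$(m,c)$ variable has survival function $\mathrm{erf}\{\sqrt{c/(2(r-m))}\}\sim\sqrt{2c/(\pi r)}$, so $R$ is regularly varying with index $-1/2$ for every $c$. As a Stable$(1/2,1,c,m)$ variable, its tail index is the stability index $1/2$; the scale $c$ enters only the constant. The paper's proof makes the same slip. Your argument goes through verbatim with $c$ replaced by $1/2$:
\begin{itemize}
\item the threshold becomes $\alpha$ versus $1/2$;
\item the exponent in part~1 becomes $1/(2\alpha)$;
\item part~2 becomes $\eta=\max\{\eta_Z,2\alpha\}$.
\end{itemize}
This coincides with the stated theorem only for the paper's working choice $c=1/2$.
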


\begin{proof}
First, note that a $ \text{L\'{e}vy}(m,c)$ random variable is also Stable($1/2,1,c,m$), and so $R\in \mathrm{RV}_{-c}$ where $X \in \mathrm{RV}_{\lambda}$ is a random variable that is regularly-varying with index $\lambda\in\mathbb{R}$ such that its survival function, $\bar{F}_X$, satisfies $\bar{F}_X(tx)/\bar{F}_X(t) \rightarrow x^\lambda$ as $t\rightarrow \infty$ for any $x >0$ \citep[Chapter 1.5][]{nolan2020univariate}. Noting that $R\in \mbox{RV}_{-c}$ implies that $R^\alpha\in \mbox{RV}_{-c/\alpha}$ and that $g\{Z(\bs)\}\in \mbox{RV}_{-1}$ for all $\bs$, the stated values of $\chi(\bs, \bs')$ and $\eta(\bs,\bs')$ follow from arguments provided by \citet{engelke2019extremal}. 
\end{proof}

\paragraph{Marginal properties of $X$:}
Recall that
\[
X = R^{\alpha}\left[\frac{1}{1-\Phi(Z)} - 1\right],
\]
where $Z \sim \mathcal{N}(0,1)$, $R$ is independent of $Z$, and $R$ follows a L\'evy$(m,c)$ distribution with density $f_R$.

\subsection*{Marginal distribution function}
\noindent The distribution function of $X$ has the form
\begin{align}\label{EQ:CDF_X}
\begin{split}
        F_X(x; \alpha)=\Pr(X<x)&= \Pr\Big(R^{\alpha}\Big[\frac{1}{1-\Phi(Z)}-1\Big]<x\Big)\\
        & = \Pr\left(\Big[\frac{1}{1-\Phi(Z)}-1\Big]<xR^{-\alpha}\right)\\
        & = \int_{0}^{\infty}\Pr\Big(\Big[\frac{1}{1-\Phi(Z)}-1\Big]<xr^{-\alpha}\Big)f_R(r)\mathrm{d}r\\
                & = \int_{0}^{\infty}\Pr\Big(\Big[\frac{1}{1-\Phi(Z)}\Big]<1+xr^{-\alpha}\Big)f_R(r)\mathrm{d}r\\
       &  = \int_{0}^{\infty}\Big[1-\frac{1}{1+xr^{-\alpha}}\Big]f_R(r)\mathrm{d}r\\
        & = 1-\int_{0}^{\infty}\frac{f_R(r)}{1+xr^{-\alpha}}\mathrm{d}r,
        \end{split}
\end{align}
where $f_R$ denotes the L\'evy($m$, $c$) density function. When $c = 1/2$ and $m = 0$, the L\'evy density reduces to
\[
f_R(r)
=
(4\pi r^3)^{-1/2}
\exp\!\left(-\frac{1}{4r}\right),
\qquad r>0.
\]
\subsection*{Marginal density}
Differentiating \eqref{EQ:CDF_X} with respect to $X$ and applying the Leibniz integral rule gives the marginal density
\begin{align}\label{EQ:MarginalDensityX}
\begin{split}
        f_{X}(x; \alpha) & = \frac{\partial }{\partial x}\Bigg[1-\int_{0}^{\infty}\frac{f_R(r)}{1+xr^{-\alpha}}\mathrm{d}r\Bigg] \\
        & = \int_{0}^{\infty} \frac{\partial }{\partial x} \frac{f_R(r)}{1+xr^{-\alpha}}\mathrm{d}r  \\
        & = \int_{0}^{\infty} \frac{f_R(r)}{r^{\alpha}(1+xr^{-\alpha})^2}\mathrm{d}r\\
        & = \int_{0}^{\infty} \frac{r^{\alpha}f_R(r)}{(x+r^{\alpha})^2}\mathrm{d}r.
        \end{split}
\end{align}

\section{Likelihood Function of the LRSM}
\label{sm:likelihood}

This section details the likelihood that is used for Bayesian inference of the LRSM model, and supports Section \ref{sec:bayes_inference} of the main text.

For each replicate $t$, the transformation from the random variable  $U_t(\bs_i)$ to $Z_t(\bs_i)$ is applied componentwise according to
\begin{equation}\label{e:z_transformation}
Z_t(\bs_i)
=
h^{-1}\!\left(U_t(\bs_i); R_t,\alpha \right),
\qquad i=1,\ldots,n, 
\end{equation}
\noindent where 
\begin{align}\label{EQ:Inv_H_function}
    h^{-1}(\cdot;R_t,\alpha)&=\Phi^{-1}\Bigg(1-\frac{R_t^{\alpha}}{F^{-1}_{X}(\cdot; \alpha)+R_t^{\alpha}}\Bigg).
\end{align}
Similarly, the transformation from the random variable  $Z_t(\bs_i)$ to $U_t(\bs_i)$ is 
\begin{equation}\label{e:u_transformation}
U_t(\bs_i)
=
h\!\left(Z_t(\bs_i); R_t,\alpha \right),
\qquad i=1,\ldots,n , 
\end{equation}
where 
\begin{align}\label{EQ:H_function}
h(\cdot;R_t,\alpha)&=F_X\left\{R_t^\alpha \left(\frac{1}{1-\Phi(\cdot)}-1\right); \alpha\right\}.
\end{align}
The corresponding transformations for the observations are $z_t(\bs_i)=h^{-1}\!\left(u_t(\bs_i);R_t,\alpha\right)$ and $u_t(\bs_i)=h\!\left(z_t(\bs_i);R_t,\alpha\right)$. We define the vector of marginal quantile 
transformations as
\[
F_X^{-1}(\bu_t; \alpha)
:=
\bigl(
F_X^{-1}(u_{1,t}; \alpha),
\ldots,
F_X^{-1}(u_{n,t}; \alpha)
\bigr)^\top.
\]

As a result of~\eqref{e:z_transformation}, the conditional density of \(\bU\) given \(\alpha\), \(\bR\), and \(\btheta\) follows from the multivariate change-of-variables theorem, with
\begin{align}\label{Eq:pdf_U}
f_{\bU}(\bu \mid \bR, \alpha, \boldsymbol{\theta})
&=
\prod_{t=1}^{T}
f_{\bZ}\!\bigl(h^{-1}\!(F_{X}^{-1}(\bu_t;\alpha); R_t,\alpha \bigr);\btheta\bigr)
\left|
\frac{\partial \bz_t}{\partial \bu_t}
\right| \nonumber \\
&=
(2\pi)^{-nT/2}
|\bC_{\boldsymbol{\theta}}|^{-T/2}
\exp\!\left\{
-\frac{1}{2}
\sum_{t=1}^{T}
h^{-1}\!\bigl(F^{-1}_{X}(\bu_t;\alpha); R_t, \alpha\bigr)^\top
\bC_{\boldsymbol{\theta}}^{-1}
h^{-1}\!\bigl(F^{-1}_{X}(\bu_t;\alpha);R_t,\alpha\bigr)
\right\}
\nonumber \\
&\quad \times
\prod_{t=1}^{T}
\prod_{i=1}^{n}
\left(
\frac{1}{
\phi\!\left(
\Phi^{-1}\!\left(
1 - \frac{R_t^{\alpha}}{
F^{-1}_{X}(u_{i,t};\alpha) + R_t^{\alpha}
}
\right)
\right)
}
\frac{R_t^{\alpha}}{
\bigl(F^{-1}_{X}(u_{i,t};\alpha) + R_t^{\alpha}\bigr)^2
}
\frac{1}{
f_{X}\!\bigl(F^{-1}_{X}(u_{i,t};\alpha);\alpha\bigr)
}
\right),
\end{align}
\noindent where $f_{\bZ}(\cdot;\btheta)$ denotes the multivariate Gaussian density with covariance matrix $\bC_{\boldsymbol{\theta}}$ derived from the Mat\'ern covariance function in~\eqref{eq:MATERN} with parameter vector $\boldsymbol{\theta}$, and $\phi$ and $\Phi$ represent the standard normal density and distribution functions, respectively.

\paragraph{Deriving the Jacobian $\left|
\frac{\partial \bz_t}{\partial \bu_t}
\right| $:} 

Recall that $Z_t(\bs) = h^{-1}\!\left(U_t(\bs);R_t,\alpha\right)$ in \eqref{EQ:Inv_H_function} and, by construction,
\[
Z_t(\bs) \sim \mathcal{N}\!\left(\bzero, \bC_{\boldsymbol{\theta}}\right).
\]
The derivative of the inverse standard normal CDF is
\[
\frac{\partial}{\partial x}\Phi^{-1}(x)
=
\frac{1}{\phi\!\left(\Phi^{-1}(x)\right)},
\]
where $\phi(\cdot)$ denotes the standard normal density. Next, note that for fixed $R_t$,
\[
\frac{\partial}{\partial y}
\left(
1 - \frac{R_t^{\alpha}}{y + R_t^{\alpha}}
\right)
=
\frac{R_t^{\alpha}}{(y + R_t^{\alpha})^{2}},
\]
and that the derivative of the inverse marginal distribution function satisfies
\[
\frac{\partial}{\partial u} F_X^{-1}(u;\alpha)
=
\frac{1}{f_X\!\left(F_X^{-1}(u;\alpha);\alpha\right)},
\]
where $f_X(\cdot;\alpha)$ is the marginal density defined in
(\ref{EQ:MarginalDensityX}).

Applying the chain rule, the Jacobian of the transformation from
$U_t(\bs)$ to $Z_t(\bs)$ is
\begin{equation}
\label{eq:jacobian}
\frac{\partial Z_t(\bs)}{\partial U_t(\bs)}
=
\frac{1}{
\phi\!\left(
\Phi^{-1}\!\left(
1 - \frac{R_t^{\alpha}}
{F_X^{-1}(U_t(\bs);\alpha) + R_t^{\alpha}}
\right)
\right)
}
\,
\frac{R_t^{\alpha}}
{\left(F_X^{-1}(U_t(\bs);\alpha) + R_t^{\alpha}\right)^{2}}
\,
\frac{1}
{f_X\!\left(F_X^{-1}(U_t(\bs);\alpha);\alpha\right)}.
\end{equation}

\section{Scalable Approaches for Approximating Gaussian Processes}\label{sec:scalable}
To address the computational burden of large spatial covariance matrices, we provide additional details on three widely used scalable strategies: Vecchia approximations, covariance tapering, and low-rank basis representations. This supports Section \ref{sec:bayes_inference} of the main text.

\paragraph{Vecchia approximation}
The Vecchia approximation \citep{vecchia1988estimation,stein2004approximating} is widely used in spatial statistics as a scalable method for estimation and prediction over large spatial fields. Consider the transformed spatial process 
$\{Z_t(\bs): \bs \in \mathcal{S}\}$ observed at locations $(\bs_1,\ldots,\bs_n)$, as in Equation \eqref{e:z_transformation}. 
The vector $\bz_t$ is modeled as a Mat\'ern Gaussian process with parameter vector 
$\boldsymbol{\theta}$. The joint density of $\bz_t$  can be factorized as
\begin{align}
f_{\bZ}(\bz_t; \boldsymbol{\theta})
&=
\prod_{i=1}^{n}
f\!\left(z_{i,t} \mid z_{<i,t}; \boldsymbol{\theta}\right),
\end{align}
where $z_{i,t} := Z_t(\bs_i)$ and
$z_{<i,t} := \bigl(z_{1,t}, \ldots, z_{i-1,t}\bigr)$
denotes the observations at the first $i-1$ locations under a fixed ordering. The first term in the product reduces to the marginal density $f(z_{1,t}; \boldsymbol{\theta})$. The Vecchia approximation restricts the conditioning set of $z_{i,t}$ to a subset 
$z_{(i,t)} \subset z_{<i,t}$, so that
\begin{align}\label{e:vecchia}
    f_{\bZ}(\bz_t; \boldsymbol{\theta}) &\approx \prod_{i=1}^n f(z_{i,t}|z_{(i,t)};\boldsymbol{\theta}). 
\end{align}

Equation \eqref{e:vecchia} defines a valid likelihood for any choice of ordering and conditioning sets, though approximation quality depends on these choices. We use a max-min ordering of locations which has been shown to perform well across a range of settings \citep{guinness2018_kl}. For each location $\bs_i$, the conditioning set $z_{(i,t)}$ contains the $m$ nearest neighbors among $\{z_{1,t},\ldots,z_{i-1,t}\}$ based on the chosen ordering (e.g., coordinate-based or maxmin), with $m \ll n$. While more complex formulations are possible, this approach has been shown to work well in practice \citep[see e.g.,][]{stein2004approximating,guinness2018_kl}. The Vecchia approximation has also been used for modeling spatial extremes processes which often have intractable likelihoods \citep{MAJUMDER2023100755, Huser02072024,majumder2024modeling}.

\paragraph{Covariance tapering}
Covariance tapering \citep{furrer2006covariance} approximates the dense covariance matrix 
$\bC_{\boldsymbol{\theta}}$ by the sparse matrix $\bC_{\boldsymbol{\theta},\psi}^\ast 
= \bC_{\boldsymbol{\theta}} \circ \bT_\psi$,
where $\circ$ denotes the Hadamard product. 
Let $h_{ij} = \|\bs_i - \bs_j\|$ be the Euclidean distance between $\bs_i$ and $\bs_j$. Given a compactly supported positive definite tapering function 
$t(\cdot,\psi) : [0,\infty) \to \mathbb{R}$ and cutoff distance $\psi$, the taper matrix is defined as
\[
\bT_\psi = \left[ t(h_{ij},\psi) \right]_{i,j=1}^n.
\]
Since $t(\cdot,\psi)$ is positive definite, $\bT_\psi$ is symmetric and positive definite, 
with $(\bT_\psi)_{ii} = 1$ and $(\bT_\psi)_{ij} = 0$ whenever $h_{ij} \ge \psi$ \citep{furrer2006covariance}. 
Thus, $\bT_\psi$ is a known compactly supported positive definite correlation matrix that smoothly decays to zero beyond a cutoff distance. Hence, the distant entries of $\bC_{\boldsymbol{\theta},\psi}^\ast$ are forced to vanish while positive definiteness is preserved. This sparse approximation reduces storage overhead and computational costs of Cholesky factorizations and matrix–vector products, allowing Gaussian process models to scale to larger datasets at the expense of approximating long-range dependence.

A widely used tapering function is the spherical taper. For two locations separated by distance $h$, the spherical taper is defined as
$$
t(h,\psi) = \begin{cases}
\left(1 - \frac{h}{\psi}\right)^2 \left(1 + \frac{h}{2\psi}\right), & 0 \le h < \psi, \\
0, & h \ge \psi,
\end{cases}
$$
where $\psi > 0$ is the taper range parameter. The function $t(h,\psi)$ is positive definite in $\mathbb{R}^d$ for dimensions $d \le 3$, and it smoothly decreases from one at $h = 0$ to zero at the cutoff distance $\psi$. $\bC_{\boldsymbol{\theta},\psi}^\ast$ preserves the local covariance structure while setting long‐range correlations, which are often near-negligible, to zero. Examples of tapering functions include those in \citet{gneiting2002compactly}, \citet{mitra2003polynomial}, and \citet{bolin2016spatially}.

\paragraph{Low-rank basis representation}


Low-rank basis representations \citep{cressie2008frk} provide an effective way to approximate large covariance matrices. 
The key idea is to represent spatial dependence using a reduced set of basis functions, 
thereby avoiding storage and manipulation of the full dense covariance matrix. 

Rather than working with the full $n \times n$ covariance matrix $\bC_{\boldsymbol{\theta}}$, 
low-rank models introduce $k \ll n$ basis functions 
$\{b_1(\bs),\ldots,b_k(\bs)\}$ and represent the process as
\[
Z_t(\bs_i) = \sum_{j=1}^{k} b_j(\bs_i)\,\delta_{j,t}+\epsilon_{j,t},
\]
or, in vector form,
\[
\bZ_t = \mathbf{B}\boldsymbol{\delta}_t +\boldsymbol{\epsilon}_t,
\]
where $\mathbf{B} = [\mathbf{b}_1 \ \cdots \ \mathbf{b}_k]$ is the $n \times k$ 
basis matrix with entries $B_{ij} = b_j(\bs_i)$,  $\boldsymbol{\delta}_t = (\delta_{1,t},\ldots,\delta_{k,t})^\top$ 
are latent coefficients, and $\boldsymbol{\epsilon}_t= (\epsilon_{1,t},\ldots,\epsilon_{n,t})^\top$ are the independent and identically distributed observational errors.

Assuming $\boldsymbol{\delta}_t \sim N(\mathbf{0}, \mathbf{K})$ and $\boldsymbol{\epsilon}_t \sim N(\mathbf{0}, \tau^2\mathbf{I})$, 
this representation induces the covariance structure
\[
\text{Cov}(\bz_t) = \mathbf{BKB}^\top+\tau^2\bI,
\]
where $\mathbf{K}$ is the $k \times k$ covariance matrix of the basis coefficients.

Since $k\ll n$, the resulting covariance has rank at most $k$, reducing the cost of matrix-vector multiplication, likelihood evaluation, and Cholesky factorizations to $\mathcal{O}(nk^2)$. There are many possible choices of basis functions, such as Wendland basis functions \citep{nychka2015multiresolution}, wavelets \citep{royle2005efficient}, bisquare basis functions \citep{sengupta2013hierarchical}, splines \citep{wahba1990spline}, or finite elements \citep{lindgren2011spde}. With a suitable basis, low-rank methods efficiently capture large-scale spatial dependence while substantially reducing computational cost.

\section{Neural Bayes estimation of the LRSM}
\label{sec:NBE_discuss}
We now describe likelihood-free Bayesian inference of the LRSM using neural Bayes estimators \citep{sainsbury2022fast}. Define the set of parametric probability distributions $\mathcal{P}$ on sample space $\Omega \subseteq \mathbb{R}^n$; $\mathcal{P}$ is parameterized by a parameter vector $\boldsymbol{\phi} :=(\alpha,\boldsymbol{\theta})$ such that ${\mathcal{P}:= \{F_{\mathbf{U}_t}(\cdot; \boldsymbol{\phi}):\boldsymbol{\phi}\in\Phi\}}$, where $\Phi \subseteq \mathbb{R}^p,p=|\boldsymbol{\phi}|,$ is the parameter space and $F_{\mathbf{U}_t}(\cdot; \boldsymbol{\phi})$ is the distribution function of $\mathbf{U}_t$, marginalized over the random scale $R_t$. Recall that $\mathbf{U}=(\mathbf{U}_1^\top,\dots,\mathbf{U}_T^\top)^\top$, where $\mathbf{U}_1,\dots,\mathbf{U}_T$ are mutually independent realizations from $F_{\mathbf{U}_t}(\cdot; \boldsymbol{\phi})$, and $\mathbf{U}_t=(U_t(\bs_1),\dots,U_t(\bs_n))^\top$, {$t=1,\dots,T,$} for sites $\{\bs_1,\dots,\bs_n \}$. We then define the point estimator $\widehat{\boldsymbol{\phi}}:\Omega^T\mapsto \Phi$. To assess the quality of a constructed estimator $\widehat{\boldsymbol{\phi}}(\mathbf{U})$ for a single parameter vector $\boldsymbol{\phi}$ and dataset $\mathbf{U}$, we require a loss function $L: \Phi \times \Phi \mapsto \mathbb{R}_{\geq 0}$. The Bayes' risk is then the integrated posterior expected loss:
\begin{equation}
\label{EQ:Bayes_risk}
r(\hat{\boldsymbol{\phi}}(\cdot)):=\int_\Phi \int_{\Omega^T} L(\boldsymbol{\phi},\widehat{\boldsymbol{\phi}}(\mathbf{U}))f(\mathbf{U}\mid\boldsymbol{\phi})\mathrm{d}\mathbf{U}\mathrm{d}\Pi(\boldsymbol{\phi}),
\end{equation}
where $\Pi(\cdot)$ is some prior measure on $\boldsymbol{\phi}\in\Phi$. A minimizer of \eqref{EQ:Bayes_risk} is a Bayes estimator with respect to $L(\cdot,\cdot)$ and $\Pi(\cdot)$, and has attractive theoretical properties under mild regularity conditions; for details, see \citet{lehmann2006theory}. The prior measure $\Pi(\cdot)$ is taken to be equivalent to the prior used by all other competing inference methods. Note that the smoothness parameter $\nu$ is fixed during construction and training of the estimator, and separate estimators are built for different choices of $\nu$; thus, $\Phi = \mathbb{R}_{>0}\times \mathbb{R}_{\geq 0}$.

Approximating the mapping $\widehat{\boldsymbol{\phi}}(\cdot)$ using neural networks is referred to as neural Bayes estimation; see, e.g., \citet{sainsbury2022fast}. We implement a neural Bayes estimator using graph neural networks via the \texttt{Julia} \citep{bezanson2017julia} package NeuralEstimators.jl \citep{sainsbury2022fast},
with the smoothness parameter $\nu$ treated as fixed. Thus, we construct and train four estimators for $\boldsymbol{\phi}=(\alpha,\rho)'$: two each for the marginal posterior medians and marginal posterior $95\%$ credible intervals, under two values of $\nu$: $\nu=0.5$ and $\nu=1.5$. Estimators use the architecture as described in \citet{sainsbury2023neural} and are trained for up to 72 hours on a NVIDIA A100 GPU. Training of the estimators is performed by minimizing an empirical estimate of the Bayes Risk \eqref{EQ:Bayes_risk}, with the sample size $T$ and spatial  locations treated as random. We place a discrete $\rm{Uniform}\{5,120\}$ prior on $T$ and, for the spatial locations $\{\bs_1,\dots,\bs_n\}$, we use a Mat\'ern cluster process prior with the expected number of locations $n$ taken to be 500 \citep{baddeley2016spatial}. 
Once trained, the estimator can be queried for negligible computational cost and is amortized with respect to the number of samples, $T$, and the number $n$ and configuration of the sample locations. That is, we train the estimator only once and then reuse it throughout the application and simulation study.

\section{Implementation Details for Simulation Studies}\label{sec:implementation}
This section provides computational details for the simulation studies presented in Section~\ref{s:simulation_study} of the main text. It both lists the hardware used and details the settings for each study.

Computation was distributed across four high-performance computing (HPC) systems hosted at different institutions. To mitigate systematic differences in hardware and environments, we ensured that each HPC was used for every scenario in the low- and high-dimensional cases.  In the low-dimensional setting (20 scenarios), each HPC performed 25 replicates per scenario, yielding 500 model fits per HPC. In the high-dimensional setting (8 scenarios), we employed three HPCs and tasks were similarly distributed across each scenario. Reported walltimes are averaged across HPC systems to account for system-specific differences. The following HPC resources were used in this study:
\begin{enumerate}
    \item \href{https://orc.gmu.edu}{HOPPER} (GMU): single 2.4 GHz Intel Xeon Gold 6240R processors using 30 cores on each node.
    \item \href{https://doit.missouri.edu/services/software-directory/hellbender/}{Hellbender} (Missouri): dual 2.0 GHz AMD EPYC 7713 processors with 512 GB RAM, 64 cores per socket and 128 cores per node.
    \item \href{https://hpc.uark.edu/}{AHPCC} (UArk): 2.1 GHz Intel Xeon Gold 6130 processors, using 30/32 cores on each node.
    \item \href{https://digitalresearchservices.ed.ac.uk/resources/eddie}{Eddie} (Edinburgh): 20 cores, on a randomly selected node comprising Intel Xeon Gold (2--2.8 GHz) processors - for full details, see \href{https://www.wiki.ed.ac.uk/spaces/ResearchServices/pages/296793802/Memory+Specification}{https://tinyurl.com/2667mj58}.
\end{enumerate}
Training and estimation for the neural Bayes estimator was also conducted on Eddie, but with a single NVIDIA A100 GPU.

For the low-rank approach, we used $100$ and $500$ eigenbasis functions obtained from the leading eigenvectors of a pre-specified Mat\'ern covariance matrix with parameters $\nu=0.5$ and $\rho=0.5$, selected based on inspection of the data. The configuration with $100$ basis functions represents a strongly dimension-reduced model emphasizing smooth, global spatial structure, whereas the $500$-basis configuration allows for finer spatial resolution and more localized variation. The basis functions were row-normalized to have unit $\ell_2$ norm, ensuring that the outer-product covariance matrix has unit diagonal entries. Consequently, the Gaussian process ${Z(\bs)}$ has unit marginal variance.

In the Vecchia approximations, we considered conditioning set of sizes $m \in \{3,5,10,15\}$ to assess the impact of set size (i.e., number of neighbors) on model performance. The maximum value $m=15$ was imposed due to computational constraints.

For covariance tapering, we used the spherical taper function with tapering thresholds $\psi \in \{0.2, 0.75\}$, corresponding to approximately $90\%$ and $20\%$ sparsity in the tapered covariance matrix, respectively. These settings allow us to evaluate how differing levels of tapered sparsity affects model performance and computational costs.

As the L\'evy Random Scale Mixture process does not admit a closed-form conditional distribution, and the Neural Bayes estimator provides point estimates of posterior medians and credible intervals, the latter does not support out-of-sample prediction. Thus, for the NBE we report only inference metrics (coverage and interval scores) and walltimes. 

For the Bayesian hierarchical models, parameter inference and predictions proceeded via posterior sampling via the Metropolis-Hastings algorithm. The sampling algorithm was run for 50{,}000 iterations. Convergence was assessed by monitoring the batch-means standard errors \citep{jones2006fixed} over the course of the algorithm. To improve mixing, parameters were updated using a one-at-a-time Metropolis scheme, which resulted in faster mixing than block updates. Proposal distributions were tuned using the Log-Adaptive Proposal (LAP) strategy of \citet{shaby2010exploring}, with adaptation performed every 200 iterations. All implementations and variants of the LSRM (i.e., full Gaussian process, low-rank approximation, Vecchia approximation, and covariance tapering) were fitted using the same posterior sampling scheme and MCMC tuning protocol.

\section{Validation Metrics for Simulation Studies}\label{sec:validation_metrics}
This section details the validation metrics that were used to compare results across different methods and different simulation settings in Section \ref{s:simulation_study} of the main text.

For inference in our simulation studies, we report the empirical coverage and interval scores \citep{gneiting2007strictly} for $\alpha$ and $\rho$. We consider two tail-weighted CRPS (twCRPS) measures \citep{gneiting2014probabilistic} and a smooth variant that emphasize different regions of the predictive distribution. We evaluate (i) a lower-tail twCRPS using the empirical median of the training data as a cutoff; (ii) a weighted CRPS with a smooth Gaussian weight function centered at the training-data median; and (iii) an extreme-tail twCRPS using the empirical 80th percentile of the posterior predictive distribution as the cutoff. 

\paragraph{Empirical coverage}
Empirical coverage reports the frequency of when the Bayesian credible intervals contain the true parameter value $\theta_0$ across replicate simulations. Let $[l_j, u_j]$ denote the $(1-\alpha^*)$ credible interval obtained from the $j$th experiment, for $j=1,\dots,B$. The empirical coverage probability is defined as
\begin{align}
\widehat{\mathrm{Cov}}
=
\frac{1}{B}
\sum_{j=1}^{B}
\mathbf{1}\{ \theta_0 \in [l_j, u_j] \},
\end{align}
where $\mathbf{1}\{\cdot\}$ is the indicator function. Values of $\widehat{\mathrm{Cov}}$ close to the nominal level $(1-\alpha^*)$ suggest that posterior uncertainty is well calibrated, but should be examined alongside predictive accuracy (e.g., proper scoring rules) and interval precision (e.g., interval scores).
\paragraph{Interval score}
The interval score is a proper scoring rule \citep{gneiting2007strictly} that rewards narrow intervals while penalizing the loss of coverage. Let $[l, u]$ denote a $(1-\alpha^*)$ credible interval for a scalar parameter with true value $\theta_0$. The interval score is defined as
\begin{align}
\mbox{IS}_{\alpha^*}(l,u;\theta_0)
=
(u-l)
+ \frac{2}{\alpha^*}(l-\theta_0)\mathbf{1}\{\theta_0 < l\}
+ \frac{2}{\alpha^*}(\theta_0-u)\mathbf{1}\{\theta_0 > u\},
\end{align}
where $\mathbf{1}\{\cdot\}$ is the indicator function. 

For each simulation scenario, we compute the interval score across the $B=100$ repeated experiments and report the weighted average
\begin{align}
\overline{\mbox{IS}}_{\alpha^*}
=
\frac{\alpha^*}{2}
\cdot
\frac{1}{B}
\sum_{j=1}^{B}
\mbox{IS}_{\alpha^*}(l_j,u_j;\theta_0).
\end{align} Lower interval scores are preferred, with near-zero values indicating narrow intervals that cover the true parameter.

\paragraph{Tail-weighted continuous rank probability scores}
Predictive performance was evaluated using three variants of the continuous ranked probability score (CRPS) \citep{matheson1976scoring,gneiting2007strictly}. Let $X$ denote the observed value and let $F$ denote the posterior predictive distribution obtained from the posterior samples $\{X^{(m)}\}_{m=1}^{M}$. The CRPS is defined as
\[
\mathrm{CRPS}(F,X)
=
\int_{-\infty}^{\infty}
\big(F(z)-\mathbf{1}\{z \ge X\}\big)^2 dz.
\]

To emphasize extremal regions of the predictive distribution, we consider weighted extensions of the CRPS. The tail-weighted CRPS (twCRPS) \citep{gneiting2014probabilistic} is 

\[
\mathrm{twCRPS}(F,X; w)
=
\int_{-\infty}^{\infty}
w(z)\,\big(F(z)-\mathbf{1}\{z \ge X\}\big)^2 \, dz,
\]
with weights $w(z)$.
\medskip
\noindent
\textbf{(i) Lower-tail twCRPS.}  
We compute a threshold-weighted CRPS using the empirical median of the training data, denoted $a$. The score places greater emphasis on predictive accuracy below the cutoff via the indicator weight
\[
w_1(z) = \mathbf{1}\{z \le a\}.
\]

\medskip
\noindent
\textbf{(ii) Smooth weighted CRPS.}  
To avoid the discontinuity induced by indicator weights, we also consider a smoothly varying Gaussian weight centered at the training-data median. Let $\mu$ and $\sigma$ denote the empirical mean and standard deviation of the training data, respectively. The weight function is
\[
w_2(z) = \Phi\!\left(\frac{z-\mu}{\sigma}\right),
\]
where $\Phi(\cdot)$ is the standard normal cumulative distribution function. This formulation progressively increases the contribution of larger outcomes while retaining sensitivity across the distribution.

\medskip
\noindent
\textbf{(iii) Extreme-tail twCRPS.}  
Finally, we evaluate predictive performance in the upper tail using a threshold-weighted CRPS with cutoff $a_{0.8}$, defined as the empirical 80th percentile of the posterior predictive samples. The corresponding weight function is
\[
w_3(z) = \mathbf{1}\{z \ge a_{0.8}\},
\]
thereby prioritizing accuracy for extreme outcomes.

\medskip
For each method, scores are averaged across test observations, with lower values corresponding to better predictive performance.

\section{Simulation Study under Model Misspecification}\label{sec:ms_sim_study}
We conduct an additional simulation study to assess the performance of the proposed LRSM approaches on data generated from models outside the LRSM framework. Specifically, we fit the model to data generated from max-stable (MSP) (Section~\ref{s:preliminaries}) and inverted max-stable processes \citep[IMSP,][]{wadsworth2012dependence}, representing asymptotic dependence and independence, respectively.

\subsection{Simulation design}
Similar to the low-dimensional simulation study, we randomly generate $n=500$ irregularly spaced locations, $\bs_1,\dots,\bs_n,$ within the unit square $\mathcal{S}= [0,1]^2$, with $T=100$ independent replicates. We allocate $n_{\text{test}}=125$ additional locations (with the same $T$) for out-of-sample evaluation. To assess robustness of the LRSM to data from different models, we generate asymptotically dependent  data from a MSP and asymptotically independent data from an IMSP using the \texttt{SpatialExtremes} function \texttt{rmaxstab()}. For both settings, we simulate max-stable processes using the Brown-Resnick model \citep{kabluchko2009stationary} with a power variogram of the form $\gamma(h) = (h/\rho)^{\nu}$, where the range parameter is set to $\rho=0.25$ and the smoothness parameter to $\nu=0.5$. For each process and configuration, we generate 100 replicate datasets. The inference procedures and validation metrics follow those used in the other simulation studies. Computation for the likelihood-based approaches was performed on GMU's HOPPER cluster, while the NBE computations were carried out separately.

\subsection{Results}
Most methods perform well under model misspecification. The full Gaussian process, Vecchia approximations, and tapering methods recover the correct extremal dependence regime across both the MSP and IMSP, as illustrated in Figures~\ref{fig:ms_inv_summary}. In contrast, the low-rank approaches misrepresent the dependence structure for the asymptotically independent case and exhibit noticeably poorer predictive performance.
\begin{figure}[ht]
\centering

\begin{subfigure}[t]{0.48\textwidth}
\centering
\includegraphics[width=\textwidth]{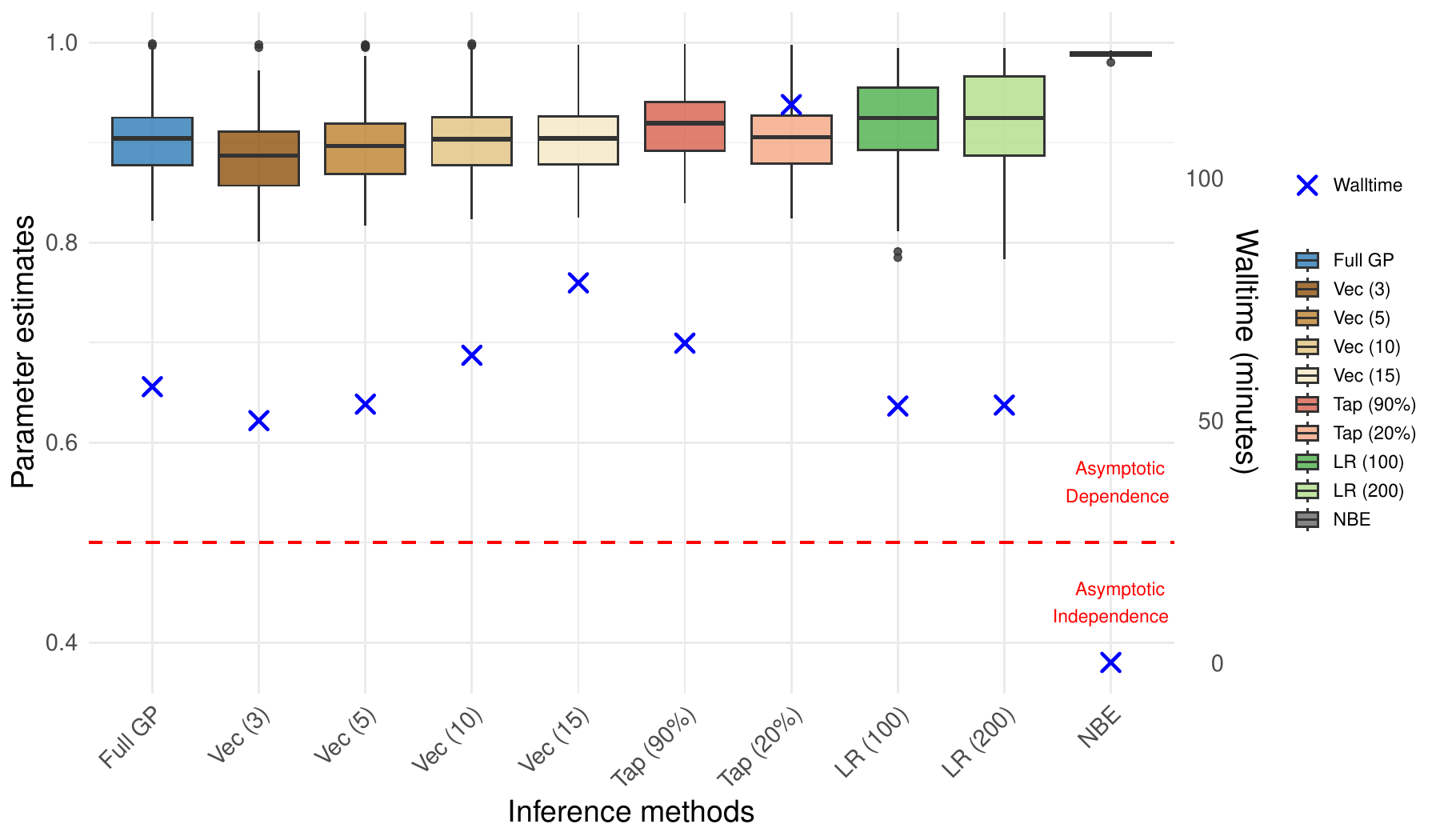}
\caption{Max-stable (AD)}
\end{subfigure}
\hfill
\begin{subfigure}[t]{0.48\textwidth}
\centering
\includegraphics[width=\textwidth]{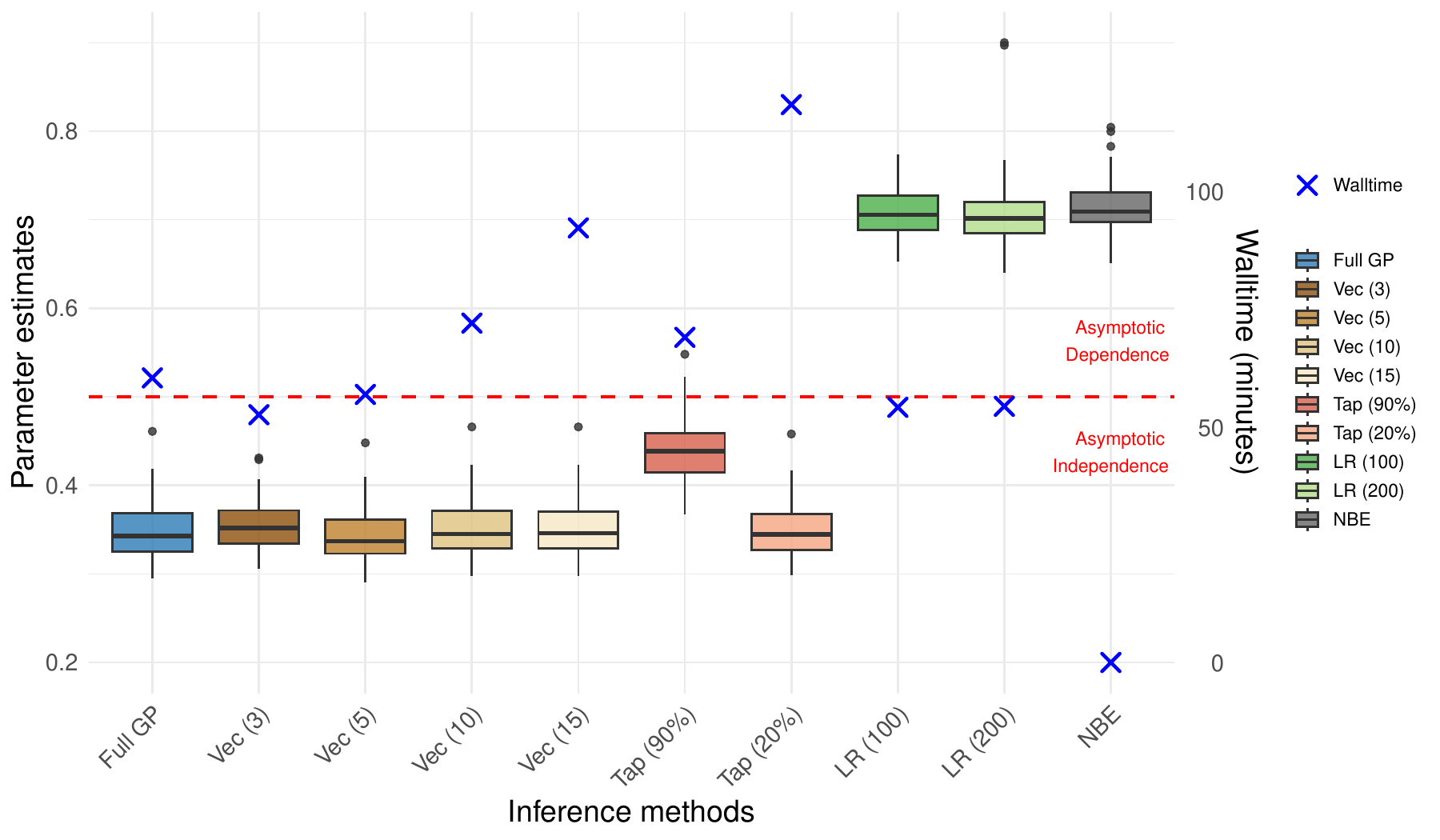}
\caption{Inverted max-stable (AI)}
\end{subfigure}

\caption{
Boxplots of posterior median estimates of $\alpha$ and walltime across inference methods for the max-stable (asymptotic dependence) and inverted max-stable (asymptotic independence) models. The dashed red line at 0.5 denotes the boundary between asymptotic dependence and asymptotic independence. Walltime (in minutes) is overlaid as blue X's using the secondary axis.
}
\label{fig:ms_inv_summary}
\end{figure}

Tables~\ref{tab:sim-ms} and \ref{tab:sim-inv} show that, aside from the low-rank methods, all approaches achieve similar predictive performance on the held-out validation set, with negligible differences in twCRPS. Among these, Vecchia approximations offer the best trade-off between accuracy and computational efficiency, with moderate conditioning set sizes ($m$) achieving performance comparable to the full Gaussian process though at lower cost.

As expected, the NBE encounters some difficulty under model misspecification. NBE preserves the relative ordering of $\alpha$, with MSP estimates exceeding those for the IMSP. However, these estimates do not clearly distinguish between asymptotic dependence and independence, as seen in Figures~\ref{fig:ms_inv_summary}. In particular, the IMSP estimates exceed the dependence threshold, while the MSP estimates lie near the boundary of the parameter space.

\begin{table}[!htbp]
\centering
\caption{Tail-weighted CRPS (twCRPS) and walltime for the max-stable process. The max-stable model corresponds to asymptotic dependence, while the inverted max-stable model captures asymptotic independence.} 
\label{tab:sim-ms}
\begin{tabular}{lrrrr}
   & \multicolumn{3}{c}{twCRPS} & Walltime \\ 
\cline{2-4}
 & 1 & 2 & 3 & (min) \\ \hline
 Full GP & 0.033 & 0.083 & 0.006 & 57.04 \\ 
  Vecchia (M = 3) & 0.034 & 0.087 & 0.007 & 50.04 \\ 
  Vecchia (M = 5) & 0.033 & 0.085 & 0.007 & 53.44 \\ 
  Vecchia (M = 10) & 0.033 & 0.083 & 0.006 & 63.53 \\ 
  Vecchia (M = 15) & 0.033 & 0.083 & 0.006 & 78.47 \\ 
  Taper (90\% sparse) & 0.033 & 0.085 & 0.006 & 66.03 \\ 
  Taper (20\% sparse) & 0.033 & 0.084 & 0.006 & 115.25 \\ 
  Low Rank (p = 100) & 0.050 & 0.132 & 0.012 & 53.05 \\ 
  Low Rank (p = 200) & 0.054 & 0.139 & 0.011 & 53.24 \\ 
  NBE &  &  &  & 0.14 \\ 
   \hline
\end{tabular}

\vspace{1cm}
\caption{Tail-weighted CRPS (twCRPS) and walltime for the inverted max-stable process. The max-stable model corresponds to asymptotic dependence, while the inverted max-stable model captures asymptotic independence.} 
\label{tab:sim-inv}
\begin{tabular}{lrrrr}
   & \multicolumn{3}{c}{twCRPS} & Walltime \\ 
\cline{2-4}
 & 1 & 2 & 3 & (min) \\ \hline
 Full GP & 0.042 & 0.082 & 0.010 & 60.58 \\ 
  Vecchia (M = 3) & 0.044 & 0.086 & 0.011 & 52.75 \\ 
  Vecchia (M = 5) & 0.043 & 0.083 & 0.010 & 57.06 \\ 
  Vecchia (M = 10) & 0.042 & 0.082 & 0.010 & 72.22 \\ 
  Vecchia (M = 15) & 0.042 & 0.082 & 0.010 & 92.44 \\ 
  Taper (90\% sparse) & 0.043 & 0.085 & 0.010 & 69.20 \\ 
  Taper (20\% sparse) & 0.042 & 0.082 & 0.010 & 118.63 \\ 
  Low Rank (p = 100) & 0.063 & 0.122 & 0.017 & 54.33 \\ 
  Low Rank (p = 200) & 0.068 & 0.130 & 0.017 & 54.54 \\ 
  NBE &  &  &  & 0.14 \\ 
   \hline
\end{tabular}
\end{table}

\section{Simulation Design for Low- and High-Dimensional Settings}\label{sec:simulation_design_supplement}
Below, we provide additional details on the simulation designs for both low- and high-dimensional settings of the Simulation studies presented in Section \ref{s:simulation_study} of the main text. The full design, including parameter settings and dataset configurations, is summarized in Table~\ref{tab:design_summary}.

\paragraph{Low-dimensional setting}
In the low-dimensional setting, we used $n=500$ locations with $T=100$ temporal replicates for training and reserved $n_{\text{test}}=125$ locations (with the same $T$) for testing. We consider two Mat\'ern smoothness values, $\nu \in \{0.5, 1.5\},$ and two true range parameter values. The range values are chosen so that the corresponding effective range is $0.15$ and $0.75$, where the effective range is the distance at which the correlation decays to $0.05$. This specification yields range parameters of $\rho\in\{0.05,0.25\}$ for $\nu=0.5$, and $\rho\in\{0.0316,0.1581\}$ for $\nu=1.5$. The choice $\nu=0.5$ reflects the best-fitting smoothness estimate from the NLDAS application (see Section~\ref{s:exploratory}), while $\nu=1.5$ provides a smoother counterpart for comparison. The strength and class of extremal dependence is varied across five levels, $\alpha\in \{0.05, 0.3, 0.45, 0.55, 0.7\}$, spanning asymptotic independence through to asymptotic dependence. In total, this provides $20$ combinations of $\nu$, $\rho$, and $\alpha$. 

\paragraph{High-dimensional setting}
The large-scale simulation study increases the numbers of spatial locations and temporal replicates to assess how the proposed methods scale computationally. Consistent with the exploratory analysis (Section~\ref{s:exploratory}), we fix $\nu = 0.5$ and $\rho = 0.05$ to reflect the dependence structure of our NLDAS dataset. We further consider two values of: $\alpha \in \{0.3, 0.7\}$, corresponding to asymptotic independence and dependence, respectively. Then, we vary the training dataset dimensions: $n \in \{1000, 5000\}$ locations and $T \in \{10, 50\}$ time replicates.  The corresponding test sets use $n_{\text{test}} \in \{150, 750\}$ locations with the same $T$. The eight combinations of spatial locations, temporal replicates, and $\alpha$ values. Each dataset was analyzed using the same competing methods as in the low-dimensional study, except when $n=5000$, where the full Gaussian process and covariance tapering were omitted due to computational cost.

\begin{table}[!htbp]
\small
\centering
\caption{Simulation design summary. 
The low-dimensional study varies smoothness $(\nu)$, range $(\rho)$, and extremal dependence $(\alpha)$. The high-dimensional study varies the number of spatial locations $(n)$, temporal replicates $(T)$, and $\alpha$.}
\small
\begin{tabular}{lcccc}
\toprule
\multicolumn{5}{c}{\textbf{Baseline/Low-Dimensional Settings}} \\[2pt]
\midrule
$n$ & $T$ & $\nu$ & Range $(\rho)$ & $\alpha$ \\
\midrule
500 & 100 & 0.5 & (0.05, 0.25) & $\{0.05,0.30,0.45,0.55,0.70\}$ \\
500 & 100 & 1.5 & (0.0316, 0.1581) & $\{0.05,0.30,0.45,0.55,0.70\}$ \\[2pt]
\addlinespace
\multicolumn{5}{l}{\textit{Total:} 20 unique combinations and 2,000 simulated datasets.} \\[6pt]
\midrule
\multicolumn{5}{c}{\textbf{Scalabliity/High-Dimensional Settings}} \\[2pt]
\midrule
$n$ & $T$ & $\nu$ & Range $(\rho)$ & $\alpha$ \\
\midrule
$\{1000, 5000\}$ & $\{10, 50\}$ & 0.5 & 0.05 & $\{0.3, 0.7\}$ \\[2pt]
\addlinespace
\multicolumn{5}{l}{\textit{Total:} 8 unique combinations and 800 simulated datasets.} \\
\bottomrule
\end{tabular}
\normalsize
\label{tab:design_summary}
\end{table}

\section{Low-dimensional Case: Simulation Study Results}
This section summarizes the results of the low-dimensional simulation study across multiple of dependence, smoothness, and range parameter settings. Tables and figures are provided in their corresponding subsections. It supports Section \ref{s:sim_results} in the main text.

\subsection{Tables} \label{tab:lowdim_results}
Tables~\ref{tab:sim-results-1}--\ref{tab:sim-results-20} report performance comparisons among the Full GP, Vecchia approximations, covariance tapering, low-rank methods, and the NBE approach. For each configuration, we present empirical coverage and average interval scores for the dependence parameter $\alpha$ and range parameter $\rho$, along with tail-weighted continuous ranked probability scores (twCRPS) and computational walltime. 

\begin{table}[htbp]
\small
\centering
\caption{Simulation study results for the low-dimensional case with dependence $\alpha=0.05$, smoothness $\nu=0.5$, and range $\rho=0.05$. We report the coverage and average interval scores for $\alpha$ and $\rho$ as well as the tail-weighted CRPS (twCRPS).} 
\label{tab:sim-results-1}
\begin{tabular}{lrrrrrrrr}
   & \multicolumn{2}{c}{Coverage} & \multicolumn{2}{c}{Interval Score} & \multicolumn{3}{c}{twCRPS} & Walltime \\ 
\cline{2-3}\cline{4-5}\cline{6-8}
 & $\alpha$ & Range & $\alpha$ & Range & 1 & 2 & 3 & (min) \\ \hline
 Full GP & 0.94 & 1.00 & 0.0021 & 0.0001 & 0.0539 & 0.1265 & 0.0120 & 52.9 \\ 
  Vecchia (M=3) & 0.78 & 1.00 & 0.0036 & 0.0001 & 0.0552 & 0.1302 & 0.0123 & 46.8 \\ 
  Vecchia (M=5) & 0.91 & 1.00 & 0.0024 & 0.0001 & 0.0543 & 0.1278 & 0.0121 & 48.5 \\ 
  Vecchia (M=10) & 0.95 & 1.00 & 0.0021 & 0.0001 & 0.0539 & 0.1266 & 0.0120 & 57.1 \\ 
  Vecchia (M=15) & 0.95 & 1.00 & 0.0021 & 0.0001 & 0.0539 & 0.1265 & 0.0120 & 70.2 \\ 
  Taper (90\% sparse) & 0.81 & 0.00 & 0.0035 & 0.0315 & 0.0539 & 0.1262 & 0.0119 & 64.9 \\ 
  Taper (20\% sparse) & 0.95 & 0.00 & 0.0021 & 0.0043 & 0.0539 & 0.1264 & 0.0120 & 110.6 \\ 
  Low Rank (100) & 0.00 &  & 0.8266 &  & 0.0932 & 0.1870 & 0.0257 & 50.4 \\ 
  Low Rank (200) & 0.00 &  & 0.8334 &  & 0.0915 & 0.1838 & 0.0205 & 50.7 \\ 
  NBE & 1.00 & 1.00&0.0046 &0.0005 & - & - & - & 0.14\\
   \hline
\end{tabular}
\normalsize
\end{table}

\begin{table}[htbp]
\small
\centering
\caption{Simulation study results for the low-dimensional case with dependence $\alpha=0.3$, smoothness $\nu=0.5$, and range $\rho=0.05$. We report the coverage and average interval scores for $\alpha$ and $\rho$ as well as the tail-weighted CRPS (twCRPS).} 
\label{tab:sim-results-2}
\begin{tabular}{lrrrrrrrr}
   & \multicolumn{2}{c}{Coverage} & \multicolumn{2}{c}{Interval Score} & \multicolumn{3}{c}{twCRPS} & Walltime \\ 
\cline{2-3}\cline{4-5}\cline{6-8}
 & $\alpha$ & Range & $\alpha$ & Range & 1 & 2 & 3 & (min) \\ \hline
 Full GP & 0.95 & 1.00 & 0.0017 & 0.0001 & 0.0498 & 0.1176 & 0.0111 & 59.6 \\ 
  Vecchia (M=3) & 0.92 & 0.99 & 0.0020 & 0.0001 & 0.0510 & 0.1211 & 0.0114 & 50.6 \\ 
  Vecchia (M=5) & 0.95 & 1.00 & 0.0019 & 0.0001 & 0.0502 & 0.1188 & 0.0112 & 53.8 \\ 
  Vecchia (M=10) & 0.95 & 1.00 & 0.0017 & 0.0001 & 0.0499 & 0.1178 & 0.0111 & 63.9 \\ 
  Vecchia (M=15) & 0.97 & 1.00 & 0.0017 & 0.0001 & 0.0498 & 0.1177 & 0.0111 & 77.2 \\ 
Taper (90\% sparse) & 0.87 & 0.00 & 0.0021 & 0.0300 & 0.0498 & 0.1175 & 0.0111 & 69.0 \\ 
Taper (20\% sparse) & 0.96 & 0.00 & 0.0017 & 0.0038 & 0.0498 & 0.1176 & 0.0111 & 117.8 \\ 
  Low Rank (100) & 0.00 &  & 0.5355 &  & 0.0859 & 0.1751 & 0.0232 & 50.9 \\ 
  Low Rank (200) & 0.00 &  & 0.5419 &  & 0.0851 & 0.1737 & 0.0191 & 50.9 \\ 
   NBE & 0.92 & 1.00 &0.0088 & 0.0005 & - & - & - & 0.14\\
   \hline
\end{tabular}
\normalsize
\end{table}

\begin{table}[htbp]
\small
\centering
\caption{Simulation study results for the low-dimensional case with dependence $\alpha=0.45$, smoothness $\nu=0.5$, and range $\rho=0.05$. We report the coverage and average interval scores for $\alpha$ and $\rho$ as well as the tail-weighted CRPS (twCRPS).} 
\label{tab:sim-results-3}
\begin{tabular}{lrrrrrrrr}
   & \multicolumn{2}{c}{Coverage} & \multicolumn{2}{c}{Interval Score} & \multicolumn{3}{c}{twCRPS} & Walltime \\ 
\cline{2-3}\cline{4-5}\cline{6-8}
 & $\alpha$ & Range & $\alpha$ & Range & 1 & 2 & 3 & (min) \\ \hline
 Full GP & 0.96 & 1.00 & 0.0017 & 0.0001 & 0.0457 & 0.1089 & 0.0101 & 58.5 \\ 
  Vecchia (M=3) & 0.94 & 0.98 & 0.0017 & 0.0001 & 0.0468 & 0.1120 & 0.0104 & 49.7 \\ 
  Vecchia (M=5) & 0.96 & 0.98 & 0.0017 & 0.0001 & 0.0461 & 0.1100 & 0.0102 & 52.9 \\ 
  Vecchia (M=10) & 0.95 & 1.00 & 0.0017 & 0.0001 & 0.0457 & 0.1090 & 0.0101 & 63.4 \\ 
  Vecchia (M=15) & 0.96 & 1.00 & 0.0017 & 0.0001 & 0.0457 & 0.1089 & 0.0101 & 76.1 \\ 
  Taper (90\% sparse) & 0.93 & 0.00 & 0.0016 & 0.0305 & 0.0457 & 0.1087 & 0.0101 & 68.2 \\ 
  Taper (20\% sparse) & 0.95 & 0.00 & 0.0017 & 0.0040 & 0.0457 & 0.1088 & 0.0101 & 117.2 \\ 
  Low Rank (100) & 0.00 &  & 0.3807 &  & 0.0782 & 0.1628 & 0.0208 & 50.7 \\ 
  Low Rank (200) & 0.00 &  & 0.3877 &  & 0.0779 & 0.1618 & 0.0176 & 50.7 \\ 
     NBE & 0.99 & 0.96 &0.0109 &0.0005 & - & - & - & 0.14\\
   \hline
\end{tabular}
\normalsize
\end{table}

\begin{table}[htbp]
\small
\centering
\caption{Simulation study results for the low-dimensional case with dependence $\alpha=0.55$, smoothness $\nu=0.5$, and range $\rho=0.05$. We report the coverage and average interval scores for $\alpha$ and $\rho$ as well as the tail-weighted CRPS (twCRPS).} 
\label{tab:sim-results-4}
\begin{tabular}{lrrrrrrrr}
   & \multicolumn{2}{c}{Coverage} & \multicolumn{2}{c}{Interval Score} & \multicolumn{3}{c}{twCRPS} & Walltime \\ 
\cline{2-3}\cline{4-5}\cline{6-8}
 & $\alpha$ & Range & $\alpha$ & Range & 1 & 2 & 3 & (min) \\ \hline
 Full GP & 0.95 & 1.00 & 0.0016 & 0.0001 & 0.0423 & 0.1029 & 0.0090 & 58.4 \\ 
  Vecchia (M=3) & 0.91 & 1.00 & 0.0016 & 0.0001 & 0.0434 & 0.1060 & 0.0092 & 49.0 \\ 
  Vecchia (M=5) & 0.95 & 1.00 & 0.0016 & 0.0001 & 0.0427 & 0.1040 & 0.0091 & 52.2 \\ 
  Vecchia (M=10) & 0.95 & 1.00 & 0.0016 & 0.0001 & 0.0424 & 0.1031 & 0.0090 & 62.3 \\ 
  Vecchia (M=15) & 0.95 & 1.00 & 0.0016 & 0.0001 & 0.0423 & 0.1030 & 0.0090 & 75.8 \\ 
  Taper (90\% sparse) & 0.92 & 0.00 & 0.0017 & 0.0302 & 0.0424 & 0.1029 & 0.0090 & 67.9 \\ 
  Taper (20\% sparse) & 0.95 & 0.00 & 0.0016 & 0.0039 & 0.0423 & 0.1029 & 0.0090 & 116.3 \\ 
  Low Rank (100) & 0.00 &  & 0.2752 &  & 0.0723 & 0.1548 & 0.0186 & 50.9 \\ 
  Low Rank (200) & 0.00 &  & 0.2789 &  & 0.0724 & 0.1536 & 0.0158 & 50.9 \\ 
     NBE & 1.00 & 0.95 &0.0124 & 0.0006 & - & - & - & 0.14\\
   \hline
\end{tabular}
\normalsize
\end{table}

\begin{table}[htbp]
\small
\centering
\caption{Simulation study results for the low-dimensional case with dependence $\alpha=0.7$, smoothness $\nu=0.5$, and range $\rho=0.05$. We report the coverage and average interval scores for $\alpha$ and $\rho$ as well as the tail-weighted CRPS (twCRPS).} 
\label{tab:sim-results-5}
\begin{tabular}{lrrrrrrrr}
   & \multicolumn{2}{c}{Coverage} & \multicolumn{2}{c}{Interval Score} & \multicolumn{3}{c}{twCRPS} & Walltime \\ 
\cline{2-3}\cline{4-5}\cline{6-8}
 & $\alpha$ & Range & $\alpha$ & Range & 1 & 2 & 3 & (min) \\ \hline
 Full GP & 0.93 & 0.97 & 0.0104 & 0.0004 & 0.0382 & 0.0946 & 0.0080 & 56.9 \\ 
  Vecchia (M=3) & 0.93 & 0.96 & 0.0106 & 0.0004 & 0.0391 & 0.0973 & 0.0082 & 48.3 \\ 
  Vecchia (M=5) & 0.93 & 0.97 & 0.0098 & 0.0004 & 0.0385 & 0.0955 & 0.0081 & 51.9 \\ 
  Vecchia (M=10) & 0.92 & 0.97 & 0.0105 & 0.0004 & 0.0383 & 0.0947 & 0.0080 & 61.5 \\ 
  Vecchia (M=15) & 0.92 & 0.97 & 0.0105 & 0.0004 & 0.0382 & 0.0946 & 0.0080 & 75.1 \\ 
  Taper (90\% sparse) & 0.89 & 0.00 & 0.0105 & 0.0295 & 0.0382 & 0.0945 & 0.0080 & 67.1 \\ 
  Taper (20\% sparse) & 0.93 & 0.00 & 0.0105 & 0.0040 & 0.0382 & 0.0946 & 0.0080 & 115.7 \\ 
  Low Rank (100) & 0.00 &  & 0.1302 &  & 0.0649 & 0.1432 & 0.0160 & 50.8 \\ 
  Low Rank (200) & 0.00 &  & 0.1361 &  & 0.0655 & 0.1433 & 0.0140 & 51.2 \\ 
     NBE & 0.96 & 0.93 &0.0077 & 0.0005& - & - & - & 0.14\\
   \hline
\end{tabular}
\normalsize
\end{table}

\begin{table}[htbp]
\small
\centering
\caption{Simulation study results for the low-dimensional case with dependence $\alpha=0.05$, smoothness $\nu=0.5$, and range $\rho=0.25$. We report the coverage and average interval scores for $\alpha$ and $\rho$ as well as the tail-weighted CRPS (twCRPS).} 
\label{tab:sim-results-6}
\begin{tabular}{lrrrrrrrr}
   & \multicolumn{2}{c}{Coverage} & \multicolumn{2}{c}{Interval Score} & \multicolumn{3}{c}{twCRPS} & Walltime \\ 
\cline{2-3}\cline{4-5}\cline{6-8}
 & $\alpha$ & Range & $\alpha$ & Range & 1 & 2 & 3 & (min) \\ \hline
 Full GP & 0.98 & 0.96 & 0.0029 & 0.0003 & 0.0252 & 0.0532 & 0.0059 & 52.4 \\ 
  Vecchia (M=3) & 0.66 & 0.78 & 0.0114 & 0.0010 & 0.0267 & 0.0571 & 0.0063 & 47.5 \\ 
  Vecchia (M=5) & 0.92 & 0.92 & 0.0042 & 0.0004 & 0.0258 & 0.0546 & 0.0061 & 48.5 \\ 
  Vecchia (M=10) & 0.97 & 0.96 & 0.0034 & 0.0004 & 0.0253 & 0.0535 & 0.0059 & 64.2 \\ 
  Vecchia (M=15) & 0.97 & 0.96 & 0.0031 & 0.0004 & 0.0252 & 0.0533 & 0.0059 & 88.0 \\ 
  Taper (90\% sparse) & 0.00 & 0.00 & 0.8447 & 0.2471 & 0.0254 & 0.0559 & 0.0062 & 64.4 \\ 
  Taper (20\% sparse) & 0.79 & 0.00 & 0.0061 & 0.2239 & 0.0252 & 0.0532 & 0.0059 & 105.2 \\ 
  Low Rank (100) & 0.00 &  & 0.7833 &  & 0.0430 & 0.0882 & 0.0105 & 50.7 \\ 
  Low Rank (200) & 0.00 &  & 0.7891 &  & 0.0421 & 0.0860 & 0.0098 & 50.7 \\ 
      NBE & 1.00 & 1.00 &0.0065 & 0.0017& - & - & - & 0.14\\
   \hline
\end{tabular}
\normalsize
\end{table}

\begin{table}[htbp]
\small
\centering
\caption{Simulation study results for the low-dimensional case with dependence $\alpha=0.3$, smoothness $\nu=0.5$, and range $\rho=0.25$. We report the coverage and average interval scores for $\alpha$ and $\rho$ as well as the tail-weighted CRPS (twCRPS).} 
\label{tab:sim-results-7}
\begin{tabular}{lrrrrrrrr}
   & \multicolumn{2}{c}{Coverage} & \multicolumn{2}{c}{Interval Score} & \multicolumn{3}{c}{twCRPS} & Walltime \\ 
\cline{2-3}\cline{4-5}\cline{6-8}
 & $\alpha$ & Range & $\alpha$ & Range & 1 & 2 & 3 & (min) \\ \hline
 Full GP & 0.98 & 0.99 & 0.0028 & 0.0006 & 0.0233 & 0.0498 & 0.0055 & 59.2 \\ 
  Vecchia (M=3) & 0.77 & 0.79 & 0.0065 & 0.0015 & 0.0249 & 0.0534 & 0.0058 & 50.3 \\ 
  Vecchia (M=5) & 0.94 & 0.95 & 0.0033 & 0.0006 & 0.0239 & 0.0512 & 0.0056 & 54.3 \\ 
  Vecchia (M=10) & 0.98 & 0.98 & 0.0029 & 0.0006 & 0.0235 & 0.0501 & 0.0055 & 70.6 \\ 
  Vecchia (M=15) & 0.98 & 0.99 & 0.0028 & 0.0006 & 0.0234 & 0.0499 & 0.0055 & 94.3 \\ 
  Taper (90\% sparse) & 0.00 & 0.00 & 0.6903 & 0.2472 & 0.0236 & 0.0526 & 0.0057 & 53.5 \\ 
  Taper (20\% sparse) & 0.62 & 0.00 & 0.0063 & 0.2017 & 0.0234 & 0.0498 & 0.0055 & 112.7 \\ 
  Low Rank (100) & 0.00 &  & 0.5314 &  & 0.0402 & 0.0854 & 0.0097 & 50.9 \\ 
  Low Rank (200) & 0.00 &  & 0.5333 &  & 0.0397 & 0.0842 & 0.0092 & 51.0 \\ 
      NBE & 0.95 & 1.00 &0.0096 & 0.0020& - & - & - & 0.14\\
   \hline
\end{tabular}
\normalsize
\end{table}

\begin{table}[htbp]
\small
\centering
\caption{Simulation study results for the low-dimensional case with dependence $\alpha=0.45$, smoothness $\nu=0.5$, and range $\rho=0.25$. We report the coverage and average interval scores for $\alpha$ and $\rho$ as well as the tail-weighted CRPS (twCRPS).} 
\label{tab:sim-results-8}
\begin{tabular}{lrrrrrrrr}
   & \multicolumn{2}{c}{Coverage} & \multicolumn{2}{c}{Interval Score} & \multicolumn{3}{c}{twCRPS} & Walltime \\ 
\cline{2-3}\cline{4-5}\cline{6-8}
 & $\alpha$ & Range & $\alpha$ & Range & 1 & 2 & 3 & (min) \\ \hline
 Full GP & 0.94 & 0.94 & 0.0086 & 0.0017 & 0.0215 & 0.0463 & 0.0049 & 58.7 \\ 
  Vecchia (M=3) & 0.89 & 0.83 & 0.0101 & 0.0024 & 0.0228 & 0.0496 & 0.0052 & 49.6 \\ 
  Vecchia (M=5) & 0.92 & 0.94 & 0.0089 & 0.0018 & 0.0220 & 0.0475 & 0.0050 & 54.0 \\ 
  Vecchia (M=10) & 0.93 & 0.94 & 0.0087 & 0.0018 & 0.0216 & 0.0466 & 0.0049 & 69.0 \\ 
  Vecchia (M=15) & 0.93 & 0.93 & 0.0086 & 0.0017 & 0.0215 & 0.0464 & 0.0049 & 93.8 \\ 
 Taper (90\% sparse) & 0.00 & 0.00 & 0.5471 & 0.2480 & 0.0218 & 0.0495 & 0.0050 & 52.8 \\ 
  Taper (20\% sparse) & 0.72 & 0.00 & 0.0155 & 0.1974 & 0.0215 & 0.0463 & 0.0049 & 111.2 \\ 
  Low Rank (100) & 0.00 &  & 0.3827 &  & 0.0369 & 0.0810 & 0.0087 & 50.5 \\ 
  Low Rank (200) & 0.00 &  & 0.3841 &  & 0.0365 & 0.0804 & 0.0083 & 50.5 \\ 
      NBE & 0.93 & 0.93 &0.0157 & 0.0030& - & - & - & 0.14\\
   \hline
\end{tabular}
\normalsize
\end{table}

\begin{table}[htbp]
\small
\centering
\caption{Simulation study results for the low-dimensional case with dependence $\alpha=0.55$, smoothness $\nu=0.5$, and range $\rho=0.25$. We report the coverage and average interval scores for $\alpha$ and $\rho$ as well as the tail-weighted CRPS (twCRPS).} 
\label{tab:sim-results-9}
\begin{tabular}{lrrrrrrrr}
   & \multicolumn{2}{c}{Coverage} & \multicolumn{2}{c}{Interval Score} & \multicolumn{3}{c}{twCRPS} & Walltime \\ 
\cline{2-3}\cline{4-5}\cline{6-8}
 & $\alpha$ & Range & $\alpha$ & Range & 1 & 2 & 3 & (min) \\ \hline
 Full GP & 0.94 & 0.97 & 0.0071 & 0.0015 & 0.0201 & 0.0437 & 0.0045 & 58.6 \\ 
  Vecchia (M=3) & 0.77 & 0.77 & 0.0097 & 0.0025 & 0.0213 & 0.0468 & 0.0048 & 49.2 \\ 
  Vecchia (M=5) & 0.93 & 0.94 & 0.0075 & 0.0016 & 0.0205 & 0.0449 & 0.0046 & 53.0 \\ 
  Vecchia (M=10) & 0.94 & 0.97 & 0.0072 & 0.0015 & 0.0202 & 0.0439 & 0.0045 & 69.3 \\ 
  Vecchia (M=15) & 0.94 & 0.97 & 0.0073 & 0.0015 & 0.0201 & 0.0438 & 0.0045 & 93.1 \\ 
  Taper (90\% sparse) & 0.00 & 0.00 & 0.4480 & 0.2481 & 0.0204 & 0.0473 & 0.0045 & 52.8 \\ 
  Taper (20\% sparse) & 0.57 & 0.00 & 0.0101 & 0.1951 & 0.0201 & 0.0437 & 0.0045 & 111.4 \\ 
  Low Rank (100) & 0.00 &  & 0.2976 &  & 0.0347 & 0.0799 & 0.0080 & 50.7 \\ 
  Low Rank (200) & 0.00 &  & 0.2939 &  & 0.0341 & 0.0781 & 0.0076 & 50.7 \\ 
      NBE & 0.97 & 0.96 &0.0155 & 0.0034& - & - & - & 0.14\\
   \hline
\end{tabular}
\normalsize
\end{table}

\begin{table}[htbp]
\small
\centering
\caption{Simulation study results for the low-dimensional case with dependence $\alpha=0.7$, smoothness $\nu=0.5$, and range $\rho=0.25$. We report the coverage and average interval scores for $\alpha$ and $\rho$ as well as the tail-weighted CRPS (twCRPS).} 
\label{tab:sim-results-10}
\begin{tabular}{lrrrrrrrr}
   & \multicolumn{2}{c}{Coverage} & \multicolumn{2}{c}{Interval Score} & \multicolumn{3}{c}{twCRPS} & Walltime \\ 
\cline{2-3}\cline{4-5}\cline{6-8}
 & $\alpha$ & Range & $\alpha$ & Range & 1 & 2 & 3 & (min) \\ \hline
 Full GP & 0.90 & 0.89 & 0.0170 & 0.0036 & 0.0184 & 0.0405 & 0.0041 & 57.5 \\ 
  Vecchia (M=3) & 0.76 & 0.77 & 0.0158 & 0.0038 & 0.0196 & 0.0434 & 0.0043 & 48.6 \\ 
  Vecchia (M=5) & 0.92 & 0.90 & 0.0156 & 0.0034 & 0.0188 & 0.0416 & 0.0042 & 52.5 \\ 
  Vecchia (M=10) & 0.90 & 0.90 & 0.0169 & 0.0036 & 0.0185 & 0.0408 & 0.0041 & 69.1 \\ 
  Vecchia (M=15) & 0.91 & 0.90 & 0.0165 & 0.0036 & 0.0184 & 0.0406 & 0.0041 & 93.7 \\ 
  Taper (90\% sparse) & 0.00 & 0.00 & 0.2980 & 0.2484 & 0.0189 & 0.0448 & 0.0040 & 52.8 \\ 
  Taper (20\% sparse) & 0.55 & 0.00 & 0.0206 & 0.1883 & 0.0184 & 0.0405 & 0.0041 & 110.5 \\ 
  Low Rank (100) & 0.00 &  & 0.1691 &  & 0.0321 & 0.0825 & 0.0073 & 50.7 \\ 
  Low Rank (200) & 0.00 &  & 0.1627 &  & 0.0316 & 0.0793 & 0.0070 & 50.4 \\ 
      NBE & 0.94 & 0.91 &0.0118 & 0.0032& - & - & - & 0.14\\
   \hline
\end{tabular}
\normalsize
\end{table}

\begin{table}[htbp]
\small
\centering
\caption{Simulation study results for the low-dimensional case with dependence $\alpha=0.05$, smoothness $\nu=1.5$, and range $\rho=0.032$. We report the coverage and average interval scores for $\alpha$ and $\rho$ as well as the tail-weighted CRPS (twCRPS).} 
\label{tab:sim-results-11}
\begin{tabular}{lrrrrrrrr}
   & \multicolumn{2}{c}{Coverage} & \multicolumn{2}{c}{Interval Score} & \multicolumn{3}{c}{twCRPS} & Walltime \\ 
\cline{2-3}\cline{4-5}\cline{6-8}
 & $\alpha$ & Range & $\alpha$ & Range & 1 & 2 & 3 & (min) \\ \hline
 Full GP & 0.96 & 0.93 & 0.0021 & 0.0001 & 0.0309 & 0.0680 & 0.0071 & 53.2 \\ 
  Vecchia (M=3) & 0.90 & 0.95 & 0.0030 & 0.0000 & 0.0344 & 0.0766 & 0.0079 & 45.8 \\ 
  Vecchia (M=5) & 0.95 & 0.92 & 0.0023 & 0.0001 & 0.0324 & 0.0717 & 0.0074 & 47.1 \\ 
  Vecchia (M=10) & 0.96 & 0.92 & 0.0021 & 0.0001 & 0.0312 & 0.0687 & 0.0071 & 54.4 \\ 
  Vecchia (M=15) & 0.97 & 0.92 & 0.0021 & 0.0001 & 0.0310 & 0.0682 & 0.0071 & 64.8 \\ 
  Taper (90\% sparse) & 0.00 & 0.00 & 0.3534 & 0.0264 & 0.0324 & 0.0736 & 0.0080 & 68.8 \\ 
  Taper (20\% sparse) & 0.86 & 0.00 & 0.0064 & 0.0059 & 0.0312 & 0.0687 & 0.0072 & 112.4 \\ 
  Low Rank (100) & 0.00 &  & 0.8270 &  & 0.0716 & 0.1443 & 0.0178 & 50.5 \\ 
  Low Rank (200) & 0.00 &  & 0.8332 &  & 0.0590 & 0.1189 & 0.0129 & 50.7 \\ 
        NBE & 1.00 & 1.00 &0.0053 & 0.0003& - & - & - & 0.14\\

   \hline
\end{tabular}
\normalsize
\end{table}

\begin{table}[htbp]
\small
\centering
\caption{Simulation study results for the low-dimensional case with dependence $\alpha=0.3$, smoothness $\nu=1.5$, and range $\rho=0.032$. We report the coverage and average interval scores for $\alpha$ and $\rho$ as well as the tail-weighted CRPS (twCRPS).} 
\label{tab:sim-results-12}
\begin{tabular}{lrrrrrrrr}
   & \multicolumn{2}{c}{Coverage} & \multicolumn{2}{c}{Interval Score} & \multicolumn{3}{c}{twCRPS} & Walltime \\ 
\cline{2-3}\cline{4-5}\cline{6-8}
 & $\alpha$ & Range & $\alpha$ & Range & 1 & 2 & 3 & (min) \\ \hline
 Full GP & 0.94 & 0.90 & 0.0020 & 0.0001 & 0.0283 & 0.0626 & 0.0065 & 59.7 \\ 
  Vecchia (M=3) & 0.94 & 0.90 & 0.0021 & 0.0001 & 0.0315 & 0.0704 & 0.0073 & 50.3 \\ 
  Vecchia (M=5) & 0.94 & 0.91 & 0.0020 & 0.0001 & 0.0297 & 0.0660 & 0.0069 & 53.2 \\ 
  Vecchia (M=10) & 0.95 & 0.89 & 0.0020 & 0.0001 & 0.0286 & 0.0634 & 0.0066 & 60.3 \\ 
  Vecchia (M=15) & 0.93 & 0.91 & 0.0021 & 0.0001 & 0.0284 & 0.0628 & 0.0065 & 71.5 \\ 
  Taper (90\% sparse) & 0.00 & 0.00 & 0.2430 & 0.0261 & 0.0297 & 0.0679 & 0.0072 & 68.0 \\ 
  Taper (20\% sparse) & 0.11 & 0.00 & 0.0290 & 0.0050 & 0.0286 & 0.0634 & 0.0067 & 117.9 \\ 
  Low Rank (100) & 0.00 &  & 0.5401 &  & 0.0659 & 0.1349 & 0.0164 & 50.9 \\ 
  Low Rank (200) & 0.00 &  & 0.5493 &  & 0.0540 & 0.1102 & 0.0120 & 50.8 \\ 
   NBE & 0.96 & 1.00 &0.0086 & 0.0002& - & - & - & 0.14\\
   \hline
\end{tabular}
\normalsize
\end{table}

\begin{table}[htbp]
\small
\centering
\caption{Simulation study results for the low-dimensional case with dependence $\alpha=0.45$, smoothness $\nu=1.5$, and range $\rho=0.032$. We report the coverage and average interval scores for $\alpha$ and $\rho$ as well as the tail-weighted CRPS (twCRPS).} 
\label{tab:sim-results-13}
\begin{tabular}{lrrrrrrrr}
   & \multicolumn{2}{c}{Coverage} & \multicolumn{2}{c}{Interval Score} & \multicolumn{3}{c}{twCRPS} & Walltime \\ 
\cline{2-3}\cline{4-5}\cline{6-8}
 & $\alpha$ & Range & $\alpha$ & Range & 1 & 2 & 3 & (min) \\ \hline
 Full GP & 0.98 & 0.92 & 0.0016 & 0.0001 & 0.0260 & 0.0586 & 0.0058 & 59.3 \\ 
  Vecchia (M=3) & 0.92 & 0.94 & 0.0018 & 0.0001 & 0.0289 & 0.0659 & 0.0065 & 49.4 \\ 
  Vecchia (M=5) & 0.98 & 0.92 & 0.0017 & 0.0001 & 0.0272 & 0.0617 & 0.0061 & 52.2 \\ 
  Vecchia (M=10) & 0.98 & 0.92 & 0.0017 & 0.0001 & 0.0263 & 0.0593 & 0.0059 & 60.2 \\ 
  Vecchia (M=15) & 0.98 & 0.93 & 0.0017 & 0.0001 & 0.0261 & 0.0588 & 0.0059 & 69.9 \\ 
  Taper (90\% sparse) & 0.00 & 0.00 & 0.2288 & 0.0254 & 0.0272 & 0.0633 & 0.0064 & 67.2 \\ 
  Taper (20\% sparse) & 0.04 & 0.00 & 0.0296 & 0.0048 & 0.0262 & 0.0593 & 0.0060 & 116.2 \\ 
  Low Rank (100) & 0.00 &  & 0.3785 &  & 0.0602 & 0.1267 & 0.0147 & 50.8 \\ 
  Low Rank (200) & 0.00 &  & 0.3870 &  & 0.0497 & 0.1038 & 0.0110 & 50.9 \\
     NBE & 1.00 & 1.00 &0.0115 & 0.0002& - & - & - & 0.14\\

   \hline
\end{tabular}
\normalsize
\end{table}

\begin{table}[htbp]
\small
\centering
\caption{Simulation study results for the low-dimensional case with dependence $\alpha=0.55$, smoothness $\nu=1.5$, and range $\rho=0.032$. We report the coverage and average interval scores for $\alpha$ and $\rho$ as well as the tail-weighted CRPS (twCRPS).} 
\label{tab:sim-results-14}
\begin{tabular}{lrrrrrrrr}
   & \multicolumn{2}{c}{Coverage} & \multicolumn{2}{c}{Interval Score} & \multicolumn{3}{c}{twCRPS} & Walltime \\ 
\cline{2-3}\cline{4-5}\cline{6-8}
 & $\alpha$ & Range & $\alpha$ & Range & 1 & 2 & 3 & (min) \\ \hline
 Full GP & 0.97 & 0.95 & 0.0015 & 0.0000 & 0.0246 & 0.0555 & 0.0055 & 58.5 \\ 
  Vecchia (M=3) & 0.97 & 0.94 & 0.0016 & 0.0001 & 0.0273 & 0.0624 & 0.0061 & 48.8 \\ 
  Vecchia (M=5) & 0.96 & 0.94 & 0.0016 & 0.0001 & 0.0258 & 0.0585 & 0.0057 & 51.5 \\ 
  Vecchia (M=10) & 0.98 & 0.94 & 0.0015 & 0.0000 & 0.0248 & 0.0561 & 0.0055 & 59.0 \\ 
  Vecchia (M=15) & 0.98 & 0.96 & 0.0015 & 0.0000 & 0.0247 & 0.0557 & 0.0055 & 69.7 \\ 
  Taper (90\% sparse) & 0.00 & 0.00 & 0.2205 & 0.0254 & 0.0257 & 0.0599 & 0.0060 & 66.9 \\ 
  Taper (20\% sparse) & 0.05 & 0.00 & 0.0257 & 0.0049 & 0.0248 & 0.0561 & 0.0056 & 116.4 \\ 
  Low Rank (100) & 0.00 &  & 0.2817 &  & 0.0568 & 0.1211 & 0.0138 & 50.9 \\ 
  Low Rank (200) & 0.00 &  & 0.2957 &  & 0.0471 & 0.0994 & 0.0104 & 51.1 \\ 
     NBE & 1.00 & 1.00 &0.0124 & 0.0002& - & - & - & 0.14\\

   \hline
\end{tabular}
\normalsize
\end{table}

\begin{table}[htbp]
\small
\centering
\caption{Simulation study results for the low-dimensional case with dependence $\alpha=0.7$, smoothness $\nu=1.5$, and range $\rho=0.032$. We report the coverage and average interval scores for $\alpha$ and $\rho$ as well as the tail-weighted CRPS (twCRPS).} 
\label{tab:sim-results-15}
\begin{tabular}{lrrrrrrrr}
   & \multicolumn{2}{c}{Coverage} & \multicolumn{2}{c}{Interval Score} & \multicolumn{3}{c}{twCRPS} & Walltime \\ 
\cline{2-3}\cline{4-5}\cline{6-8}
 & $\alpha$ & Range & $\alpha$ & Range & 1 & 2 & 3 & (min) \\ \hline
 Full GP & 0.95 & 0.93 & 0.0017 & 0.0001 & 0.0222 & 0.0507 & 0.0048 & 57.9 \\ 
  Vecchia (M=3) & 0.96 & 0.94 & 0.0018 & 0.0000 & 0.0247 & 0.0570 & 0.0054 & 48.3 \\ 
  Vecchia (M=5) & 0.95 & 0.96 & 0.0018 & 0.0000 & 0.0233 & 0.0533 & 0.0051 & 51.0 \\ 
  Vecchia (M=10) & 0.96 & 0.94 & 0.0016 & 0.0000 & 0.0225 & 0.0513 & 0.0049 & 58.9 \\ 
  Vecchia (M=15) & 0.95 & 0.95 & 0.0017 & 0.0000 & 0.0223 & 0.0509 & 0.0049 & 69.1 \\ 
  Taper (90\% sparse) & 0.00 & 0.00 & 0.2321 & 0.0249 & 0.0234 & 0.0547 & 0.0053 & 62.6 \\ 
  Taper (20\% sparse) & 0.10 & 0.00 & 0.0286 & 0.0048 & 0.0225 & 0.0513 & 0.0050 & 116.1 \\ 
  Low Rank (100) & 0.00 &  & 0.1516 &  & 0.0515 & 0.1128 & 0.0122 & 50.6 \\ 
  Low Rank (200) & 0.00 &  & 0.1671 &  & 0.0430 & 0.0933 & 0.0094 & 50.5 \\ 
     NBE & 0.97 & 0.98 &0.0090 & 0.0002& - & - & - & 0.14\\

   \hline
\end{tabular}
\normalsize
\end{table}

\begin{table}[htbp]
\small
\centering
\caption{Simulation study results for the low-dimensional case with dependence $\alpha=0.05$, smoothness $\nu=1.5$, and range $\rho=0.158$. We report the coverage and average interval scores for $\alpha$ and $\rho$ as well as the tail-weighted CRPS (twCRPS).} 
\label{tab:sim-results-16}
\begin{tabular}{lrrrrrrrr}
   & \multicolumn{2}{c}{Coverage} & \multicolumn{2}{c}{Interval Score} & \multicolumn{3}{c}{twCRPS} & Walltime \\ 
\cline{2-3}\cline{4-5}\cline{6-8}
 & $\alpha$ & Range & $\alpha$ & Range & 1 & 2 & 3 & (min) \\ \hline
 Full GP & 0.97 & 0.81 & 0.0205 & 0.0007 & 0.0035 & 0.0070 & 0.0008 & 52.1 \\ 
  Vecchia (M=3) & 0.01 & 0.10 & 0.0975 & 0.0015 & 0.0058 & 0.0117 & 0.0014 & 50.2 \\ 
  Vecchia (M=5) & 0.08 & 0.40 & 0.0535 & 0.0004 & 0.0044 & 0.0089 & 0.0010 & 53.6 \\ 
  Vecchia (M=10) & 0.85 & 0.94 & 0.0062 & 0.0001 & 0.0037 & 0.0075 & 0.0009 & 64.8 \\ 
  Vecchia (M=15) & 0.93 & 0.94 & 0.0042 & 0.0001 & 0.0036 & 0.0072 & 0.0008 & 85.6 \\ 
  Taper (90\% sparse) & 0.00 & 0.00 & 0.9490 & 0.3347 & 0.0098 & 0.0261 & 0.0022 & 52.9 \\ 
  Taper (20\% sparse) & 0.00 & 0.00 & 0.9486 & 0.1603 & 0.0060 & 0.0134 & 0.0014 & 102.6 \\ 
  Low Rank (100) & 0.00 &  & 0.7468 &  & 0.0104 & 0.0271 & 0.0023 & 50.8 \\ 
  Low Rank (200) & 0.00 &  & 0.7397 &  & 0.0074 & 0.0206 & 0.0016 & 50.9 \\ 
     NBE & 0.95 & 0.98 &0.0092 & 0.0010& - & - & - & 0.14\\

   \hline
\end{tabular}
\normalsize
\end{table}

\begin{table}[htbp]
\small
\centering
\caption{Simulation study results for the low-dimensional case with dependence $\alpha=0.3$, smoothness $\nu=1.5$, and range $\rho=0.158$. We report the coverage and average interval scores for $\alpha$ and $\rho$ as well as the tail-weighted CRPS (twCRPS).} 
\label{tab:sim-results-17}
\begin{tabular}{lrrrrrrrr}
   & \multicolumn{2}{c}{Coverage} & \multicolumn{2}{c}{Interval Score} & \multicolumn{3}{c}{twCRPS} & Walltime \\ 
\cline{2-3}\cline{4-5}\cline{6-8}
 & $\alpha$ & Range & $\alpha$ & Range & 1 & 2 & 3 & (min) \\ \hline
 Full GP & 0.94 & 0.94 & 0.0032 & 0.0002 & 0.0032 & 0.0065 & 0.0007 & 59.4 \\ 
  Vecchia (M=3) & 0.41 & 0.39 & 0.0224 & 0.0010 & 0.0053 & 0.0107 & 0.0012 & 50.5 \\ 
  Vecchia (M=5) & 0.54 & 0.56 & 0.0162 & 0.0006 & 0.0040 & 0.0082 & 0.0009 & 54.5 \\ 
  Vecchia (M=10) & 0.82 & 0.89 & 0.0058 & 0.0002 & 0.0034 & 0.0069 & 0.0008 & 69.7 \\ 
  Vecchia (M=15) & 0.85 & 0.93 & 0.0051 & 0.0002 & 0.0033 & 0.0066 & 0.0008 & 92.9 \\ 
  Taper (90\% sparse) & 0.00 & 0.00 & 0.6990 & 0.3346 & 0.0096 & 0.0253 & 0.0021 & 52.8 \\ 
  Taper (20\% sparse) & 0.00 & 0.00 & 0.6990 & 0.1990 & 0.0058 & 0.0129 & 0.0013 & 101.9 \\ 
  Low Rank (100) & 0.00 &  & 0.4888 &  & 0.0095 & 0.0261 & 0.0021 & 50.7 \\ 
  Low Rank (200) & 0.00 &  & 0.4791 &  & 0.0071 & 0.0229 & 0.0014 & 50.8 \\ 
       NBE & 1.00 & 0.99 &0.0113 & 0.0010& - & - & - & 0.14\\
   \hline
\end{tabular}
\normalsize
\end{table}
\begin{table}[htbp]
\small
\centering
\caption{Simulation study results for the low-dimensional case with dependence $\alpha=0.45$, smoothness $\nu=1.5$, and range $\rho=0.158$. We report the coverage and average interval scores for $\alpha$ and $\rho$ as well as the tail-weighted CRPS (twCRPS).} 
\label{tab:sim-results-18}
\begin{tabular}{lrrrrrrrr}
   & \multicolumn{2}{c}{Coverage} & \multicolumn{2}{c}{Interval Score} & \multicolumn{3}{c}{twCRPS} & Walltime \\ 
\cline{2-3}\cline{4-5}\cline{6-8}
 & $\alpha$ & Range & $\alpha$ & Range & 1 & 2 & 3 & (min) \\ \hline
 Full GP & 0.96 & 0.95 & 0.0042 & 0.0002 & 0.0030 & 0.0062 & 0.0007 & 59.3 \\ 
  Vecchia (M=3) & 0.47 & 0.39 & 0.0181 & 0.0009 & 0.0049 & 0.0102 & 0.0011 & 49.4 \\ 
  Vecchia (M=5) & 0.55 & 0.53 & 0.0169 & 0.0007 & 0.0038 & 0.0078 & 0.0009 & 53.3 \\ 
  Vecchia (M=10) & 0.73 & 0.87 & 0.0093 & 0.0003 & 0.0032 & 0.0066 & 0.0007 & 68.5 \\ 
  Vecchia (M=15) & 0.81 & 0.91 & 0.0063 & 0.0002 & 0.0031 & 0.0063 & 0.0007 & 92.0 \\ 
  Taper (90\% sparse) & 0.00 & 0.00 & 0.5490 & 0.3344 & 0.0094 & 0.0251 & 0.0020 & 52.7 \\ 
  Taper (20\% sparse) & 0.00 & 0.00 & 0.5490 & 0.2536 & 0.0056 & 0.0127 & 0.0012 & 97.2 \\ 
  Low Rank (100) & 0.00 &  & 0.3494 &  & 0.0094 & 0.0303 & 0.0020 & 51.0 \\ 
  Low Rank (200) & 0.00 &  & 0.3176 &  & 0.0074 & 0.0282 & 0.0014 & 51.1 \\ 
       NBE & 0.93 & 0.99 &0.0156 & 0.0011& - & - & - & 0.14\\

   \hline
\end{tabular}
\normalsize
\end{table}
\begin{table}[htbp]
\small
\centering
\caption{Simulation study results for the low-dimensional case with dependence $\alpha=0.55$, smoothness $\nu=1.5$, and range $\rho=0.158$. We report the coverage and average interval scores for $\alpha$ and $\rho$ as well as the tail-weighted CRPS (twCRPS).} 
\label{tab:sim-results-19}
\begin{tabular}{lrrrrrrrr}
   & \multicolumn{2}{c}{Coverage} & \multicolumn{2}{c}{Interval Score} & \multicolumn{3}{c}{twCRPS} & Walltime \\ 
\cline{2-3}\cline{4-5}\cline{6-8}
 & $\alpha$ & Range & $\alpha$ & Range & 1 & 2 & 3 & (min) \\ \hline
 Full GP & 0.94 & 0.94 & 0.0215 & 0.0010 & 0.0028 & 0.0059 & 0.0006 & 57.7 \\ 
  Vecchia (M=3) & 0.48 & 0.51 & 0.0263 & 0.0015 & 0.0046 & 0.0097 & 0.0010 & 48.8 \\ 
  Vecchia (M=5) & 0.60 & 0.60 & 0.0301 & 0.0014 & 0.0035 & 0.0074 & 0.0008 & 52.2 \\ 
  Vecchia (M=10) & 0.83 & 0.89 & 0.0160 & 0.0007 & 0.0030 & 0.0063 & 0.0007 & 68.6 \\ 
  Vecchia (M=15) & 0.87 & 0.92 & 0.0193 & 0.0008 & 0.0029 & 0.0060 & 0.0006 & 92.6 \\ 
  Taper (90\% sparse) & 0.00 & 0.00 & 0.4490 & 0.3344 & 0.0093 & 0.0249 & 0.0019 & 52.8 \\ 
  Taper (20\% sparse) & 0.00 & 0.00 & 0.4490 & 0.2942 & 0.0054 & 0.0124 & 0.0012 & 95.0 \\ 
  Low Rank (100) & 0.00 &  & 0.2487 &  & 0.0088 & 0.0296 & 0.0018 & 51.1 \\ 
  Low Rank (200) & 0.00 &  & 0.2146 &  & 0.0075 & 0.0314 & 0.0013 & 51.0 \\
       NBE & 1.00 & 0.90 &0.0170 & 0.0015& - & - & - & 0.14\\

   \hline
\end{tabular}
\normalsize
\end{table}
\begin{table}[htbp]
\small
\centering
\caption{Simulation study results for the low-dimensional case with dependence $\alpha=0.7$, smoothness $\nu=1.5$, and range $\rho=0.158$. We report the coverage and average interval scores for $\alpha$ and $\rho$ as well as the tail-weighted CRPS (twCRPS).} 
\label{tab:sim-results-20}
\begin{tabular}{lrrrrrrrr}
   & \multicolumn{2}{c}{Coverage} & \multicolumn{2}{c}{Interval Score} & \multicolumn{3}{c}{twCRPS} & Walltime \\ 
\cline{2-3}\cline{4-5}\cline{6-8}
 & $\alpha$ & Range & $\alpha$ & Range & 1 & 2 & 3 & (min) \\ \hline
 Full GP & 0.89 & 0.90 & 0.0207 & 0.0010 & 0.0025 & 0.0054 & 0.0006 & 57.2 \\ 
  Vecchia (M=3) & 0.46 & 0.42 & 0.0286 & 0.0016 & 0.0042 & 0.0090 & 0.0009 & 47.9 \\ 
  Vecchia (M=5) & 0.54 & 0.49 & 0.0279 & 0.0013 & 0.0032 & 0.0069 & 0.0007 & 51.8 \\ 
  Vecchia (M=10) & 0.80 & 0.87 & 0.0208 & 0.0009 & 0.0027 & 0.0058 & 0.0006 & 67.3 \\ 
  Vecchia (M=15) & 0.83 & 0.88 & 0.0240 & 0.0011 & 0.0026 & 0.0056 & 0.0006 & 90.8 \\ 
  Taper (90\% sparse) & 0.00 & 0.00 & 0.2990 & 0.3342 & 0.0091 & 0.0245 & 0.0018 & 52.6 \\ 
  Taper (20\% sparse) & 0.00 & 0.00 & 0.2990 & 0.3232 & 0.0052 & 0.0121 & 0.0011 & 94.6 \\ 
  Low Rank (100) & 0.00 &  & 0.1140 &  & 0.0083 & 0.0314 & 0.0016 & 50.6 \\ 
  Low Rank (200) & 0.01 &  & 0.0928 &  & 0.0078 & 0.0412 & 0.0012 & 50.7 \\ 
       NBE & 0.98 & 0.91 &0.0205 & 0.0019& - & - & - & 0.14\\

   \hline
\end{tabular}
\normalsize
\end{table}

\subsection{Figures}
Figures~\ref{fig:lowdim-1}--\ref{fig:lowdim-5} report performance comparisons among the Full GP, Vecchia approximations, covariance tapering, low-rank methods, and the NBE approach. The figures report the posterior medians of $\alpha$ across methods as well as the computational walltimes. 

\begin{figure}[htbp]
\centering

\lowcase{1}{0.05}{0.5}{0.05}\hfill
\lowcase{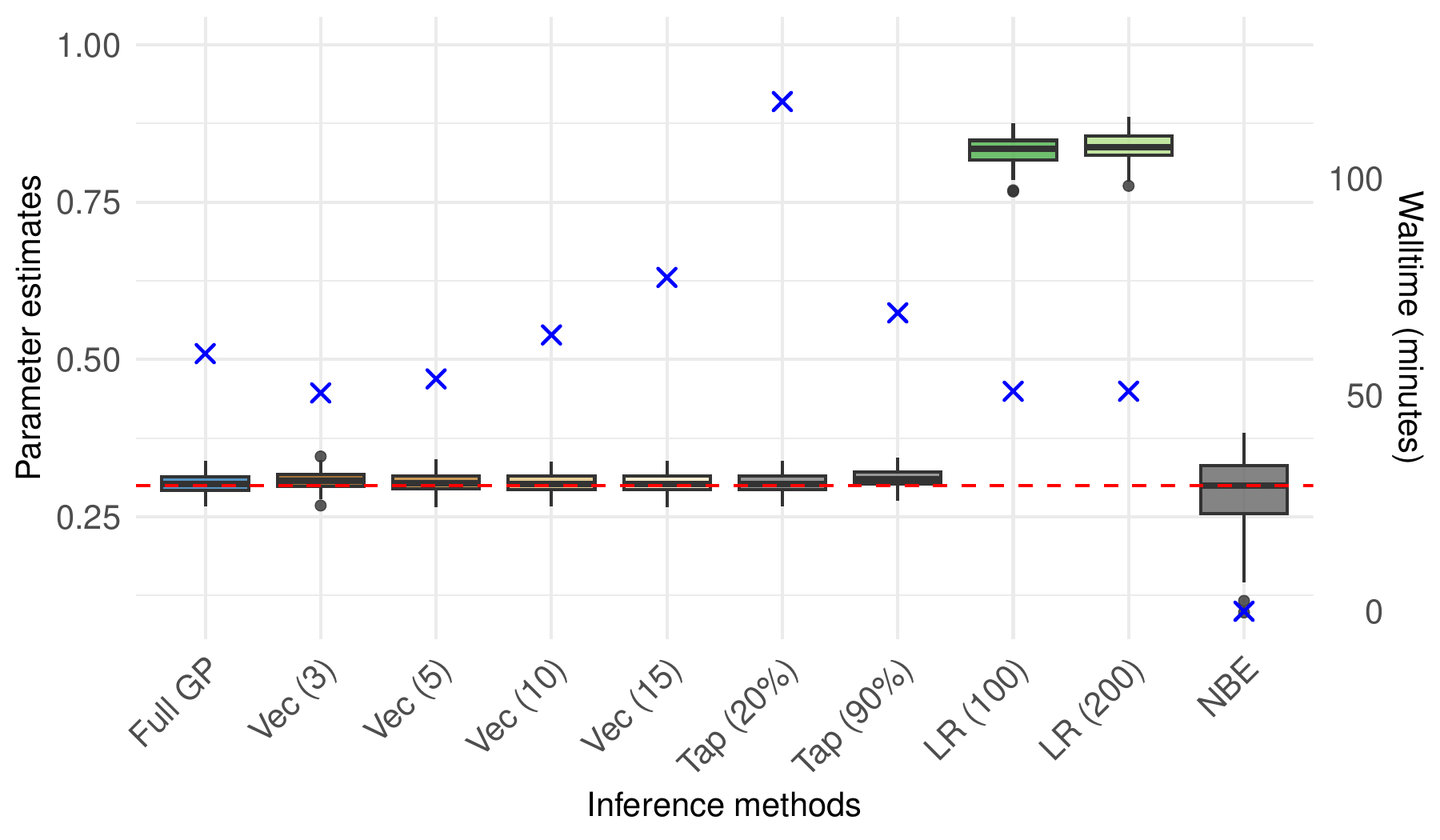}{0.25}{0.5}{0.05}

\vspace{0.35cm}

\lowcase{3}{0.45}{0.5}{0.05}\hfill
\lowcase{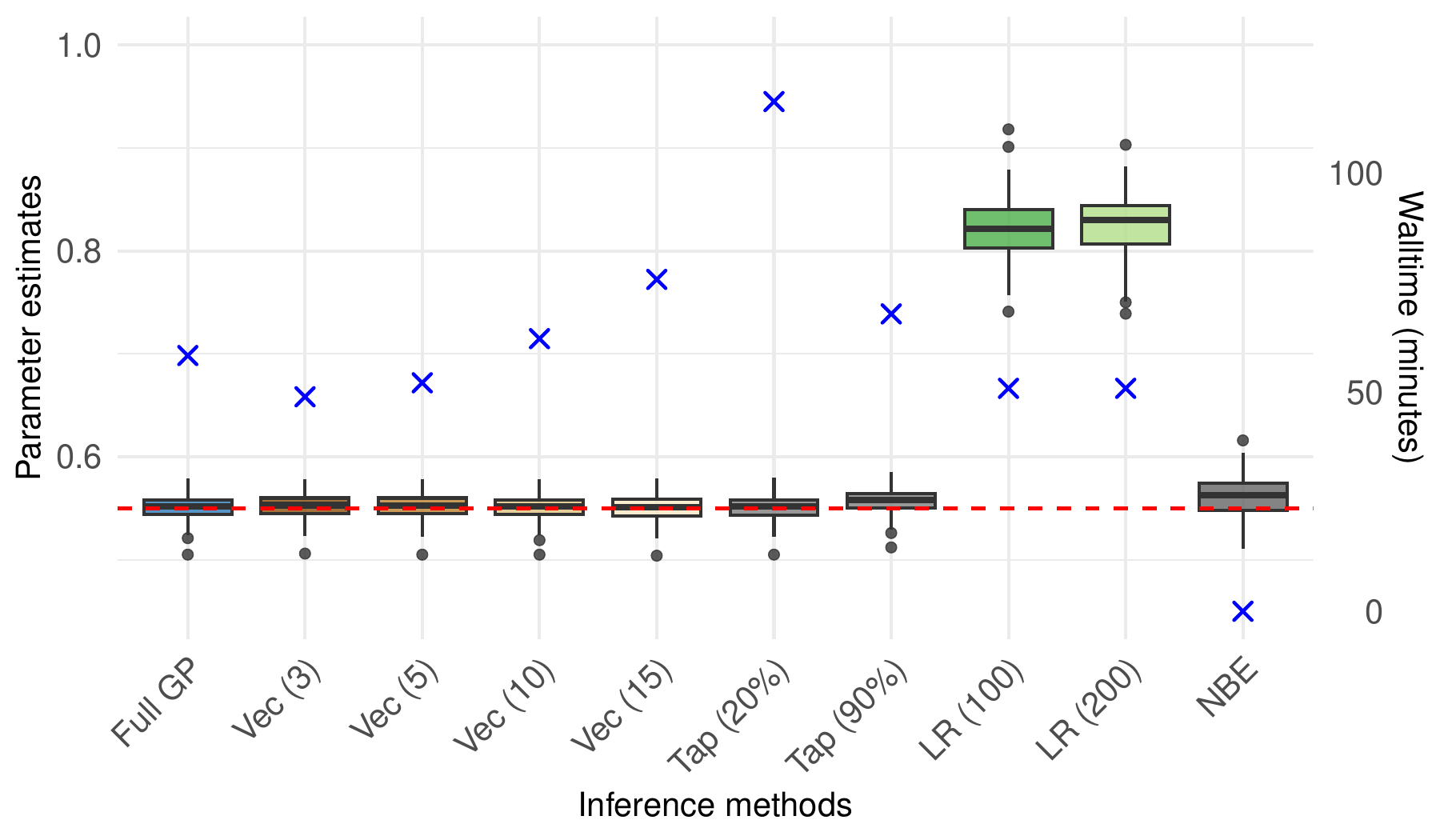}{0.70}{0.5}{0.05}

\caption{Simulation results for Cases 1--4.}
\label{fig:lowdim-1}
\end{figure}

\begin{figure}[htbp]
\centering

\lowcase{5}{0.05}{0.5}{0.10}\hfill
\lowcase{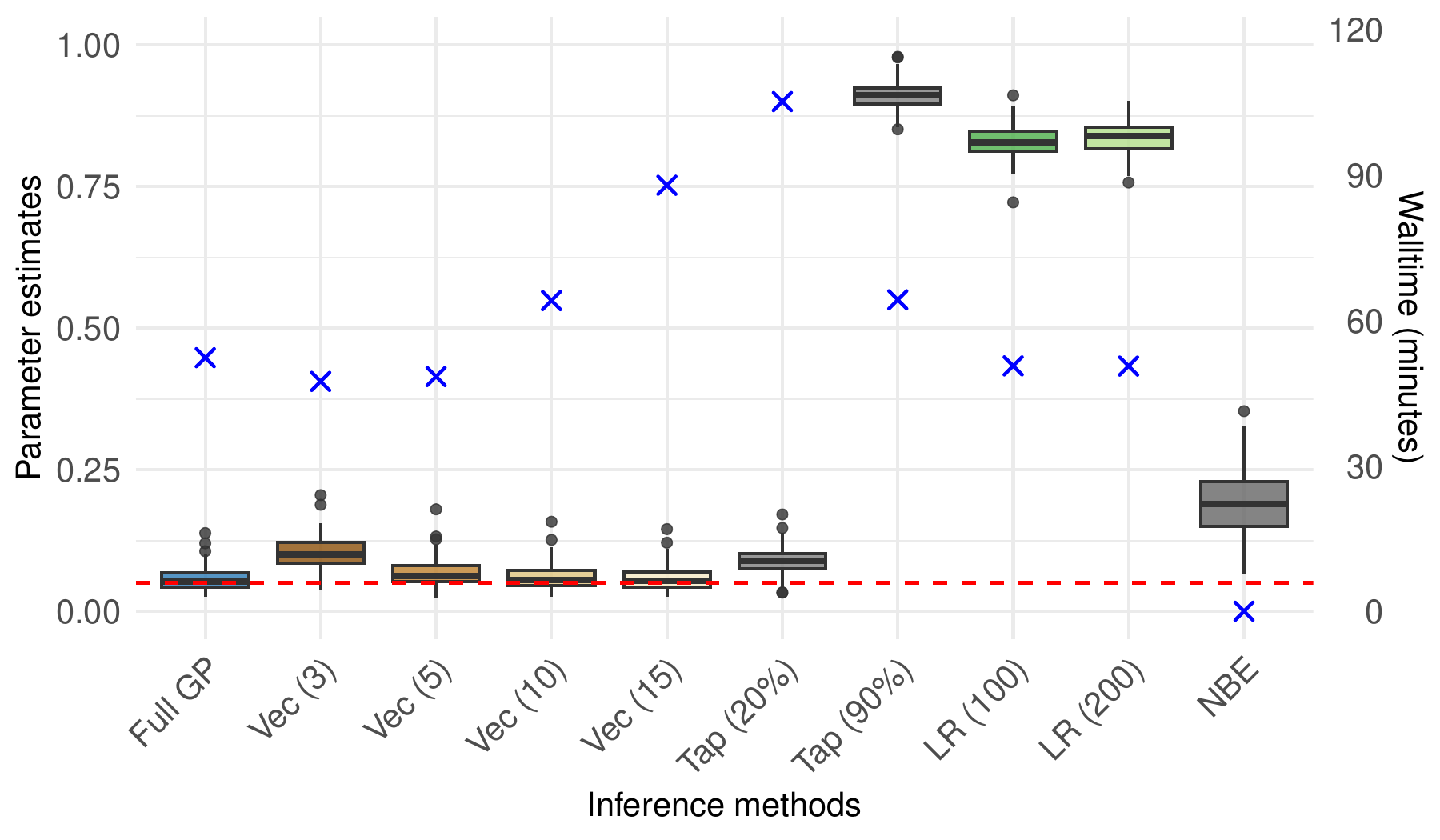}{0.25}{0.5}{0.10}

\vspace{0.35cm}

\lowcase{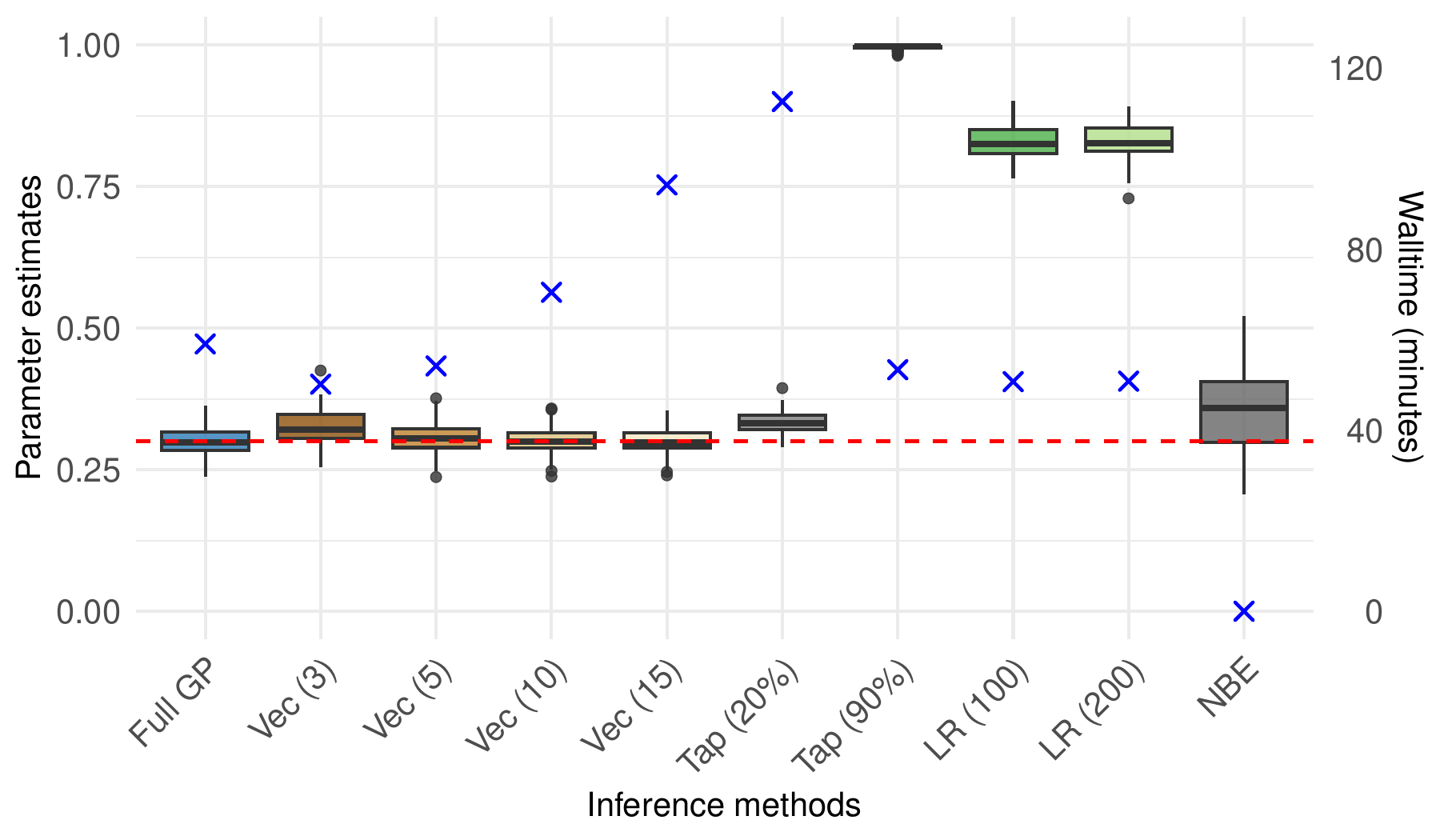}{0.45}{0.5}{0.10}\hfill
\lowcase{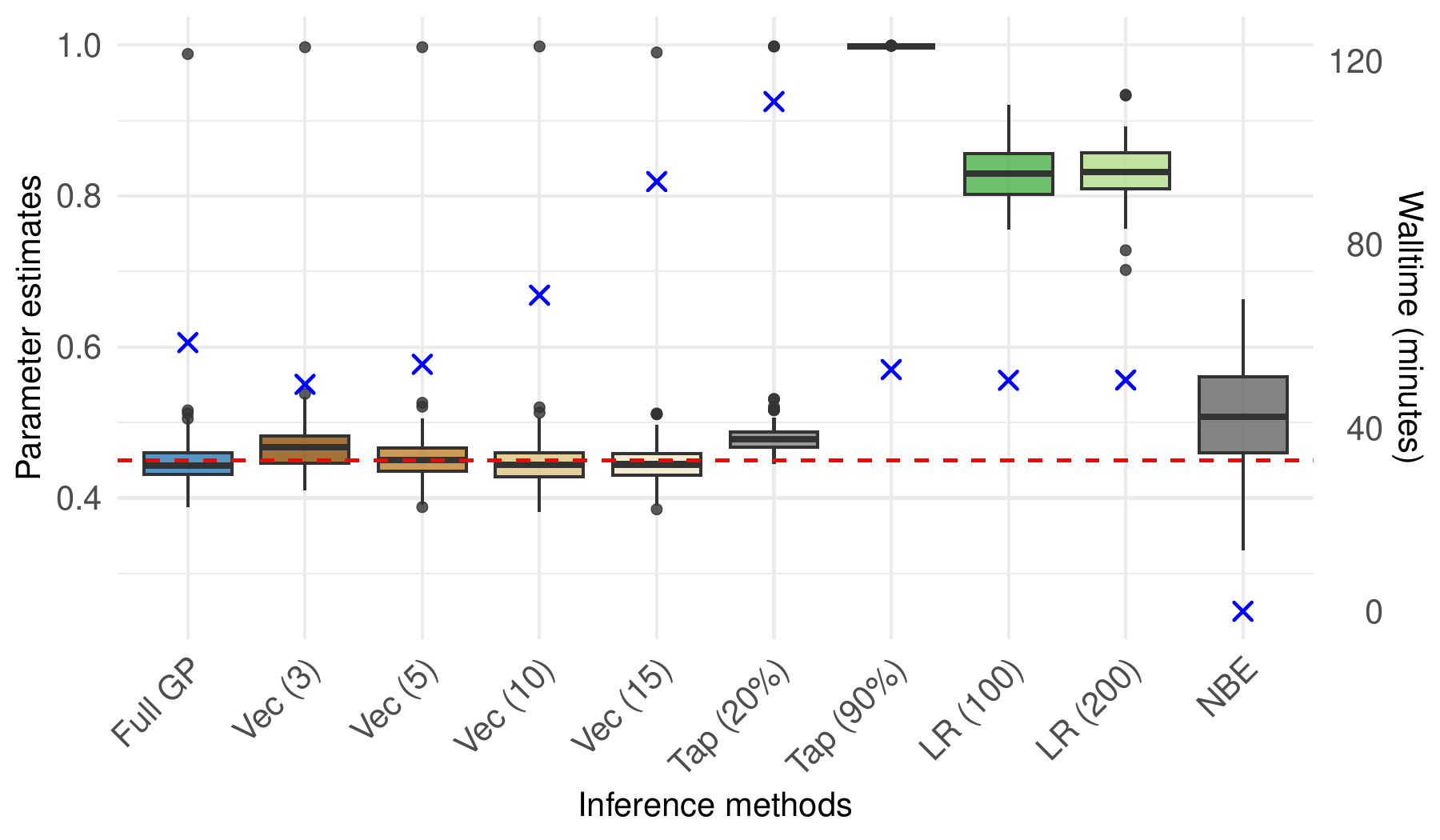}{0.70}{0.5}{0.10}

\caption{Simulation results for Cases 5--8.}
\label{fig:lowdim-2}
\vspace{0.25cm}
\centering

\lowcase{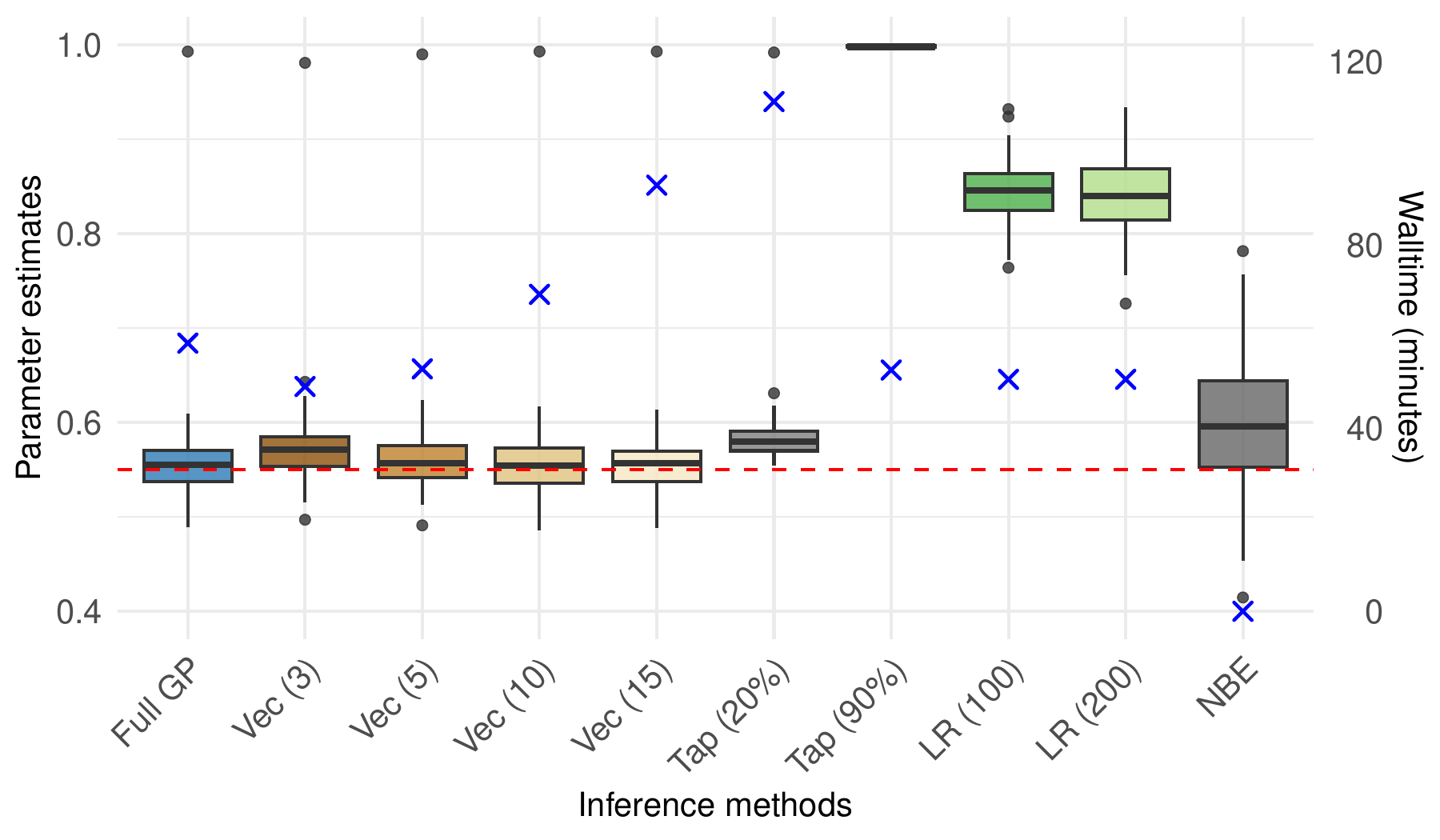}{0.05}{0.5}{0.20}\hfill
\lowcase{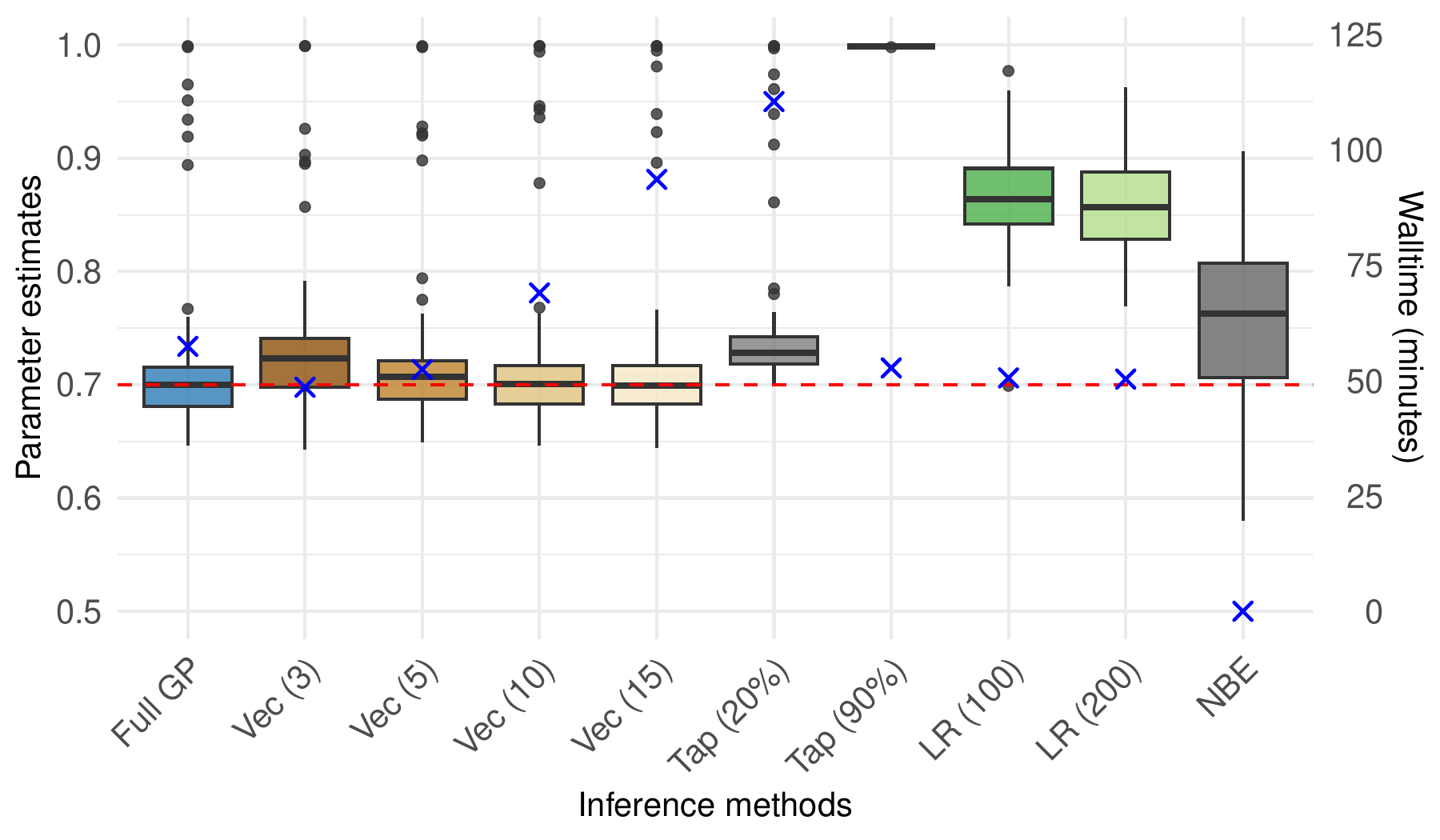}{0.25}{0.5}{0.20}

\vspace{0.35cm}

\lowcase{11}{0.45}{0.5}{0.20}\hfill
\lowcase{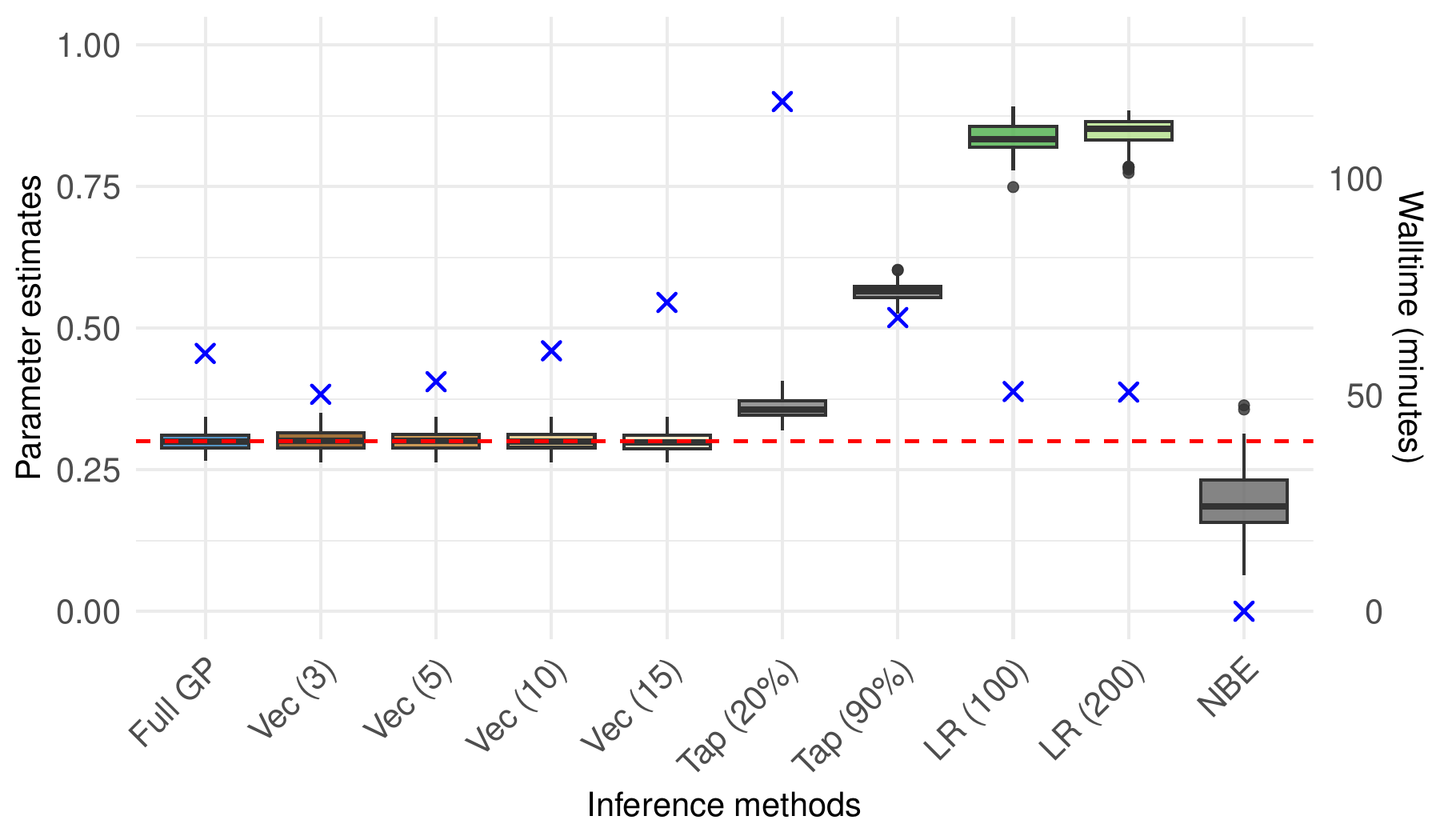}{0.70}{0.5}{0.20}

\caption{Simulation results for Cases 9--12.}
\label{fig:lowdim-3}
\end{figure}

\begin{figure}[htbp]
\centering

\lowcase{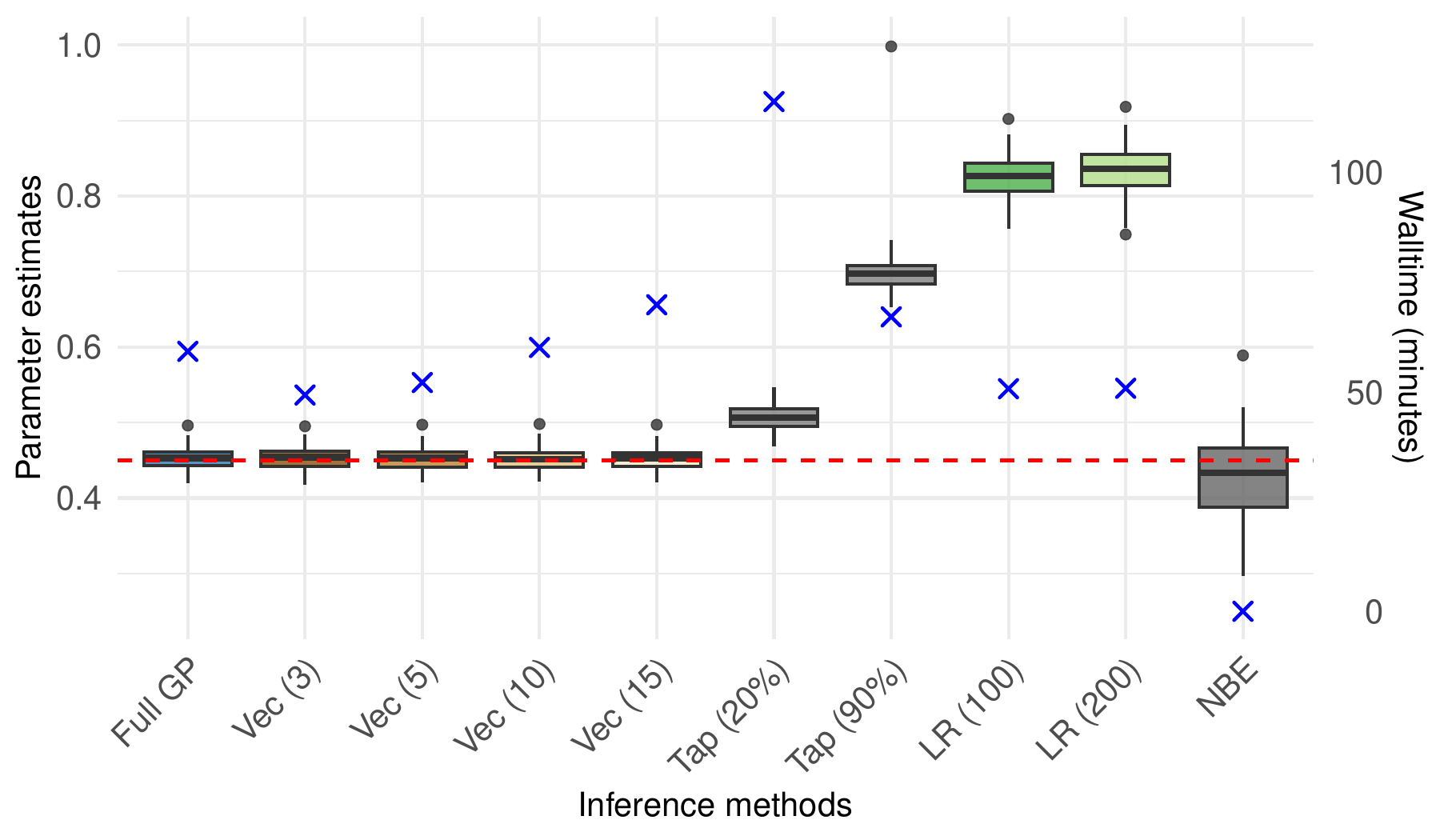}{0.05}{1.5}{0.032}\hfill
\lowcase{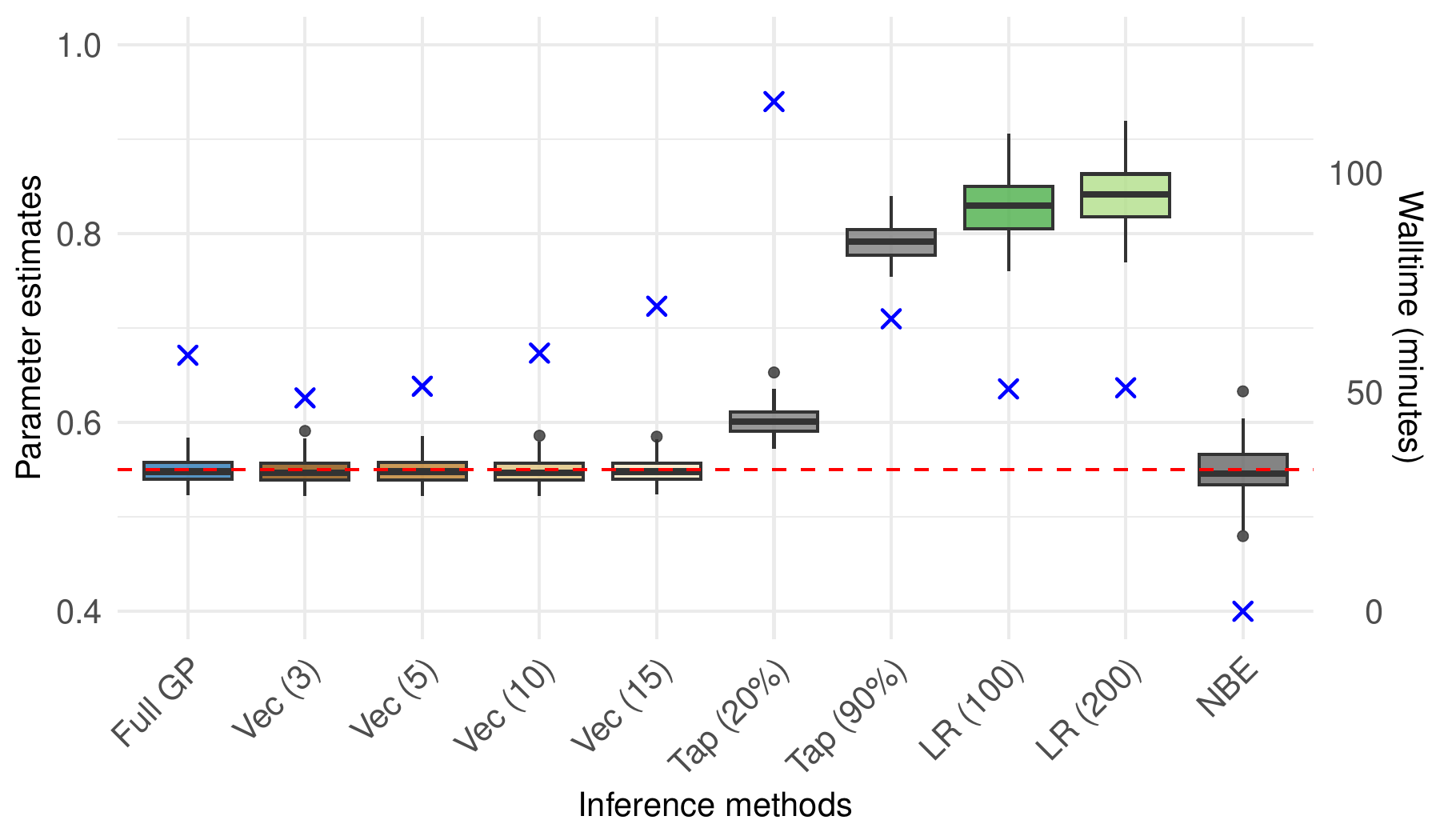}{0.25}{1.5}{0.032}

\vspace{0.35cm}

\lowcase{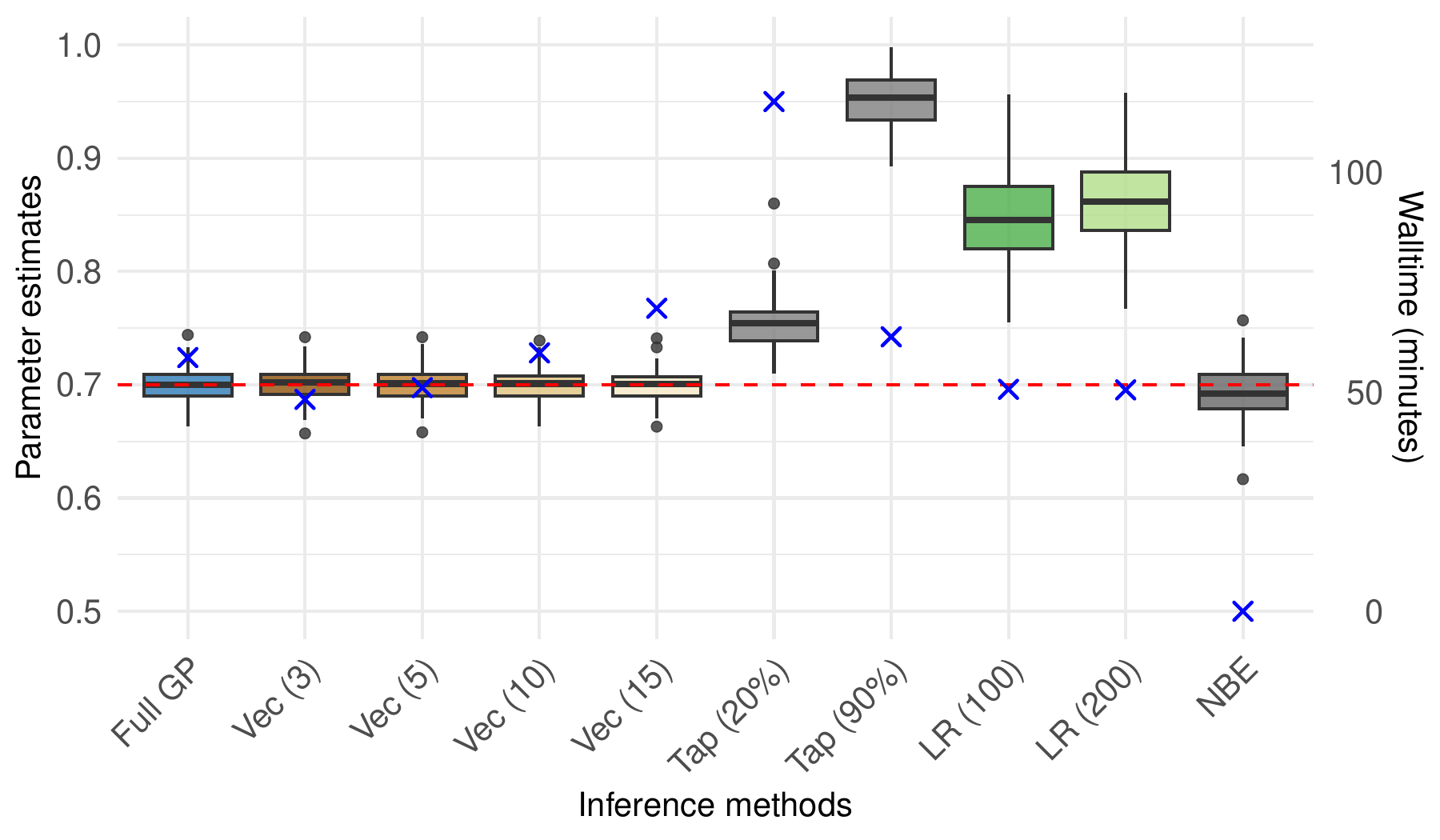}{0.45}{1.5}{0.032}\hfill
\lowcase{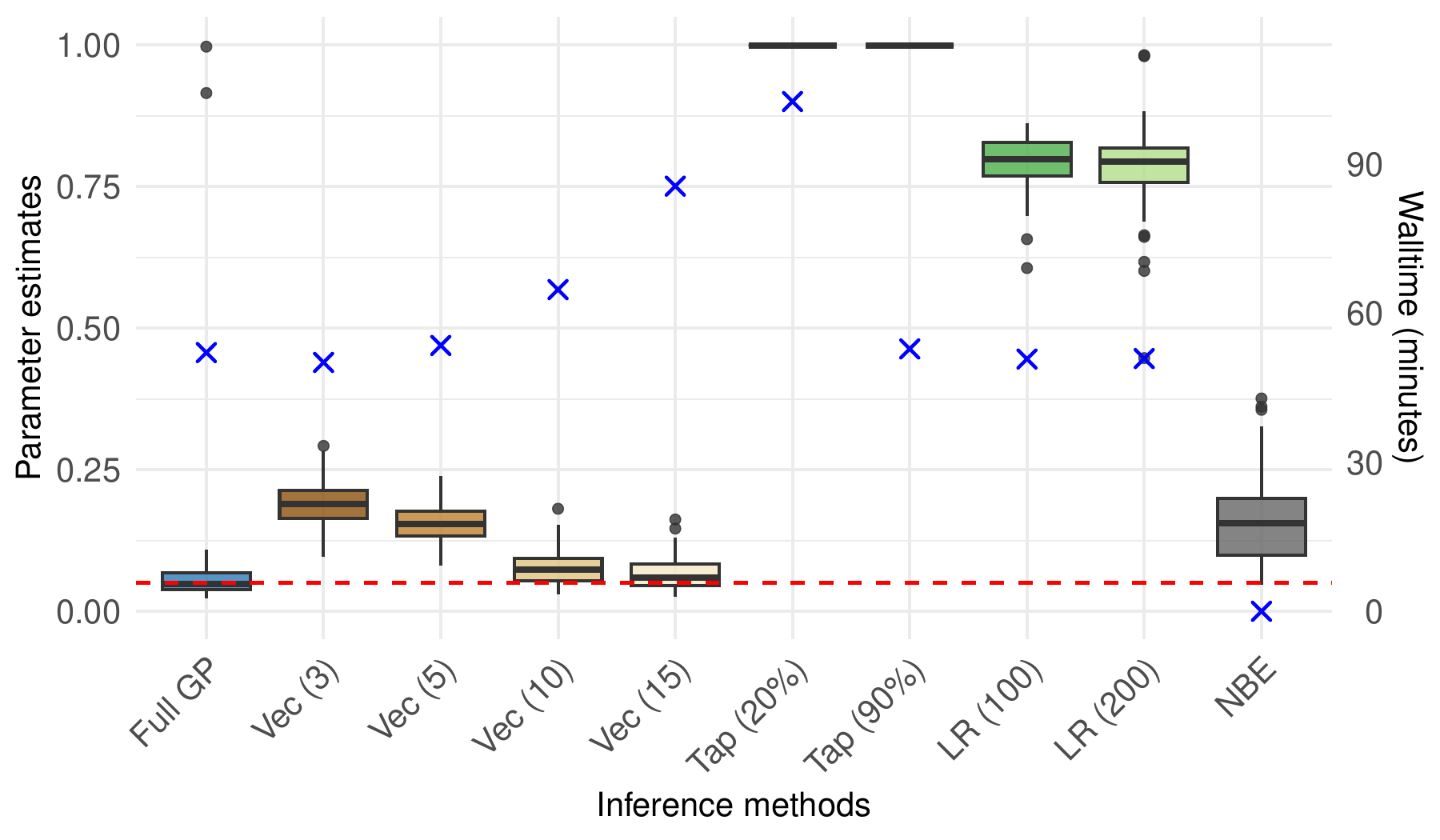}{0.70}{1.5}{0.032}

\caption{Simulation results for Cases 13--16.}
\label{fig:lowdim-4}
\vspace{0.25cm}
\centering

\lowcase{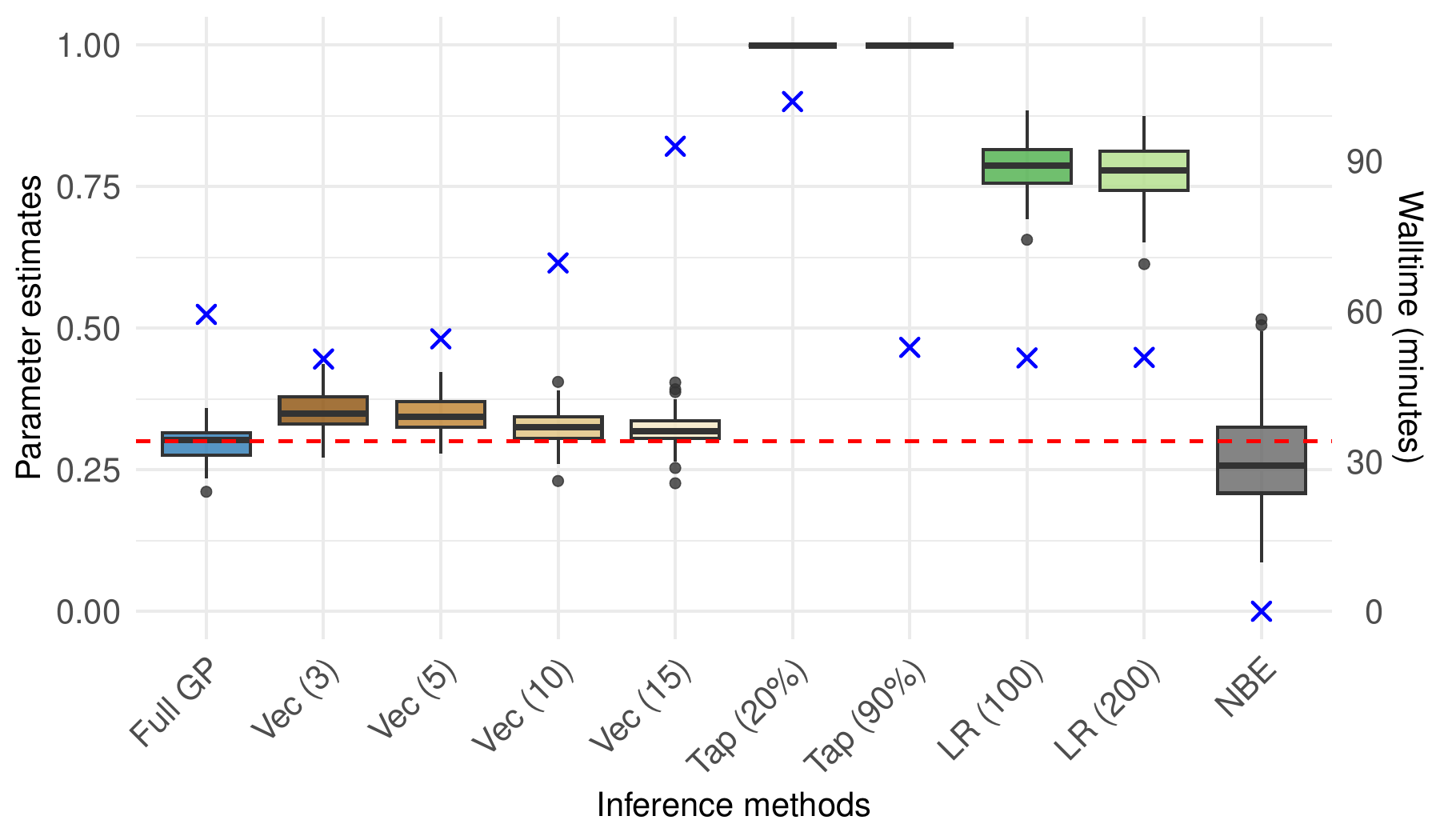}{0.05}{1.5}{0.064}\hfill
\lowcase{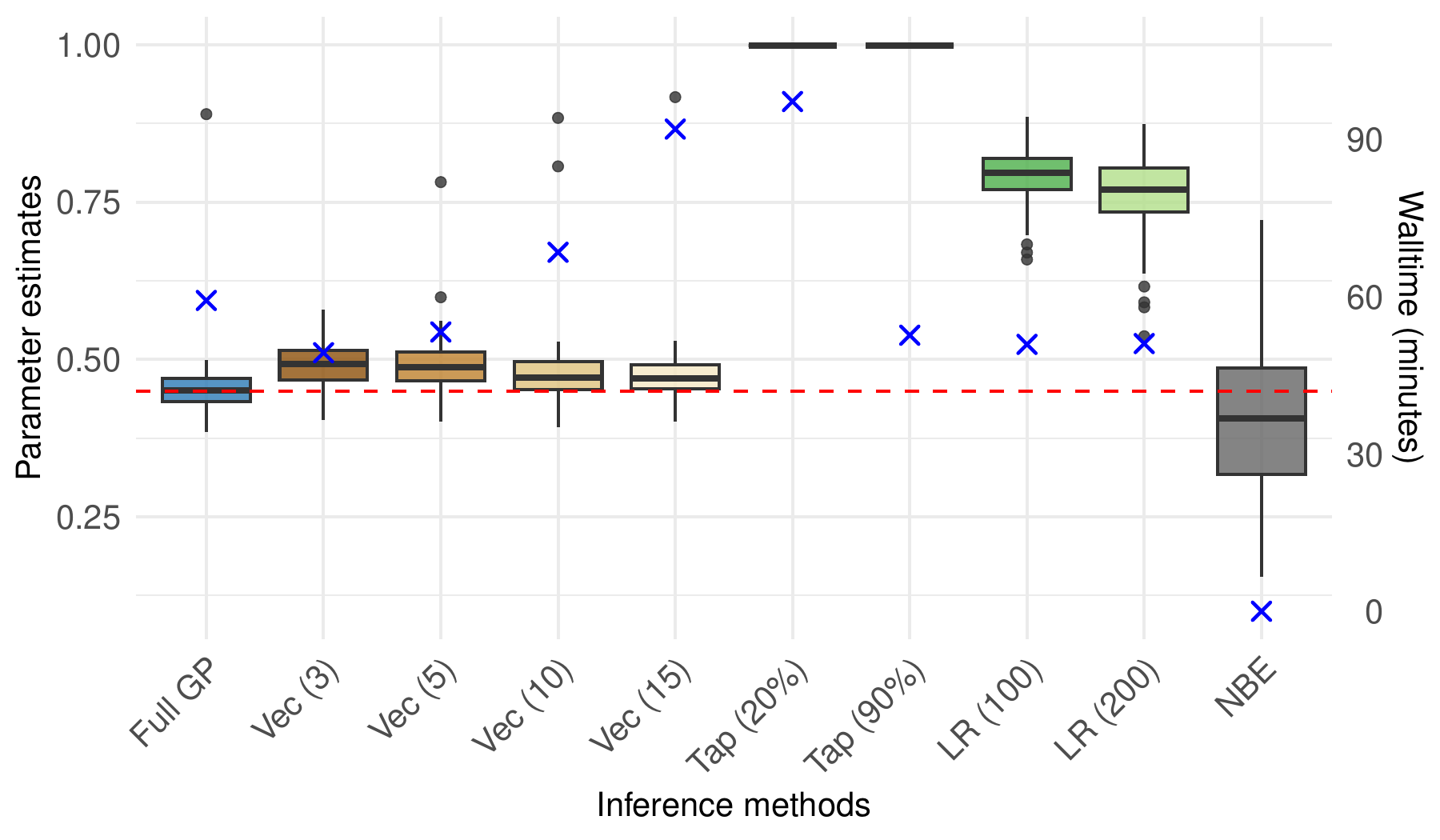}{0.25}{1.5}{0.064}

\vspace{0.35cm}

\lowcase{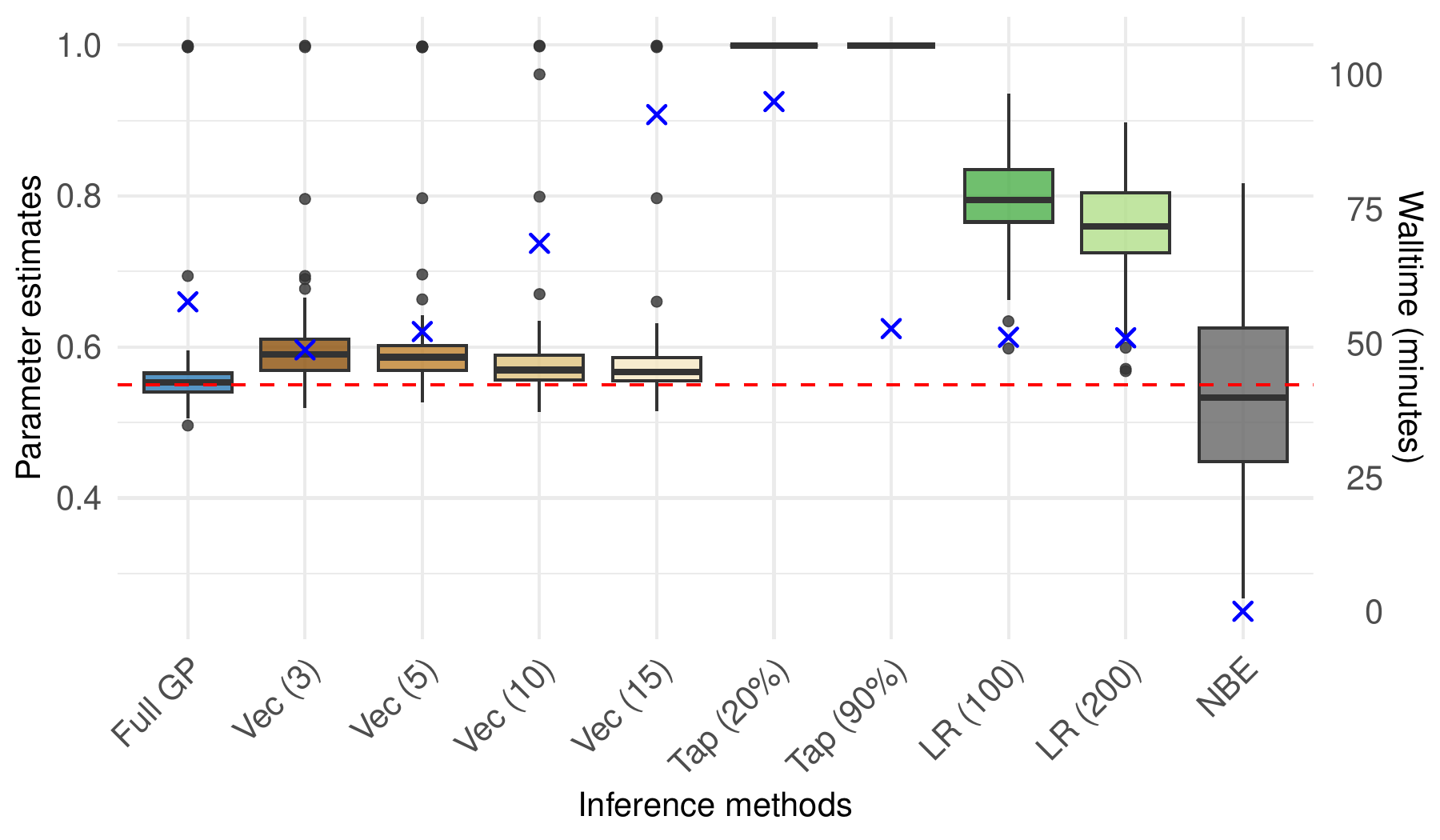}{0.45}{1.5}{0.064}\hfill
\lowcase{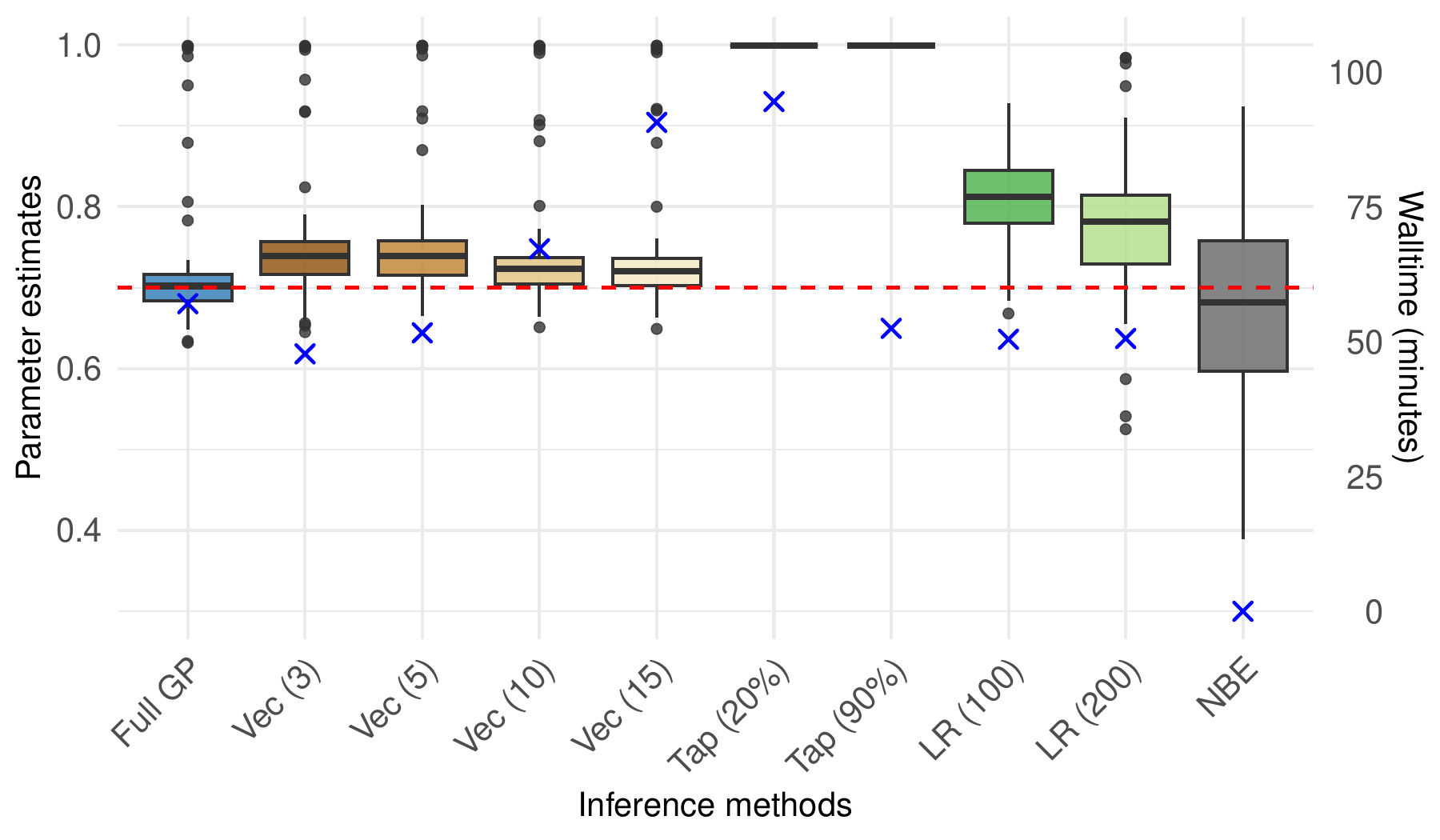}{0.70}{1.5}{0.064}

\caption{Simulation results for Cases 17--20.}
\label{fig:lowdim-5}
\end{figure}

\newpage
\section{High-dimensional Case: Simulation Study Results}
This section presents results from the high-dimensional simulation study, designed to evaluate model performance under large spatial domains and repeated temporal replicates. It supports Section \ref{s:sim_results} of the main text.

\subsection{Tables} \label{tab:highdim_results}
Tables~\ref{tab:sim-results-high1}--\ref{tab:sim-results-high8} summarize results across eight scenarios varying in dependence strength ($\alpha \in \{0.3,0.7\}$), spatial locations ($n \in \{1000, 5000\}$), and temporal replicates($T \in \{10,50\}$). We compare the Full GP (where computationally feasible), Vecchia approximations, covariance tapering, low-rank methods, and the NBE approach. For each scenario, we report empirical coverage and average interval scores for the dependence parameter $\alpha$ and range parameter $\rho$, together with tail-weighted continuous ranked probability scores (twCRPS) and computational walltime. 

\begin{table}[htbp]
\small
\centering
\caption{Simulation study results for the high-dimensional case, Scenario 1, with dependence $\alpha=0.3$, $n=5000$ spatial locations, and $T=50$ temporal replicates. We report the coverage and average interval scores for $\alpha$ and $\rho$ as well as the tail-weighted CRPS (twCRPS).} 
\label{tab:sim-results-high1}
\begin{tabular}{lrrrrrrrr}
   & \multicolumn{2}{c}{Coverage} & \multicolumn{2}{c}{Interval Score} & \multicolumn{3}{c}{twCRPS} & Walltime \\ 
\cline{2-3}\cline{4-5}\cline{6-8}
 & $\alpha$ & Range & $\alpha$ & Range & 1 & 2 & 3 & (min) \\ \hline
Full GP &  &  &  &  &  &  &  &  \\ 
  Vecchia (M=3) & 0.84 & 0.96 & 0.0101 & 0.0004 & 0.0238 & 0.0546 & 0.0052 & 312.9 \\ 
  Vecchia (M=5) & 0.91 & 0.97 & 0.0070 & 0.0003 & 0.0230 & 0.0525 & 0.0050 & 357.6 \\ 
  Vecchia (M=10) & 0.89 & 0.99 & 0.0039 & 0.0002 & 0.0226 & 0.0514 & 0.0049 & 531.6 \\ 
  Vecchia (M=15) & 0.92 & 0.98 & 0.0049 & 0.0002 & 0.0225 & 0.0513 & 0.0049 & 701.3 \\ 
  Taper (90\% sparse) &  &  &  &  &  &  &  &  \\ 
  Taper (20\% sparse) &  &  &  &  &  &  &  &  \\ 
  Low Rank (100) & 0.00 &  & 0.0385 &  & 0.0588 & 0.1275 & 0.0164 & 730.5 \\ 
  Low Rank (200) & 0.00 &  & 0.0471 &  & 0.0513 & 0.1108 & 0.0132 & 730.6 \\ 
   NBE & 0.69 & 0.93 &0.0104 & 0.0006& - & - & - & 0.14\\
   \hline
\end{tabular}
\normalsize
\end{table}

\begin{table}[htbp]
\small
\centering
\caption{Simulation study results for the high-dimensional case, Scenario 2, with dependence $\alpha=0.7$, $n=5000$ spatial locations, and $T=50$ temporal replicates. We report the coverage and average interval scores for $\alpha$ and $\rho$ as well as the tail-weighted CRPS (twCRPS).} 
\label{tab:sim-results-high2}
\begin{tabular}{lrrrrrrrr}
   & \multicolumn{2}{c}{Coverage} & \multicolumn{2}{c}{Interval Score} & \multicolumn{3}{c}{twCRPS} & Walltime \\ 
\cline{2-3}\cline{4-5}\cline{6-8}
 & $\alpha$ & Range & $\alpha$ & Range & 1 & 2 & 3 & (min) \\ \hline
  Full GP &  &  &  &  &  &  &  &  \\ 
  Vecchia (M=3) & 0.89 & 1.00 & 0.0013 & 0.0000 & 0.0304 & 0.0672 & 0.0070 & 315.4 \\ 
  Vecchia (M=5) & 0.94 & 0.99 & 0.0010 & 0.0000 & 0.0294 & 0.0645 & 0.0068 & 360.2 \\ 
  Vecchia (M=10) & 0.96 & 1.00 & 0.0010 & 0.0000 & 0.0288 & 0.0632 & 0.0066 & 534.3 \\ 
  Vecchia (M=15) & 0.94 & 1.00 & 0.0011 & 0.0000 & 0.0288 & 0.0630 & 0.0066 & 701.2 \\ 
  Taper (90\% sparse) &  &  &  &  &  &  &  &  \\ 
  Taper (20\% sparse) &  &  &  &  &  &  &  &  \\ 
  Low Rank (100) & 0.00 &  & 0.4367 &  & 0.0769 & 0.1570 & 0.0239 & 735.4 \\ 
  Low Rank (200) & 0.00 &  & 0.4441 &  & 0.0666 & 0.1359 & 0.0186 & 735.1 \\ 
  NBE & 0.99 & 0.84 &0.0077 & 0.0006& - & - & - & 0.14\\
   \hline
\end{tabular}
\normalsize
\end{table}

\begin{table}[htbp]
\small
\centering
\caption{Simulation study results for the high-dimensional case, Scenario 3, with dependence $\alpha=0.3$, $n=1000$ spatial locations, and $T=50$ temporal replicates. We report the coverage and average interval scores for $\alpha$ and $\rho$ as well as the tail-weighted CRPS (twCRPS).} 
\label{tab:sim-results-high3}
\begin{tabular}{lrrrrrrrr}
   & \multicolumn{2}{c}{Coverage} & \multicolumn{2}{c}{Interval Score} & \multicolumn{3}{c}{twCRPS} & Walltime \\ 
\cline{2-3}\cline{4-5}\cline{6-8}
 & $\alpha$ & Range & $\alpha$ & Range & 1 & 2 & 3 & (min) \\ \hline
 Full GP & 0.89 & 0.95 & 0.0064 & 0.0003 & 0.0331 & 0.0793 & 0.0071 & 129.8 \\ 
  Vecchia (M=3) & 0.88 & 0.96 & 0.0056 & 0.0002 & 0.0343 & 0.0827 & 0.0073 & 97.4 \\ 
  Vecchia (M=5) & 0.88 & 0.96 & 0.0055 & 0.0002 & 0.0335 & 0.0806 & 0.0071 & 105.6 \\ 
  Vecchia (M=10) & 0.93 & 0.98 & 0.0054 & 0.0002 & 0.0332 & 0.0795 & 0.0071 & 135.1 \\ 
  Vecchia (M=15) & 0.91 & 0.96 & 0.0055 & 0.0002 & 0.0331 & 0.0794 & 0.0071 & 171.9 \\ 
  Taper (90\% sparse) & 0.87 & 0.00 & 0.0087 & 0.0286 & 0.0331 & 0.0792 & 0.0070 & 132.3 \\ 
  Taper (20\% sparse) & 0.91 & 0.02 & 0.0064 & 0.0042 & 0.0331 & 0.0793 & 0.0071 & 355.3 \\ 
  Low Rank (100) & 0.00 &  & 0.1230 &  & 0.0614 & 0.1337 & 0.0163 & 147.7 \\ 
  Low Rank (200) & 0.00 &  & 0.1337 &  & 0.0561 & 0.1217 & 0.0135 & 149.8 \\ 
  NBE & 0.79 & 1.00 &0.0092 & 0.0005& - & - & - & 0.14\\
   \hline
\end{tabular}
\normalsize
\end{table}

\begin{table}[htbp]
\small
\centering
\caption{Simulation study results for the high-dimensional case, Scenario 4, with dependence $\alpha=0.7$, $n=1000$ spatial locations, and $T=50$ temporal replicates. We report the coverage and average interval scores for $\alpha$ and $\rho$ as well as the tail-weighted CRPS (twCRPS).} 
\label{tab:sim-results-high4}
\begin{tabular}{lrrrrrrrr}
   & \multicolumn{2}{c}{Coverage} & \multicolumn{2}{c}{Interval Score} & \multicolumn{3}{c}{twCRPS} & Walltime \\ 
\cline{2-3}\cline{4-5}\cline{6-8}
 & $\alpha$ & Range & $\alpha$ & Range & 1 & 2 & 3 & (min) \\ \hline
Full GP & 0.95 & 0.99 & 0.0016 & 0.0001 & 0.0425 & 0.0983 & 0.0096 & 131.6 \\ 
  Vecchia (M=3) & 0.92 & 0.97 & 0.0018 & 0.0001 & 0.0440 & 0.1026 & 0.0099 & 99.3 \\ 
  Vecchia (M=5) & 0.93 & 0.98 & 0.0017 & 0.0001 & 0.0430 & 0.0999 & 0.0097 & 107.7 \\ 
  Vecchia (M=10) & 0.94 & 0.99 & 0.0016 & 0.0001 & 0.0426 & 0.0986 & 0.0096 & 137.1 \\ 
  Vecchia (M=15) & 0.95 & 0.99 & 0.0016 & 0.0001 & 0.0425 & 0.0984 & 0.0096 & 174.8 \\ 
  Taper (90\% sparse) & 0.86 & 0.00 & 0.0021 & 0.0293 & 0.0425 & 0.0982 & 0.0095 & 134.4 \\ 
  Taper (20\% sparse) & 0.94 & 0.00 & 0.0017 & 0.0043 & 0.0425 & 0.0983 & 0.0096 & 357.6 \\ 
  Low Rank (100) & 0.00 &  & 0.5074 &  & 0.0804 & 0.1638 & 0.0235 & 148.6 \\ 
  Low Rank (200) & 0.00 &  & 0.5171 &  & 0.0728 & 0.1483 & 0.0186 & 150.1 \\ 
  NBE & 0.99 & 0.98 &0.0076 & 0.0004& - & - & - & 0.14\\
   \hline
\end{tabular}
\normalsize
\end{table}

\begin{table}[htbp]
\small
\centering
\caption{Simulation study results for the high-dimensional case, Scenario 5, with dependence $\alpha=0.3$, $n=5000$ spatial locations, and $T=10$ temporal replicates. We report the coverage and average interval scores for $\alpha$ and $\rho$ as well as the tail-weighted CRPS (twCRPS).} 
\label{tab:sim-results-high5}
\begin{tabular}{lrrrrrrrr}
   & \multicolumn{2}{c}{Coverage} & \multicolumn{2}{c}{Interval Score} & \multicolumn{3}{c}{twCRPS} & Walltime \\ 
\cline{2-3}\cline{4-5}\cline{6-8}
 & $\alpha$ & Range & $\alpha$ & Range & 1 & 2 & 3 & (min) \\ \hline
 Full GP &  &  &  &  &  &  &  &  \\ 
  Vecchia (M=3) & 0.89 & 0.96 & 0.0028 & 0.0001 & 0.0232 & 0.0547 & 0.0053 & 111.3 \\ 
  Vecchia (M=5) & 0.92 & 0.97 & 0.0026 & 0.0001 & 0.0224 & 0.0526 & 0.0051 & 147.9 \\ 
  Vecchia (M=10) & 0.95 & 0.98 & 0.0024 & 0.0001 & 0.0220 & 0.0515 & 0.0051 & 301.0 \\ 
  Vecchia (M=15) & 0.94 & 0.97 & 0.0053 & 0.0002 & 0.0220 & 0.0513 & 0.0050 & 532.7 \\ 
  Taper (90\% sparse) &  &  &  &  &  &  &  &  \\ 
  Taper (20\% sparse) &  &  &  &  &  &  &  &  \\ 
  Low Rank (100) & 0.00 &  & 0.1142 &  & 0.0644 & 0.2313 & 0.0560 & 292.0 \\ 
  Low Rank (200) & 0.00 &  & 0.0821 &  & 0.0561 & 0.1866 & 0.0292 & 296.5 \\ 
  NBE & 0.92 & 0.94 &0.0088 & 0.0009& - & - & - & 0.14\\
   \hline
\end{tabular}
\normalsize
\end{table}

\begin{table}[htbp]
\small
\centering
\caption{Simulation study results for the high-dimensional case, Scenario 6, with dependence $\alpha=0.7$, $n=5000$ spatial locations, and $T=10$ temporal replicates. We report the coverage and average interval scores for $\alpha$ and $\rho$ as well as the tail-weighted CRPS (twCRPS).} 
\label{tab:sim-results-high6}
\begin{tabular}{lrrrrrrrr}
   & \multicolumn{2}{c}{Coverage} & \multicolumn{2}{c}{Interval Score} & \multicolumn{3}{c}{twCRPS} & Walltime \\ 
\cline{2-3}\cline{4-5}\cline{6-8}
 & $\alpha$ & Range & $\alpha$ & Range & 1 & 2 & 3 & (min) \\ \hline
 Full GP &  &  &  &  &  &  &  &  \\ 
  Vecchia (M=3) & 0.87 & 0.96 & 0.0028 & 0.0001 & 0.0300 & 0.0673 & 0.0070 & 113.3 \\ 
  Vecchia (M=5) & 0.95 & 1.00 & 0.0022 & 0.0001 & 0.0290 & 0.0646 & 0.0068 & 149.3 \\ 
  Vecchia (M=10) & 0.95 & 1.00 & 0.0022 & 0.0001 & 0.0285 & 0.0633 & 0.0066 & 302.2 \\ 
  Vecchia (M=15) & 0.98 & 1.00 & 0.0021 & 0.0001 & 0.0284 & 0.0631 & 0.0066 & 533.9 \\ 
  Taper (90\% sparse) &  &  &  &  &  &  &  &  \\ 
  Taper (20\% sparse) &  &  &  &  &  &  &  &  \\ 
  Low Rank (100) & 0.00 &  & 0.4770 &  & 0.0824 & 0.2312 & 0.0462 & 292.8 \\ 
  Low Rank (200) & 0.00 &  & 0.4553 &  & 0.0729 & 0.2017 & 0.0380 & 295.1 \\ 
  NBE & 1.00 & 0.83 &0.0128 & 0.0016& - & - & - & 0.14\\
   \hline
\end{tabular}
\normalsize
\end{table}

\begin{table}[htbp]
\small
\centering
\caption{Simulation study results for the high-dimensional case, Scenario 7, with dependence $\alpha=0.3$, $n=1000$ spatial locations, and $T=10$ temporal replicates. We report the coverage and average interval scores for $\alpha$ and $\rho$ as well as the tail-weighted CRPS (twCRPS).} 
\label{tab:sim-results-high7}
\begin{tabular}{lrrrrrrrr}
   & \multicolumn{2}{c}{Coverage} & \multicolumn{2}{c}{Interval Score} & \multicolumn{3}{c}{twCRPS} & Walltime \\ 
\cline{2-3}\cline{4-5}\cline{6-8}
 & $\alpha$ & Range & $\alpha$ & Range & 1 & 2 & 3 & (min) \\ \hline
 Full GP & 0.90 & 0.97 & 0.0037 & 0.0002 & 0.0325 & 0.0794 & 0.0073 & 93.1 \\ 
  Vecchia (M=3) & 0.86 & 0.93 & 0.0041 & 0.0002 & 0.0336 & 0.0827 & 0.0075 & 48.9 \\ 
  Vecchia (M=5) & 0.89 & 0.98 & 0.0037 & 0.0002 & 0.0328 & 0.0806 & 0.0074 & 55.8 \\ 
  Vecchia (M=10) & 0.89 & 0.98 & 0.0038 & 0.0002 & 0.0326 & 0.0796 & 0.0073 & 81.7 \\ 
  Vecchia (M=15) & 0.90 & 0.98 & 0.0037 & 0.0002 & 0.0325 & 0.0794 & 0.0073 & 114.7 \\ 
  Taper (90\% sparse) & 0.86 & 0.00 & 0.0042 & 0.0257 & 0.0325 & 0.0793 & 0.0073 & 59.1 \\ 
  Taper (20\% sparse) & 0.89 & 0.18 & 0.0037 & 0.0025 & 0.0325 & 0.0794 & 0.0073 & 173.8 \\ 
  Low Rank (100) & 0.00 &  & 0.1033 &  & 0.0634 & 0.1935 & 0.0194 & 62.3 \\ 
  Low Rank (200) & 0.00 &  & 0.1104 &  & 0.0588 & 0.1786 & 0.0165 & 61.7 \\ 
  NBE & 0.92 & 0.99 &0.0097 & 0.0008& - & - & - & 0.14\\
   \hline
\end{tabular}
\normalsize
\end{table}

\begin{table}[htbp]
\small
\centering
\caption{Simulation study results for the high-dimensional case, Scenario 8, with dependence $\alpha=0.7$, $n=1000$ spatial locations, and $T=10$ temporal replicates. We report the coverage and average interval scores for $\alpha$ and $\rho$ as well as the tail-weighted CRPS (twCRPS).} 
\label{tab:sim-results-high8}
\begin{tabular}{lrrrrrrrr}
   & \multicolumn{2}{c}{Coverage} & \multicolumn{2}{c}{Interval Score} & \multicolumn{3}{c}{twCRPS} & Walltime \\ 
\cline{2-3}\cline{4-5}\cline{6-8}
 & $\alpha$ & Range & $\alpha$ & Range & 1 & 2 & 3 & (min) \\ \hline
 Full GP & 0.90 & 0.99 & 0.0044 & 0.0002 & 0.0420 & 0.0984 & 0.0096 & 94.9 \\ 
  Vecchia (M=3) & 0.87 & 0.96 & 0.0046 & 0.0002 & 0.0435 & 0.1026 & 0.0099 & 50.9 \\ 
  Vecchia (M=5) & 0.89 & 0.98 & 0.0042 & 0.0001 & 0.0426 & 0.0999 & 0.0097 & 58.0 \\ 
  Vecchia (M=10) & 0.90 & 0.98 & 0.0042 & 0.0002 & 0.0421 & 0.0986 & 0.0096 & 83.9 \\ 
  Vecchia (M=15) & 0.90 & 0.97 & 0.0044 & 0.0002 & 0.0421 & 0.0985 & 0.0096 & 117.5 \\ 
  Taper (90\% sparse) & 0.88 & 0.00 & 0.0045 & 0.0262 & 0.0420 & 0.0982 & 0.0096 & 60.6 \\ 
  Taper (20\% sparse) & 0.91 & 0.10 & 0.0043 & 0.0027 & 0.0420 & 0.0984 & 0.0096 & 175.6 \\ 
  Low Rank (100) & 0.00 &  & 0.4977 &  & 0.0830 & 0.2090 & 0.0279 & 62.5 \\ 
  Low Rank (200) & 0.00 &  & 0.5100 &  & 0.0760 & 0.1905 & 0.0226 & 61.8 \\ 
  NBE & 0.99 & 0.88 &0.0128 & 0.0012& - & - & - & 0.14\\
   \hline
\end{tabular}
\normalsize
\end{table}

\newpage
\subsection{Figures}
Figures~\ref{fig:highdim-1} and \ref{fig:highdim-2} report performance comparisons among the Full GP, Vecchia approximations, covariance tapering, low-rank methods, and the NBE approach. The figures report the posterior medians of $\alpha$ across methods as well as the computational walltimes.

\begin{figure}[!htbp]
\centering

\highcase{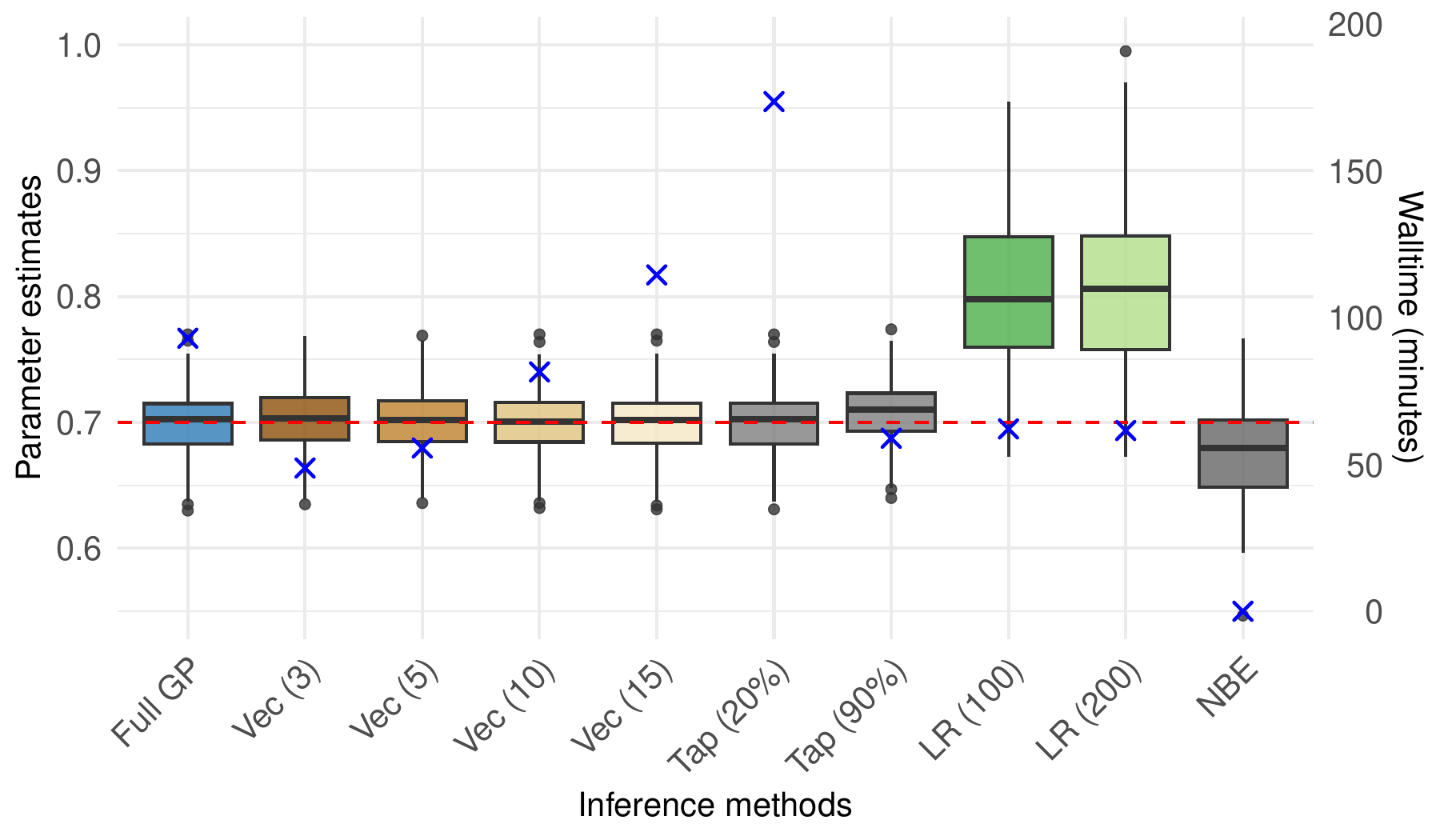}{0.3}{5000}{50}\hfill
\highcase{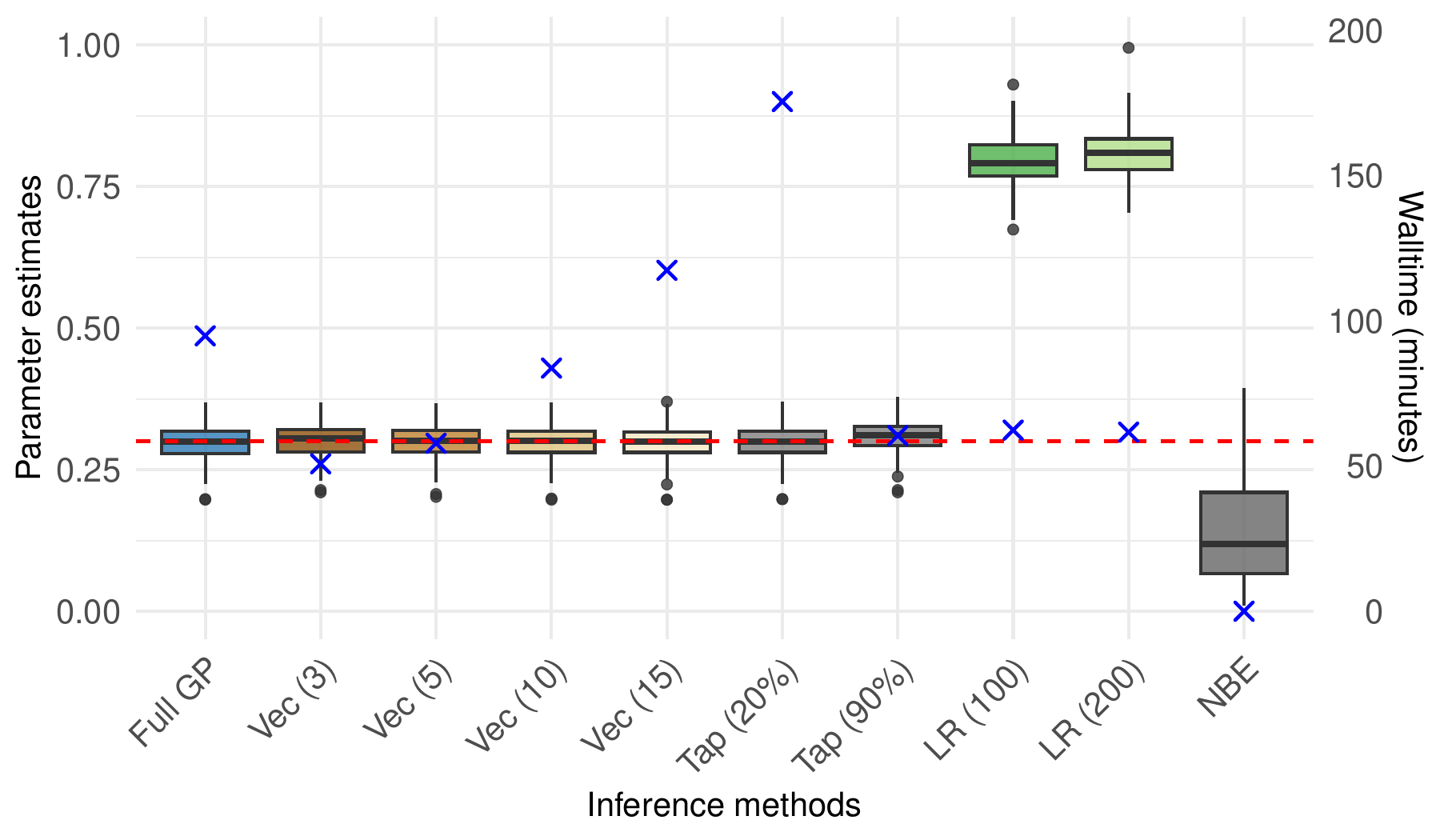}{0.7}{5000}{50}

\vspace{0.35cm}

\highcase{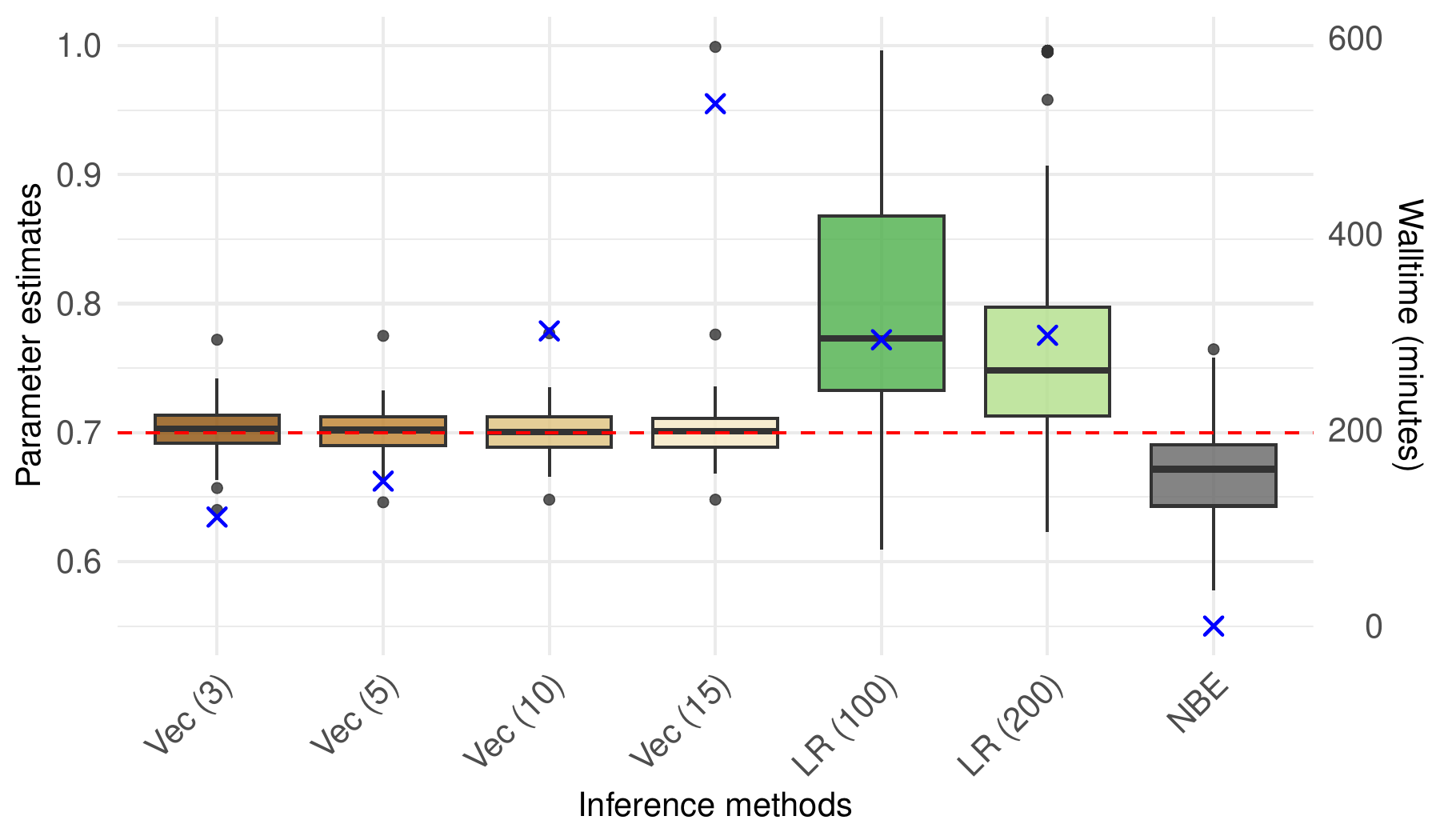}{0.3}{1000}{50}\hfill
\highcase{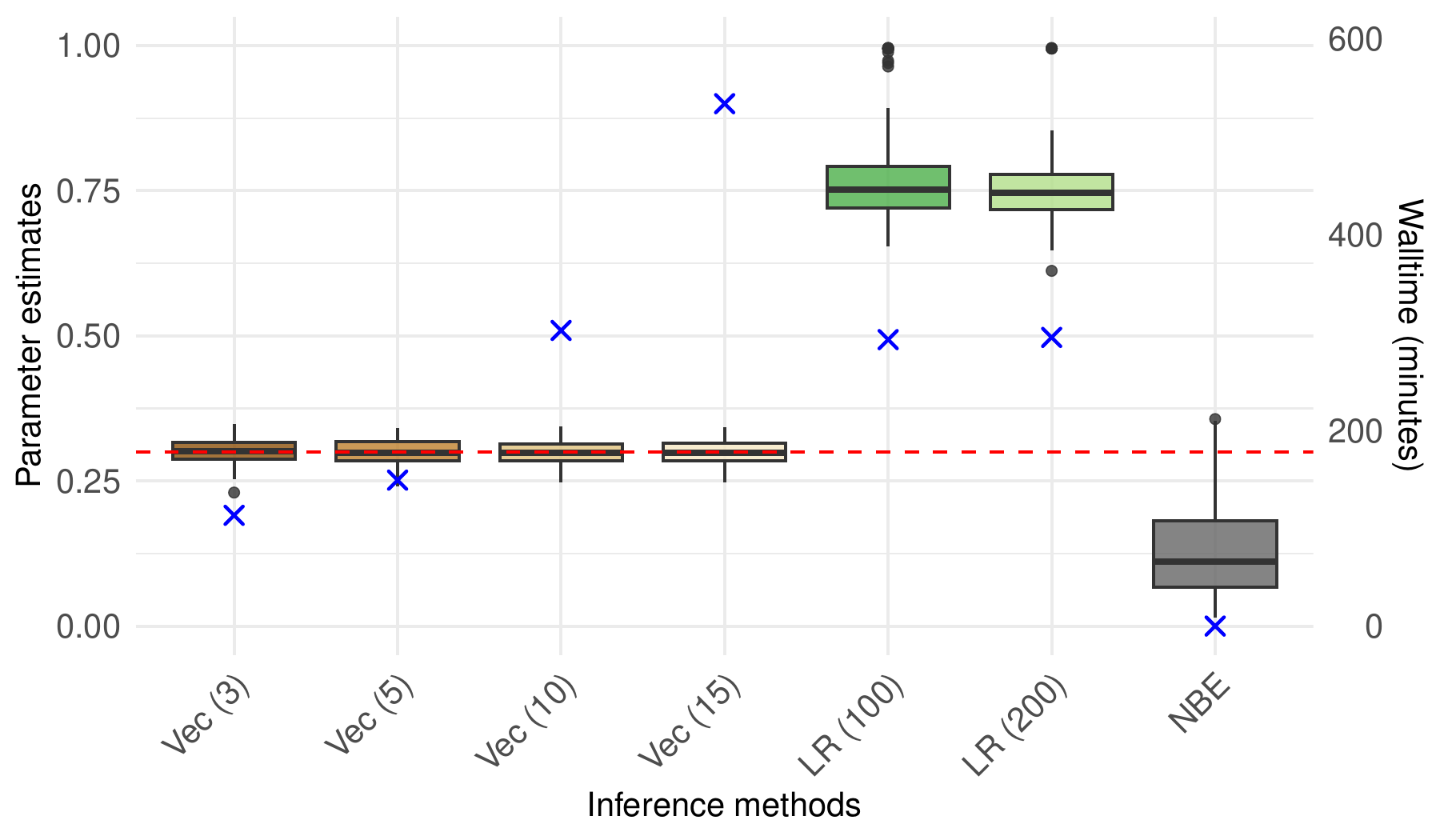}{0.7}{1000}{50}

\caption{High-dimensional simulation results for Cases 1--4.}
\label{fig:highdim-1}
\vspace{0.25cm}
\highcase{5}{0.3}{5000}{10}\hfill
\highcase{6}{0.7}{5000}{10}

\vspace{0.35cm}

\highcase{7}{0.3}{1000}{10}\hfill
\highcase{8}{0.7}{1000}{10}

\caption{High-dimensional simulation results for Cases 5--8.}
\label{fig:highdim-2}
\end{figure}

\newpage
\section{XGBoost details}\label{sec:XGBoost}

We use an XGBoost model to predict air temperature $T_{2m}(\bs)$ from surface skin temperature $T_{skin}(\bs)$ and aligned environmental covariates extracted at the timing of seasonal extremes. This approach captures nonlinear relationships while preserving the physical characteristics associated with extreme events.

\paragraph{Covariate acquisition}
We download covariates (e.g., wind speed, NDVI, and other NLDAS-derived fields) using Google Earth Engine. A key implementation detail is temporal alignment: for each location and year, we identify the time (and hour) at which the \emph{seasonal surface skin temperature extreme} occurs, and we extract all covariates at that same timestamp. This ensures the feature vector represents the physical state associated with the surface-temperature extreme rather than a daily or monthly average.

\subsection{Feature construction and training dataset}
Let $\bs$ denote a spatial location (grid cell center), and let $t$ index time. For each sampled event time $t^\star(\bs)$ (the time of the seasonal surface skin temperature extreme at $\bs$), we construct a feature vector
\begin{small}
    \begin{equation}
\begin{split}
    \mathbf{x}(\bs,t^\star) =
&\left(\text{lon}(\bs),\ \text{lat}(\bs),\ \text{elev}(\bs),\ \text{year}(t^\star),\ \sin\!\big(2\pi\,\text{doy}(t^\star)/365.25\big),\right.\\
&\left.\cos\!\big(2\pi\,\text{doy}(t^\star)/365.25\big),\ T_{skin}(\bs,t^\star),\ \text{wind}(\bs,t^\star),\ \text{NDVI}(\bs,t^\star)\right)^\top,
\end{split}
\end{equation}
\end{small}
in which $\text{doy}(t^\star)$ denotes the day of year at time $t^\star$. The response is the coincident 2-m air temperature
\begin{equation}
y(\bs,t^\star) = T_{2m}(\bs,t^\star).
\end{equation}
Because our training data are defined on NLDAS grid-cell centers while validation is conducted at off-grid GHCN stations, we implemented several practical steps to ensure that the covariates used for prediction are commensurate with the training inputs: (1) In training, elevation is treated as a grid-cell attribute. For each NLDAS grid cell, we therefore computed an \emph{areal-averaged} elevation by generating a dense $0.0125^\circ$ subgrid centered at the NLDAS location (5 points to the west/east and 5 points to the south/north; $11\times 11$ points total), querying elevations on this dense grid (with \texttt{elevatr}), and averaging them to obtain a representative cell-mean elevation. This mitigates sensitivity to local topographic roughness within a $\sim$12\,km NLDAS cell. (2) The four GHCN validation sites are airport stations, where the local land cover is predominantly paved with minimal vegetation. To reflect this micro-environment, we set $\text{NDVI}=0$ at these station locations when constructing the feature vectors for validation. This prevents spurious vegetation cooling effects from being introduced via nearby non-airport pixels and better matches the physical expectation that evapotranspiration is suppressed over impervious surfaces.

\paragraph{XGBoost model and training procedure}
We fit a gradient-boosted regression tree model using XGBoost with squared-error loss:
\begin{equation}
\widehat{f} = \arg\min_{f \in \mathcal{F}} \sum_{i=1}^n \big(y_i - f(\mathbf{x}_i)\big)^2 + \Omega(f),
\end{equation}
where $\Omega(\cdot)$ denotes the standard XGBoost complexity penalty on the ensemble of trees. The model naturally captures nonlinearities and interactions among $T_{skin}$, vegetation, wind, seasonality, and geography that govern the $T_{skin}$--$T_{2m}$ gradient.

To avoid temporal leakage, we split the dataset by time index rather than random sampling: early times are used for training, an intermediate block for validation (early stopping), and the most recent block for testing. This provides a conservative assessment of generalization under nonstationarity.

We use standard regularization controls (tree depth, minimum child weight, subsampling, and column subsampling) and early stopping on validation RMSE/MAE. Because physically, $T_{2m}$ should increase with surface temperature conditional on the other covariates, we optionally impose a monotonicity constraint on the $T_{skin}$ feature. Importantly, XGBoost produces a piecewise-constant function in each covariate: sensitivity to $T_{skin}$ appears as a set of threshold effects (``stair-step'' response) rather than a smooth slope. Feature importance (gain/cover/frequency) therefore reflects global loss reduction attributable to $T_{skin}$ splits, not a local derivative at a specific point.

\section{Case Study: Surface Temperature Analysis — Additional Results}
We provide additional results from the surface skin temperature case study described in Section~\ref{s:application} of the main text. Table~\ref{tab:data_results_alpha2} reports posterior mean estimates and 95\% credible intervals for $\nu=1.5$, while Tables~\ref{tab:data_results_rho1} and \ref{tab:data_results_rho2} summarize posterior estimates of the spatial range parameter $\rho$ for $\nu=0.5$ and $\nu=1.5$, respectively. Tail-weighted CRPS results are presented in Tables~\ref{tab:twcrps_nu1} and \ref{tab:twcrps_nu2}, and corresponding walltimes are reported in Tables~\ref{tab:walltime_nu1} and \ref{tab:walltime_nu2} for the two values of $\nu$. We also report the posterior standard deviations for the metrics reported in Table~\ref{fig:posterior_summary_2000} of the main text.  

\begin{table}[ht]
\small
\centering
\resizebox{\textwidth}{!}{\begin{tabular}{lllll}
  \hline
Method & Spring & Summer & Fall & Winter \\ 
  \hline
Full GP & 0.708 (0.689,0.726) & 0.324 (0.305,0.34) & 0.514 (0.497,0.532) & 0.9 (0.882,0.918) \\ 
  Low Rank - 100 & 0.785 (0.785,0.785) & 0.854 (0.854,0.854) & 0.813 (0.813,0.813) & 0.763 (0.762,0.762) \\ 
  Low Rank - 500 & 0.788 (0.788,0.788) & 0.839 (0.839,0.839) & 0.81 (0.809,0.809) & 0.808 (0.808,0.808) \\ 
  Taper - 90\% & 0.462 (0.446,0.478) & 0.288 (0.269,0.305) & 0.323 (0.303,0.341) & 0.381 (0.361,0.401) \\ 
  Taper - 20\% & 0.665 (0.646,0.685) & 0.314 (0.296,0.331) & 0.475 (0.456,0.495) & 0.878 (0.861,0.899) \\ 
  Vecchia - M=3 & 0.756 (0.737,0.776) & 0.292 (0.272,0.311) & 0.552 (0.533,0.572) & 0.947 (0.923,0.973) \\ 
  Vecchia - M=5 & 0.74 (0.72,0.759) & 0.32 (0.301,0.338) & 0.538 (0.519,0.556) & 0.934 (0.913,0.959) \\ 
  Vecchia - M=10 & 0.687 (0.669,0.706) & 0.32 (0.302,0.337) & 0.497 (0.476,0.514) & 0.886 (0.869,0.906) \\ 
  Vecchia - M=20 & 0.708 (0.691,0.727) & 0.323 (0.306,0.341) & 0.514 (0.496,0.532) & 0.899 (0.882,0.918) \\ 
     Neural Bayes&0.71 (0.416, 0.867)  & 0.31 (0.045, 0.339)  & 0.63 (0.266, 0.771) & 0.75 (0.530, 0.911) \\
   \hline
\end{tabular}}
\caption{Posterior mean estimates and 95\% credible intervals for the $\alpha$ parameter under the Mat\'ern covariance function with smoothness parameter $\nu = 1.5$, for the NLDAS surface skin temperature application.}
\label{tab:data_results_alpha2}
\normalsize
\end{table}

\begin{table}[ht]
\small
\centering
\resizebox{\textwidth}{!}{\begin{tabular}{lllll}
  \hline
Method & Spring & Summer & Fall & Winter \\ 
  \hline
Full GP & 0.097 (0.095,0.099) & 0.052 (0.051,0.053) & 0.092 (0.09,0.094) & 0.078 (0.077,0.08) \\ 
  Low Rank - 100 & - & - & - & - \\ 
  Low Rank - 500 & - & - & - & - \\ 
  Taper - 90\% & 0.308 (0.288,0.329) & 0.086 (0.083,0.088) & 0.289 (0.27,0.307) & 0.161 (0.152,0.169) \\ 
  Taper - 20\% & 0.12 (0.117,0.123) & 0.059 (0.057,0.06) & 0.113 (0.11,0.116) & 0.092 (0.09,0.095) \\
  Vecchia - M=3 & 0.099 (0.097,0.101) & 0.054 (0.053,0.055) & 0.096 (0.094,0.098) & 0.079 (0.077,0.081) \\ 
  Vecchia - M=5 & 0.097 (0.095,0.099) & 0.053 (0.052,0.054) & 0.093 (0.091,0.095) & 0.078 (0.075,0.08) \\ 
  Vecchia - M=10 & 0.096 (0.094,0.098) & 0.052 (0.051,0.053) & 0.091 (0.09,0.093) & 0.077 (0.075,0.079) \\ 
  Vecchia - M=20 & 0.096 (0.094,0.098) & 0.052 (0.051,0.053) & 0.092 (0.09,0.094) & 0.078 (0.076,0.08) \\ 
       Neural Bayes&0.069 (0.079, 0.10)  & 0.048 (0.046, 0.072)  & 0.067 (0.074, 0.097) & 0.065 (0.077, 0.104) \\ 
   \hline
\end{tabular}}
\caption{Posterior mean estimates and 95\% credible intervals for the spatial range parameter $\rho$  under the Mat\'ern covariance function with smoothness parameter $\nu = 0.5$, for the NLDAS surface skin temperature application.}
\label{tab:data_results_rho1}
\normalsize
\vspace{1cm}
\small
\centering
\resizebox{\textwidth}{!}{\begin{tabular}{lllll}
  \hline
Method & Spring & Summer & Fall & Winter \\ 
  \hline
Full GP & 0.022 (0.022,0.023) & 0.017 (0.017,0.017) & 0.023 (0.022,0.023) & 0.02 (0.02,0.02) \\ 
  Low Rank - 100 & - & - & - & - \\ 
  Low Rank - 500 & - & - & - & - \\ 
  Taper - 90\% & 0.12 (0.11,0.13) & 0.024 (0.023,0.024) & 0.073 (0.068,0.078) & 0.19 (0.162,0.227) \\ 
  Taper - 20\% & 0.026 (0.026,0.026) & 0.018 (0.018,0.018) & 0.026 (0.026,0.026) & 0.022 (0.022,0.022) \\ 
  Vecchia - M=3 & 0.025 (0.024,0.025) & 0.018 (0.018,0.019) & 0.025 (0.024,0.025) & 0.022 (0.022,0.022) \\ 
  Vecchia - M=5 & 0.023 (0.022,0.023) & 0.017 (0.017,0.017) & 0.023 (0.023,0.023) & 0.02 (0.02,0.02) \\ 
  Vecchia - M=10 & 0.023 (0.022,0.023) & 0.017 (0.017,0.017) & 0.023 (0.023,0.023) & 0.02 (0.02,0.02) \\ 
  Vecchia - M=20 & 0.022 (0.022,0.023) & 0.017 (0.017,0.017) & 0.023 (0.022,0.023) & 0.02 (0.02,0.02) \\ 
       Neural Bayes&0.028 (0.027, 0.036)  & 0.023 (0.020, 0.033)  & 0.025 (0.026, 0.035) & 0.029 (0.028, 0.036) \\ 
   \hline
\end{tabular}}
\caption{Posterior mean estimates and 95\% credible intervals for the spatial range parameter $\rho$  under the Mat\'ern covariance function with smoothness parameter $\nu = 1.5$, for the NLDAS surface skin temperature application.}
\normalsize
\label{tab:data_results_rho2}
\end{table}

\begin{table}[!htbp]
\vspace{0.5cm}
\small
\centering
\begin{tabular}{lllll}
  \hline
Method & Spring & Summer & Fall & Winter \\ 
  \hline
Full GP & 0.0341 & 0.0564 & 0.0373 & 0.0301 \\ 
  Low Rank - 100 & 0.0563 & 0.0890 & 0.0618 & 0.0499 \\ 
  Low Rank - 500 & 0.0509 & 0.0906 & 0.0568 & 0.0454 \\ 
  Taper - 90\% & 0.0341 & 0.0564 & 0.0373 & 0.0301 \\ 
  Taper - 20\% & 0.0342 & 0.0565 & 0.0374 & 0.0303 \\ 
  Vecchia - M=3 & 0.0352 & 0.0576 & 0.0384 & 0.0311 \\ 
  Vecchia - M=5 & 0.0342 & 0.0562 & 0.0373 & 0.0303 \\ 
  Vecchia - M=10 & 0.0342 & 0.0564 & 0.0375 & 0.0303 \\ 
  Vecchia - M=20 & 0.0341 & 0.0564 & 0.0374 & 0.0301 \\ 
       Neural Bayes& &   & & \\ 
   \hline
\end{tabular}
\caption{Tail-weighted CRPS for the NLDAS surface skin temperature application using the Mat\'ern covariance with smoothness $\nu = 0.5$.}
\normalsize
\label{tab:twcrps_nu1}
\end{table}

\begin{table}[!htbp]
\small
\centering
\begin{tabular}{lllll}
  \hline
Method & Spring & Summer & Fall & Winter \\ 
  \hline
Full GP & 0.0345 & 0.0577 & 0.0379 & 0.0308 \\ 
  Low Rank - 100 & 0.0564 & 0.0895 & 0.0621 & 0.0494 \\ 
  Low Rank - 500 & 0.0462 & 0.0826 & 0.0517 & 0.0406 \\ 
  Taper - 90\% & 0.0341 & 0.0574 & 0.0374 & 0.0304 \\ 
  Taper - 20\% & 0.0344 & 0.0576 & 0.0378 & 0.0307 \\ 
  Vecchia - M=3 & 0.0358 & 0.0582 & 0.0395 & 0.0313 \\ 
  Vecchia - M=5 & 0.0351 & 0.0564 & 0.0382 & 0.0307 \\ 
  Vecchia - M=10 & 0.0346 & 0.0576 & 0.0379 & 0.0309 \\ 
  Vecchia - M=20 & 0.0345 & 0.0577 & 0.0378 & 0.0309 \\ 
       Neural Bayes& &   & & \\ 
   \hline
\end{tabular}
\caption{Tail-weighted CRPS for the NLDAS surface skin temperature application using the Mat\'ern covariance with smoothness $\nu = 1.5$.}
\normalsize
\label{tab:twcrps_nu2}
\end{table}

\begin{table}[!htbp]
\small
\centering
\begin{tabular}{lrrrr}
  \hline
Time & Spring & Summer & Fall & Winter \\ 
  \hline
Full GP & 136.12 & 138.20 & 137.52 & 131.84 \\ 
  Low Rank - 100 & 79.71 & 84.18 & 84.40 & 92.48 \\ 
  Low Rank - 500 & 82.57 & 84.93 & 85.59 & 89.86 \\ 
  Taper - 90\% & 140.47 & 139.08 & 139.10 & 138.47 \\ 
  Taper - 20\% & 366.89 & 362.20 & 361.73 & 359.20 \\ 
  Vecchia - M=3 & 57.05 & 57.94 & 57.42 & 53.40 \\ 
  Vecchia - M=5 & 64.06 & 66.12 & 65.15 & 59.60 \\ 
  Vecchia - M=10 & 96.28 & 94.71 & 97.27 & 89.20 \\ 
  Vecchia - M=20 & 238.62 & 181.00 & 207.23 & 190.15 \\ 
          Neural Bayes (training) &\multicolumn{4}{c}{$2\times 4320$} \\ 
          Neural Bayes (posterior median) &0.0003 & 0.0003  & 0.0003&0.0003 \\ 
          Neural Bayes (credible interval) &0.0006 & 0.0006  & 0.0006&0.0006 \\ 
   \hline
\end{tabular}
\caption{Model-fitting runtimes (in minutes) for the NLDAS surface skin temperature application using the Mat\'ern covariance with smoothness $\nu = 0.5$.}
\normalsize
\label{tab:walltime_nu1}
\end{table}

\begin{table}[!htbp]
\small
\centering
\begin{tabular}{lrrrr}
  \hline
Time & Spring & Summer & Fall & Winter \\ 
  \hline
Full GP & 139.90 & 141.80 & 139.72 & 139.97 \\ 
  Low Rank - 100 & 91.85 & 82.79 & 83.42 & 82.09 \\ 
  Low Rank - 500 & 83.76 & 83.44 & 84.01 & 81.58 \\ 
  Taper - 90\% & 137.27 & 139.16 & 139.14 & 138.11 \\ 
  Taper - 20\% & 357.34 & 361.93 & 360.72 & 357.18 \\ 
  Vecchia - M=3 & 54.87 & 55.99 & 55.79 & 53.32 \\ 
  Vecchia - M=5 & 60.07 & 59.93 & 60.73 & 57.80 \\ 
  Vecchia - M=10 & 76.52 & 73.12 & 76.97 & 71.98 \\ 
  Vecchia - M=20 & 122.65 & 112.63 & 123.30 & 112.34 \\ 
     Neural Bayes (training) &\multicolumn{4}{c}{$2\times 4320$} \\ 
          Neural Bayes (posterior median) &0.0003 & 0.0003  & 0.0003&0.0003 \\ 
          Neural Bayes (credible interval) &0.0006 & 0.0006  & 0.0006&0.0006 \\ 
   \hline
\end{tabular}
\caption{Model-fitting runtimes (in minutes) for the NLDAS surface skin temperature application using the Mat\'ern covariance with smoothness $\nu = 1.5$.}
\normalsize
\label{tab:walltime_nu2}
\end{table}

\begin{figure}[!t]
\centering
\includegraphics[width=0.9\textwidth]{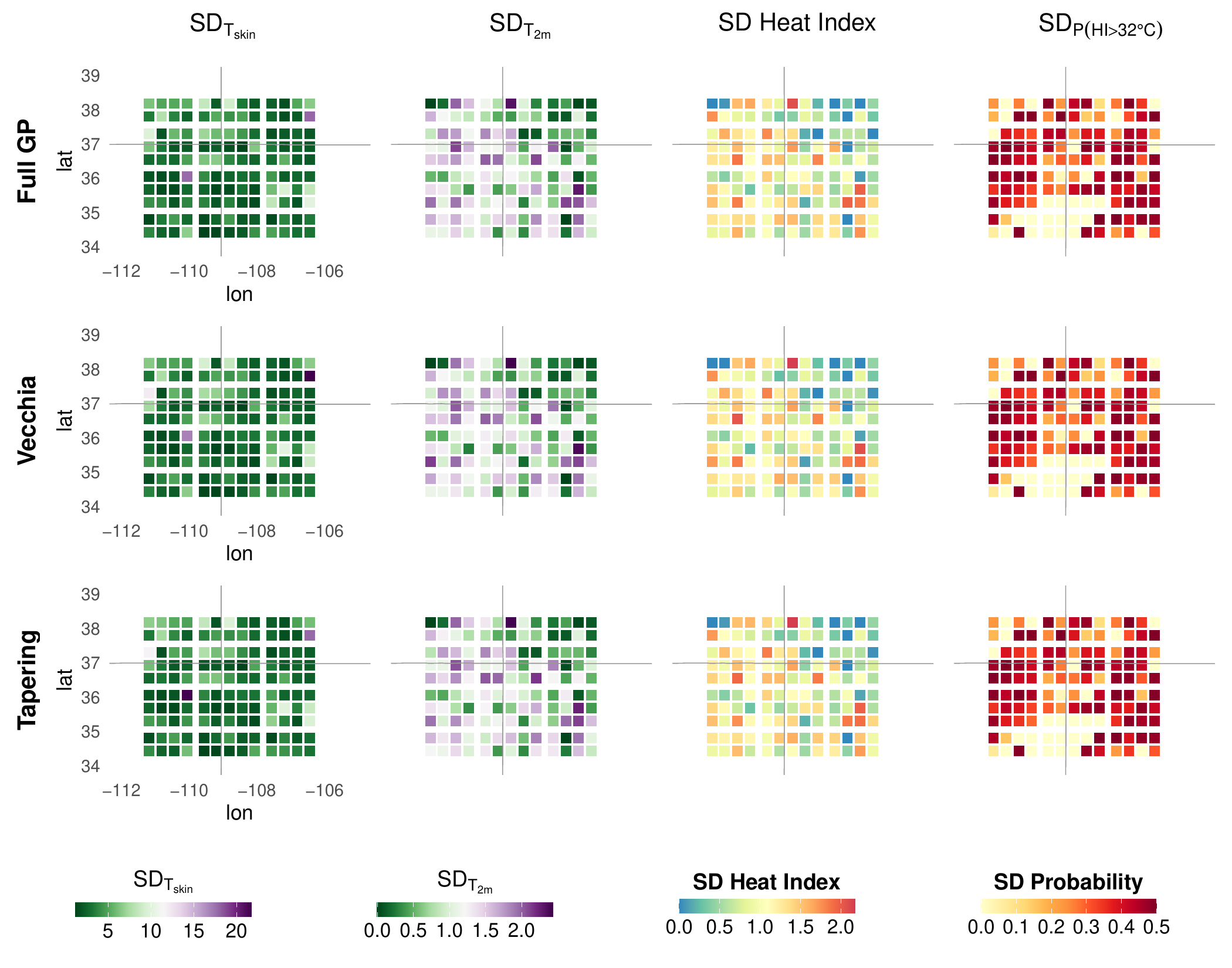}
\caption{Posterior standard deviation fields of surface skin temperature ($T_{skin}$), 2-m air temperature ($T_{2m}$), and heat index (HI), along with exceedance probabilities $\Pr(\mathrm{HI} > 32^\circ\mathrm{C})$ for the year 2000. Rows correspond to inference methods: Full GP {\bf (top)}, Vecchia approximation with $m=5$ {\bf (middle)}, and covariance tapering with $20\%$ sparsity {\bf (bottom)}.}
\label{fig:sd_posterior_summary_2000}
\end{figure}

\end{document}